\tikzset{node distance=15mm, auto}
\let\temp\phi
\let\phi\varphi
\let\varphi\temp
\let\temp\epsilon
\let\epsilon\varepsilon
\let\varepsilon\temp
\let\temp\theta
\let\theta\vartheta
\let\vartheta\temp
\newif\ifdraft
\newcommand{\ie}{\emph{i.e.}}
\newcommand{\eg}{\emph{e.g.}}
\DeclareMathOperator{\Total}{Total}
\DeclareMathOperator{\Inv}{Inv}
\DeclareMathOperator{\Tr}{Tr}
\newcommand{\PInj}{\ensuremath{\mathbf{PInj}}}
\newcommand{\PHom}{\ensuremath{\mathbf{PHom}}}
\newcommand{\Pfn}{\ensuremath{\mathbf{Pfn}}}
\newcommand{\id}{\ensuremath{\text{id}}}
\newcommand{\zero}{\ensuremath{\text{0}}}
\newcommand{\C}{\ensuremath{\mathscr{C}}}
\newcommand{\D}{\ensuremath{\mathscr{D}}}
\newcommand{\I}{\ensuremath{\mathcal{I}}}
\newcommand{\comp}{}
\newcommand{\tot}{\xrightarrow}
\newcommand{\ridm}[1]{\ensuremath{\overline{#1}}}
\newcommand{\dec}[1]{\ensuremath{\langle #1 \rangle}}
\newcommand{\codiag}{\ensuremath{\nabla}}
\renewcommand{\dag}{\ensuremath{\dagger}}
\newcommand{\oxf}[1]{\left\llbracket #1 \right\rrbracket}
\newcommand{\bigstepc}[3]{\ensuremath{#1 \vdash #2 \downarrow #3}}
\newcommand{\bigstepp}[3]{\ensuremath{#1 \vdash #2 \rightsquigarrow #3}}
\newcommand{\stt}{\ensuremath{\mathit{tt}}}
\newcommand{\sff}{\ensuremath{\mathit{ff}}}
\newcommand{\snot}[1]{\ensuremath{\mathbf{not}~#1}}
\newcommand{\sand}[2]{\ensuremath{#1~\mathbf{and}~#2}}
\newcommand{\sskip}{\ensuremath{\mathbf{skip}}}
\newcommand{\sseq}[2]{\ensuremath{#1~\mathtt{;}~#2}}
\newcommand{\sif}[4]{\ensuremath{\mathbf{if}~#1~\mathbf{then}~#2~\mathbf{else}~#3~\mathbf{fi}~#4}}
\newcommand{\sfrom}[3]{\ensuremath{\mathbf{from}~#1~\mathbf{loop}~#2~\mathbf{until}~#3}}
\newcommand{\sloop}[3]{\ensuremath{\mathit{\underline{loop}}[#1,#2,#3]}}
\newlength{\fboxsepold}
\newcommand{\fboxtight}[1]{\setlength{\fboxsep}{3pt}%
\fbox{#1}%
\setlength{\fboxsep}{\fboxsepold}}
\def\eg{{\em e.g.}}
\begin{document}

\title[Categorical foundations for reversible flowchart languages]{A categorical foundation for structured reversible flowchart languages: Soundness and adequacy}

\author[Glück and Kaarsgaard]{Robert Glück}	%required
\address{DIKU, Department of Computer Science, University of Copenhagen, Denmark}	%required
\email{glueck@acm.org}  %optional
%\thanks{thanks 1, optional.}	%optional

\author[]{Robin Kaarsgaard}	%optional
\address{\vskip-7 pt} % DIKU, Department of Computer Science, University of Copenhagen, Denmark}	%optional
\email{robin@di.ku.dk}  %optional
%\thanks{thanks 2, optional.}	%optional

\thanks{The authors acknowledge the support given by \emph{COST Action IC1405
Reversible computation: Extending horizons of computing.} We also thank the
anonymous reviewers of MFPS XXXIII for their thoughtful and detailed comments on a previous version of this paper.}

%% required for running head on odd and even pages, use suitable
%% abbreviations in case of long titles and many authors:

%% mandatory lists of keywords and classifications:
\keywords{Reversible computing, flowchart languages, structured programming, 
  categorical semantics, category theory}
\subjclass{D.3.1, F.3.2}
\titlecomment{This is the extended version of an article presented at MFPS XXXIII~\cite{Glueck2017}, extended with proofs that previously appeared in the appendix, as well as new sections on soundness, adequacy, and full abstraction.}
%%%%%%%%%%%%%%%%%%%%%%%%%%%%%%%%%%%%%%%%%%%%%%%%%%%%%%%%%%%%%%%%%%%%%%%%%%%

%% the abstract has to PRECEED the command \maketitle:
%% be sure not to issue the \maketitle command twice!

\begin{abstract}
  \noindent
  Structured reversible flowchart languages is a class of imperative reversible
  programming languages allowing for a simple diagrammatic representation of
  control flow built from a limited set of control flow structures. This class
  includes the reversible programming language Janus (without recursion), as
  well as more recently developed reversible programming languages such as
  \texttt{R-CORE} and \texttt{R-WHILE}.
  
  In the present paper, we develop a categorical foundation for this class of
  languages based on inverse categories with joins. We generalize the notion of
  extensivity of restriction categories to one that may be accommodated by
  inverse categories, and use the resulting \emph{decisions} to give a
  reversible representation of predicates and assertions. This leads to a
  categorical semantics for structured reversible flowcharts, which we show to
  be computationally sound and adequate, as well as equationally fully abstract 
  with respect to the operational semantics under certain conditions.
\end{abstract}

\maketitle

% !TEX root = Kaarsgaard-Glueck-lmcs.tex

\section{Introduction} % (fold)
\label{sec:introduction} 
Reversible computing is an emerging paradigm that adopts a physical principle
of reality into a \emph{computation model without information erasure}.
Reversible computing extends the standard forward-only mode of computation with
the ability to execute in reverse as easily as forward. Reversible computing is
a necessity in the context of quantum computing and some bio-inspired
computation models. Regardless of the physical motivation, bidirectional
determinism is interesting in its own right. The potential benefits include the
design of innovative reversible architectures~(\eg,
\cite{ThGlAx10,ToAG12,YoAG12}), new programming models and techniques~(\eg,
\cite{YoGl07a,GlKa03,Moge11}), and the enhancement of software with
reversibility (\eg, \cite{CaPF99}).

Today the semantics of reversible programming languages   
are usually formalized using
traditional metalanguages, such as structural operational semantics or denotational semantics based on complete partial orders (\eg, \cite{GlYo16,YoGl07a}). However, these metalanguages 
were introduced for
% are geared towards 
the definition of conventional (irreversible) programming languages. The
fundamental properties of a reversible language, such as the required backward determinism and the invertibility of
object language programs, are not naturally captured by these metalanguages. As a result, these properties are to be shown for each semantic definition. This unsatisfying state of 
% metalanguage 
affairs is not surprising, however, as these properties have been rarely required for more than half a century of mainstream language development. The recent advances in the area of reversible computing have changed this situation.

This paper 
% aims at providing 
provides a new categorical foundation specifically for
formalizing reversible programming languages, in particular the semantics of
reversible structured flowchart languages~\cite{YoAG08b,YoAxGl:16:TCS}, which are the
reversible counterpart of the structured high-level programming languages used today. This
formalization is based on join inverse categories with a developed notion of
\emph{extensivity} for inverse categories, which gives rise to natural
representations of predicates and assertions, and consequently to models of
reversible structured flowcharts. 
% The goal is to 
It provides a framework for
modelling reversible languages, such that the reversible semantic properties of the
object language are naturally ensured by the metalanguage.
% used for its formalization. 

The semantic framework we are going to present in this paper covers the
reversible structured languages regardless of their concrete formation, such as
atomic operations, elementary predicates, and value domains. State of the art reversible
programming languages that are 
concretizations
% instances 
of this computation model include the
well-known imperative language Janus~\cite{YoGl07a} without recursion, the while
languages \texttt{R-WHILE} and \texttt{R-CORE} with dynamic data
structures~\cite{GlYo16,GlYo17}, and the structured reversible language SRL with stacks and arrays~\cite{YoAxGl:16:TCS}. Structured control-flow is also a defining element of reversible object-oriented languages~\cite{HMG:16}.
Further, unstructured reversible flowchart
languages, such as reversible assembly languages with
jumps~\cite{Fran99,AxGY07a} and the unstructured reversible language RL~\cite{YoAxGl:16:TCS}, can be transformed into structured ones
thanks to the structured reversible program theorem~\cite{YoAG08b}.

% Metalanguage:
The main contribution of this paper is to provide a metalanguage formalism based on join inverse categories that is geared to formalize the reversible flowchart languages. The languages formalized in this framework are ensured to have the defining reversible language properties, including backward determinism and local invertibility.
% Inverter:
Another main property of the formalism is that every reversible structured language that is syntactically closed under inversion of its elementary operations is also closed under inversion of reversible control-flow operators. This is particularly useful, as it is sufficient to check this property for elementary constructs to ensure the correctness of the associated program inverter.
% Decisions:
Key to our formalism are decisions, which provide a particularly advantageous reversible representation of predicates and their corresponding assertions. This insight may guide the design of reversible constructs that are often %useful 
quite involved to model in a reversible setting, such as pattern matching.

% Language design:
The results in this paper are illustrated by introducing a family of small reversible flowchart languages $\mathtt{RINT}_k$ for reversible computing with integer data, a reversible counterpart of the family of classic counter languages used for theoretical investigations into irreversible languages. The family introduced here may well serve the same purpose in a reversible context.

%\todo{Add an overview of the paper here.} 

\emph{Overview:}\enspace In Section~\ref{sec:reversible_structured_flowcharts},
we give an introduction to structured reversible flowchart languages, while
Section~\ref{sec:background} describes the restriction and inverse category
theory used as backdrop in later sections. In
Section~\ref{sec:extensivity_inverse_categories}, we warm up by developing a
notion of extensivity for inverse categories, based on extensive restriction
categories and its associated concept of \emph{decisions}. Then, in
Section~\ref{sec:modelling_reversible_flowcharts}, we put it all to use by
showing how decisions may be used to model predicates and ultimately also
reversible flowcharts, and we show that these are computationally sound and adequate with
respect to the operational semantics in
Section~\ref{sec:soundness_and_adequacy}. In
Section~\ref{sec:full_abstraction}, we extend the previous theorems by
giving a sufficient condition for equational full abstraction. In
Section~\ref{sec:applications}, we show how to verify program inversion using
the categorical semantics, develop a small language to exemplify our framework,
and discuss other applications in reversible programming.
Section~\ref{sec:conclusion} offers some concluding remarks.

% section introduction (end)
%%% Local Variables:
%%% mode: latex
%%% TeX-master: "Kaarsgaard-Glueck-lmcs.tex"
%%% End:

% !TEX root = Kaarsgaard-Glueck-lmcs.tex

\section{Reversible structured flowcharts} % (fold)
\label{sec:reversible_structured_flowcharts}
Structured reversible flowcharts~\cite{YoAG08b,YoAxGl:16:TCS} naturally model the control flow behavior of
reversible (imperative) programming languages in a simple diagrammatic
representation, %in the same way
as classical flowcharts do for conventional
languages. A crucial difference is that atomic steps are limited to
\emph{partial injective functions} and they require an additional
\emph{assertion}, an explicit orthogonalizing condition, at join points in the
control flow.

A structured reversible flowchart $F$ is built from four blocks
(Figure~\ref{fig:blocks}): An \emph{atomic step} that performs an elementary
operation on a domain $X$ specified by a partial injective function $a:
X\rightharpoonup X$; a \emph{while loop} over a block $B$ with entry assertion
$p_1 : X \rightarrow \mathit{Bool}$ and exit test $p_2 : X \rightarrow
\mathit{Bool}$; a \emph{selection} of block $B_1$ or $B_2$ with entry test $p_1
: X \rightarrow \mathit{Bool}$ and exit assertion $p_2 : X \rightarrow
\mathit{Bool}$; and a \emph{sequence} of blocks $B_1$ and $B_2$.

\begin{figure}[t]
 \setlength{\unitlength}{1.9mm}
 \small
 \begin{minipage}{.2\textwidth}
  \centering
  \input{Step.pic}
 \end{minipage}
 \begin{minipage}{.25\textwidth}
  \centering
  \input{WhileLoop.pic}
 \end{minipage}
 \begin{minipage}{.25\textwidth}
  \centering
   \input{Selection.pic}
 \end{minipage}
 \begin{minipage}{.25\textwidth}
  \centering
  \input{Sequence.pic}
 \end{minipage}
 \\[2.6ex]
 \begin{minipage}{.2\textwidth}
  \centering
   (a) Step
 \end{minipage}
 \begin{minipage}{.25\textwidth}
 \centering
   (b) While loop
 \end{minipage}
 \begin{minipage}{.25\textwidth}
  \centering
   (c) Selection
 \end{minipage}
 \begin{minipage}{.25\textwidth}
  \centering
   (d) Sequence
 \end{minipage}
 \caption{Structured reversible flowcharts.}
 \label{fig:blocks}
\end{figure}

A structured reversible flowchart $F$ consists of one main block. Blocks have
unique entry and exit points, and can be nested any number of times to form
more complex flowcharts. The interpretation of $F$ consists of a given domain
$X$ (typically of stores or states, which we shall denote by $\sigma$) and a
finite set of partial injective functions $a$ and predicates $p: X \rightarrow
\mathit{Bool}$. Computation starts at the entry point of $F$ in an initial
$x_0$ (the input), proceeds sequentially through the edges of $F$, and ends at
the exit point of $F$ in a final $x_n$ (the output), if $F$ is defined on the
given input. Though the specific set of predicates depend on the flowchart
language, they are often (as we will do here) assumed to be closed under
Boolean operators, in particular conjunction and negation. The operational
semantics for these, shown in Figure~\ref{fig:semantics_pred}, are the same as
in the irreversible case (see, \eg, \cite{Winskel1993}): We use the judgment
form $\bigstepp{\sigma}{p}{b}$ here to mean that the predicate $p$ evaluated on
the state $\sigma$ results in the Boolean value $b$.
\begin{figure}
\fboxtight{$\bigstepp{\sigma}{p}{b}$}
\begin{equation*}
  \begin{gathered}
    \frac{}{\bigstepp{\sigma}{\stt}{\stt}} \\[0.5\baselineskip]
    \frac{\bigstepp{\sigma}{p}{\stt}}{\bigstepp{\sigma}{\snot{p}}{\sff}} 
    \\[0.5\baselineskip]
    \frac{\bigstepp{\sigma}{p}{\stt} \qquad 
    \bigstepp{\sigma}{q}{\stt}}{\bigstepp{\sigma}{\sand{p}{q}}{\stt}} 
    \\[0.5\baselineskip]
    \frac{\bigstepp{\sigma}{p}{\sff} \qquad 
    \bigstepp{\sigma}{q}{\stt}}{\bigstepp{\sigma}{\sand{p}{q}}{\sff}} 
  \end{gathered}
  \qquad \enspace
  \begin{gathered}
    \frac{}{\bigstepp{\sigma}{\sff}{\sff}} \\[0.5\baselineskip]
    \frac{\bigstepp{\sigma}{p}{\sff}}{\bigstepp{\sigma}{\snot{p}}{\stt}} 
    \\[0.5\baselineskip]
    \frac{\bigstepp{\sigma}{p}{\stt} \qquad 
    \bigstepp{\sigma}{q}{\sff}}{\bigstepp{\sigma}{\sand{p}{q}}{\sff}}
    \\[0.5\baselineskip]
    \frac{\bigstepp{\sigma}{p}{\sff} \qquad 
    \bigstepp{\sigma}{q}{\sff}}{\bigstepp{\sigma}{\sand{p}{q}}{\sff}} 
  \end{gathered}
\end{equation*}
\caption{Operational semantics for Boolean predicates.}
\label{fig:semantics_pred}
\end{figure}
%%% Local Variables:
%%% mode: latex
%%% TeX-master: "Kaarsgaard-Glueck-lmcs.tex"
%%% End:

The assertion $p_1$ in a reversible while loop (marked by a circle, as introduced in \cite{YoGl07a}) is a new
flowchart operator: the predicate $p_1$ must be \emph{true} when the control
flow reaches the assertion along the $\mathbf{t}$-edge, and \emph{false} when
%the control flow
it reaches the assertion along the $\mathbf{f}$-edge; otherwise,
the loop is undefined. The test $p_2$ (marked by a diamond) has the usual
semantics. This means that $B$ in a loop is repeated as long as $p_1$ and
$p_2$ are \emph{false}.

The selection has an assertion $p_2$, which must be \emph{true} when the
control flow reaches the assertion from $B_1$, and \emph{false} when the
control flow reaches the assertion from $B_2$; otherwise, the selection is
undefined. As usual, the test $p_1$ selects $B_1$ or $B_2$. The assertion makes
the selection reversible.

Despite their simplicity, reversible structured flowcharts are \emph{reversibly
universal}~\cite{AxGl11b}, which means that they are computationally as
powerful as any reversible programming language can be. Given a suitable domain
$X$ for finite sets of atomic operations and predicates, there exists, for
every injective computable function $f : X \rightarrow Y$, a reversible
flowchart $F$ that computes $f$.

Reversible structured flowcharts (Figure~\ref{fig:blocks}) have a straightforward representation as program texts defined by the grammar 
$$
\begin{array}[t]{lclclclcl} 
B &::=& a 
& | & \sfrom{p}{B}{p} 
& | & \sif{p}{B}{B}{p}
& | & \sseq{B}{B} 
\end{array}
$$
It is often assumed, as we will do here, that the set of atomic steps contains
a step $\sskip$ that acts as the identity. The operational semantics for these
flowchart structures (or simply \emph{commands}) are shown in
Figure~\ref{fig:semantics_cmd}. Here, the judgment form
$\bigstepc{\sigma}{c}{\sigma'}$ is used and taken to mean that the command $c$
converges in the state $\sigma$ resulting in a new state $\sigma'$. Note the
use of the meta-command $\sloop{p}{c}{q}$: This does not represent any
reversible flowchart structure (and is thus not a syntactically valid command),
but is rather a piece of internal syntax used to give meaning to loops. This is
required since the role of the entry assertion in a reversible while loop
changes after the first iteration: When entering the loop from the outside
(\ie, before any iterations have occured), the entry assertion must be true,
but when entering from the inside (\ie, one or more iterations), the entry
assertion must be false.

% !TEX root = Kaarsgaard-Glueck-lmcs.tex

\begin{figure}
\fboxtight{$\bigstepc{\sigma}{c}{\sigma'}$}
\begin{gather*}
  \begin{gathered}
    \frac{\phantom{\vdash\downarrow}}{\bigstepc{\sigma}{\sskip}{\sigma}} 
    \\[0.5\baselineskip]
    \frac{
      \bigstepp{\sigma}{p}{\stt} \quad\enspace
      \bigstepc{\sigma}{c_1}{\sigma'} \quad\enspace
      \bigstepp{\sigma'}{q}{\stt}
    }{
      \bigstepc{\sigma}{\sif{p}{c_1}{c_2}{q}}{\sigma'}
    }
    \\[0.5\baselineskip]
    \frac{
      \bigstepp{\sigma}{p}{\stt} \quad\enspace
      \bigstepp{\sigma}{q}{\stt}
    }{
      \bigstepc{\sigma}{\sfrom{p}{c}{q}}{\sigma}
    }
  \end{gathered}
  \qquad \enspace
  \begin{gathered}
    \frac{
      \bigstepc{\sigma}{c_1}{\sigma'} \qquad 
      \bigstepc{\sigma'}{c_2}{\sigma''}
    }{
      \bigstepc{\sigma}{\sseq{c_1}{c_2}}{\sigma''}
    } \\[0.5\baselineskip]
    \frac{
      \bigstepp{\sigma}{p}{\sff} \quad\enspace
      \bigstepc{\sigma}{c_2}{\sigma'} \quad\enspace
      \bigstepp{\sigma'}{q}{\sff}
    }{
      \bigstepc{\sigma}{\sif{p}{c_1}{c_2}{q}}{\sigma'}
    }
    \\[0.5\baselineskip]
    \frac{
      \bigstepp{\sigma}{p}{\sff} \quad\enspace
      \bigstepp{\sigma}{q}{\stt}
    }{
      \bigstepc{\sigma}{\sloop{p}{c}{q}}{\sigma}
    }
  \end{gathered}
  \\[0.5\baselineskip]
  \frac{
    \bigstepp{\sigma}{p}{\sff} \quad\enspace
    \bigstepp{\sigma}{q}{\sff} \quad\enspace
    \bigstepc{\sigma}{c}{\sigma'} \quad\enspace
    \bigstepc{\sigma'}{\sloop{p}{c}{q}}{\sigma''}
  }{
    \bigstepc{\sigma}{\sloop{p}{c}{q}}{\sigma''}
  }
  \\[0.5\baselineskip]
  \frac{
    \bigstepp{\sigma}{p}{\stt} \quad\enspace
    \bigstepp{\sigma}{q}{\sff} \quad\enspace
    \bigstepc{\sigma}{c}{\sigma'} \quad\enspace
    \bigstepc{\sigma'}{\sloop{p}{c}{q}}{\sigma''}
  }{
    \bigstepc{\sigma}{\sfrom{p}{c}{q}}{\sigma''}
  }
\end{gather*}
\caption{Operational semantics for the reversible flowchart structures.}
\label{fig:semantics_cmd}
\end{figure}
%%% Local Variables:
%%% mode: latex
%%% TeX-master: "Kaarsgaard-Glueck-lmcs.tex"
%%% End:

The reversible structured flowcharts defined above corresponds to the reversible language \texttt{R-WHILE}~\cite{GlYo16}, but their value domain, atomic functions and predicates are unspecified. As a minimum, a reversible flowchart needs blocks (a), (b), and (d) from Figure~\ref{fig:blocks}, because selection can be simulated by combining while loops that conditionally skip the body block or execute it once. \texttt{R-CORE}~\cite{GlYo17} is an example of such a minimal language.
% section reversible_structured_flowcharts (end)
%%% Local Variables:
%%% mode: latex
%%% TeX-master: "Kaarsgaard-Glueck-lmcs.tex"
%%% End:

% !TEX root = Kaarsgaard-Glueck-lmcs.tex

\section{Restriction and inverse categories} % (fold)
\label{sec:background}
The following section contains the background on restriction and inverse
category theory necessary for our later developments. Unless otherwise
specified, the definitions and results presented in this section can be found
in introductory texts on the subject (\eg,
\cite{Giles2014,Guo2012,Cockett2002,Cockett2003,Cockett2007}).

Restriction categories~\cite{Cockett2002,Cockett2003,Cockett2007} axiomatize
categories of partial maps. This is done by assigning to each morphism $f$ a
\emph{restriction idempotent} $\ridm{f}$, which we think of as a partial
identity defined precisely where $f$ is. Formally, restriction categories are
defined as follows.
\begin{defi}\label{def:rest_cat}
  A \emph{restriction category} is a category \C{} equipped with a combinator
  mapping each morphism $A \tot{f} B$ to a morphism $A \tot{\ridm{f}} A$
  satisfying
  \begin{multicols}{2}
    \begin{enumerate}[(i),ref={\roman*}]
      \item $f \comp \ridm{f} = f$,
      \item $\ridm{g} \comp \ridm{f} = \ridm{f} \comp \ridm{g}$,
      \item $\ridm{f \comp \ridm{g}} = \ridm{f} \comp \ridm{g}$, and
      \item $\ridm{g} \comp f = f \comp \ridm{g \comp f}$
    \end{enumerate}
  \end{multicols}
  \noindent
  for all suitable $g$.
\end{defi}
As an example, the category \Pfn{} of sets and partial functions is a
restriction category, with $\ridm{f}(x) = x$ if $f$ is defined at $x$, and
undefined otherwise. Note that being a restriction category is a structure, not
a property; a category may be a restriction category in several different ways
(\eg, assigning $\ridm{f} = \id$ for each morphism $f$ gives a trivial
restriction structure to any category).

In restriction categories, we say that a morphism $A \tot{f} B$ is \emph{total}
if $\ridm{f} = \id_A$, and a \emph{partial isomorphism} if there exists a
(necessarily unique) \emph{partial inverse} $B \tot{f^\dag} A$ such that
$f^\dag \comp f = \ridm{f}$ and $f \comp f^\dag = \ridm{f^\dag}$. Isomorphisms
are then simply the total partial isomorphisms with total partial inverses. An
inverse category can then be defined as a special kind of restriction
category\footnote{This is a rather modern definition due to \cite{Cockett2002}.
Originally, inverse categories were defined as the categorical extensions of
inverse semigroups; see \cite{Kastl1979}.}.
\begin{defi}
  An \emph{inverse category} is a restriction category where each morphism is a 
  partial isomorphism.
\end{defi}
Every restriction category \C{} gives rise to an inverse category $\Inv(\C)$ (the \emph{cofree} inverse category of \C{}, see \cite{Kaarsgaard2017}),
which has as objects all objects of \C, and as morphisms all of the partial
isomorphisms of \C. As such, since partial isomorphisms in \Pfn{} are partial
injective functions, a canonical example of an inverse category is the category
$\Inv(\Pfn) \cong \PInj$ of sets and partial injective functions.

Since each morphism in an inverse category has a unique partial inverse, as
also suggested by our notation this makes inverse categories canonically
\emph{dagger categories}~\cite{Selinger2007}, in the sense that they come
equipped with a contravariant endofunctor $(-)^\dag$ satisfying
$f=f^{\dag\dag}$ and $\id_A^\dag = \id_A$ for each morphism $f$ and object $A$.

Given two restriction categories $\C$ and $\D$, the well-behaved functors
between them are \emph{restriction functors}, \ie, functors $F$ satisfying
$F(\ridm{f}) = \ridm{F(f)}$. Analogous to how regular semigroup homomorphisms
preserve partial inverses in inverse semigroups, when $\C$ and $\D$ are inverse
categories, all functors between them are restriction functors; specifically
they preserve the canonical dagger, \ie, $F(f^\dag) = F(f)^\dag$.

Before we move on, we show some basic facts about restriction idempotents in
restriction (and inverse) categories that will become useful later (see also, \eg, \cite{Cockett2002,Giles2014}).
\begin{lem}\label{lem:basic_ridm}
  In any restriction category, it is the case that
  \begin{enumerate}[(i), ref={\roman*}]
    \item $\ridm{\id} = \id$, \label{br1}
    \item $\ridm{\ridm{f}} = \ridm{f}$, \label{br2}
    \item $\ridm{g \comp f} \comp~ \ridm{f} = \ridm{g \comp f}$, \label{br3}
    \item $\ridm{\ridm{g} \comp f} = \ridm{g \comp f}$, \label{br4}
    \item if $g$ is total then $\ridm{g \comp f} = \ridm{f}$, and \label{br5}
    \item if $g \comp f$ is total then so is $f$. \label{br6}
  \end{enumerate}
  for all morphisms $X \tot{f} Y$ and $Y \tot{g} Z$.
\end{lem}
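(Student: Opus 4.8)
The plan is to derive the six facts in the order listed, using earlier parts as stepping stones for the later ones, so that each reduces to a short algebraic manipulation of the four restriction axioms of Definition~\ref{def:rest_cat}. Parts~(\ref{br1}) and~(\ref{br2}) come essentially for free. For~(\ref{br1}) I would instantiate restriction axiom~(i) at $f = \id$ to get $\id \comp \ridm{\id} = \id$, whence $\ridm{\id} = \id$. For~(\ref{br2}) I would apply restriction axiom~(iii) with its outer map taken to be an identity: $\ridm{\ridm{f}} = \ridm{\id \comp \ridm{f}} = \ridm{\id} \comp \ridm{f} = \ridm{f}$, invoking~(\ref{br1}) in the last step.

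The real work is~(\ref{br3}), which says the restriction of a composite is absorbed by the restriction of its first factor; this is the lemma the remaining parts lean on. My approach is to observe that restriction axiom~(i) gives $g \comp f = g \comp (f \comp \ridm{f}) = (g \comp f) \comp \ridm{f}$, so that taking restrictions and applying restriction axiom~(iii) with outer map $g \comp f$ yields $\ridm{g \comp f} = \ridm{(g \comp f) \comp \ridm{f}} = \ridm{g \comp f} \comp \ridm{f}$, which is exactly~(\ref{br3}). From here~(\ref{br4}) follows by taking the restriction of both sides of restriction axiom~(iv): the left-hand side becomes $\ridm{\ridm{g} \comp f}$, while the right-hand side $\ridm{f \comp \ridm{g \comp f}}$ simplifies via restriction axiom~(iii) to $\ridm{f} \comp \ridm{g \comp f}$, and then commutativity (restriction axiom~(ii)) together with~(\ref{br3}) collapses this to $\ridm{g \comp f}$.

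Finally,~(\ref{br5}) and~(\ref{br6}) are immediate corollaries obtained by feeding a totality hypothesis into the facts just proved. If $g$ is total, meaning $\ridm{g} = \id$, then~(\ref{br4}) gives $\ridm{g \comp f} = \ridm{\ridm{g} \comp f} = \ridm{\id \comp f} = \ridm{f}$. Conversely, if $g \comp f$ is total, then specializing~(\ref{br3}) reads $\id \comp \ridm{f} = \id$, forcing $\ridm{f} = \id$, \ie\ $f$ is total.

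I expect no serious obstacle here, since every step is a one-line rewrite; the only points demanding care are the dependency bookkeeping---(\ref{br3}) must precede~(\ref{br4}), which in turn feeds~(\ref{br5})---and the mild cleverness of rewriting $g \comp f$ as $(g \comp f) \comp \ridm{f}$ before invoking restriction axiom~(iii) in the proof of~(\ref{br3}). Everything else is forced once that rewriting is in place.
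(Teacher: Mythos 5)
Your proposal is correct and follows essentially the same route as the paper's proof: identical uses of axiom (i) for~(\ref{br1}), axiom (iii) plus~(\ref{br1}) for~(\ref{br2}), the rewriting of $g \comp f$ as $(g \comp f) \comp \ridm{f}$ together with axiom (iii) for~(\ref{br3}), taking restrictions across axiom (iv) and then applying axioms (iii), (ii) and~(\ref{br3}) for~(\ref{br4}), and the same specializations for~(\ref{br5}) and~(\ref{br6}). The only difference is cosmetic (you derive~(\ref{br3}) by taking restrictions of an equation, whereas the paper computes $\ridm{g \comp f} \comp \ridm{f} = \ridm{g \comp f \comp \ridm{f}} = \ridm{g \comp f}$ directly), so there is nothing to fix.
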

\begin{proof}
  For \eqref{br1} $\id = \id\comp~\ridm{\id} = \ridm{\id}$, and \eqref{br2} by
  $\ridm{\ridm{f}} = \ridm{\id \comp \ridm{f}} = \ridm{\id} \comp~ \ridm{f}
  = \id \comp \ridm{f} = \ridm{f}$ using \eqref{br1}. For \eqref{br3} we have
  $\ridm{g \comp f}\comp~\ridm{f} = \ridm{g \comp f\comp\ridm{f}} = \ridm{g
  \comp f}$, while \eqref{br4} follows by $\ridm{\ridm{g} \comp f} = 
  \ridm{f\comp \ridm{g \comp f}} = \ridm{f}\comp~\ridm{g \comp f} = \ridm{g 
  \comp f}\comp~\ridm{f} = \ridm{g \comp f}$ using \eqref{br3}. A special case
  of \eqref{br4} is \eqref{br5} since $g$ total means that $\ridm{g} =
  \id$, and so $\ridm{g \comp f} = \ridm{\ridm{g} \comp f} = \ridm{\id \comp f}
  = \ridm{f}$. Finally, \eqref{br6} follows by $g \comp f$ total means that 
  $\ridm{g \comp f} = \id$, so by this and \eqref{br3}, $\id = \ridm{g \comp
  f} = \ridm{g \comp f} ~\comp \ridm{f} = \id \comp \ridm{f} = \ridm{f}$, so
  $f$ is total as well.
\end{proof}

\subsection{Partial order enrichment and joins} % (fold)
\label{sub:partial_order_enrichment_and_joins}
A consequence of how restriction (and inverse) categories are defined is that
hom sets $\C(A,B)$ may be equipped with a partial order given by $f \le g$ iff
$g \comp \ridm{f} = f$ (this extends to an enrichment in the category of
partial orders and monotone functions). Intuitively, this states that $f$ is
below $g$ iff $g$ behaves exactly like $f$ when restricted to the points where
$f$ is defined. Notice that any morphism $e$ below an identity $\id_X$ is a
restriction idempotent under this definition, since $e \le \id_X$ iff
$\id_X\comp\ridm{e} = e$, \ie, iff $\ridm{e} = e$.

A sufficient condition for each $\C(A,B)$ to have a least
element is that $\C$ has a \emph{restriction zero}; a zero object $0$
in the usual sense which additionally satisfies $A \tot{0_{A,A}} A = A
\tot{\ridm{0_{A,A}}} A$ for each endo-zero map $0_{A,A}$.
One may now wonder when $\C(A,B)$ has joins as a partial order. Unfortunately,
$\C(A,B)$ has joins of all morphisms only in very degenerate cases. However, if
instead of considering arbitrary joins we consider joins of maps that are
somehow compatible, this becomes much more viable.
\begin{defi}
  In a restriction category, say that morphisms $X \tot{f} Y$ and $X \tot{g} Y$ 
  are
  \emph{disjoint} iff $f \comp \ridm{g} = 0$; and \emph{compatible} iff $f
  \comp \ridm{g} = g \comp \ridm{f}$.
\end{defi}
It can be shown that disjointness implies compatibility, as disjointness is
expectedly symmetric. Further, we may extend this to say that a set of parallel
morphisms is disjoint iff each pair of morphisms is disjoint, and likewise for
compatibility. This gives suitable notions of \emph{join restriction
categories}.
\begin{defi}\label{def:join_restriction_cat}
\nocite{Guo2012}
A restriction category $\mathscr{C}$ has compatible (disjoint) joins if it has a restriction zero, and satisfies 
that for each compatible (disjoint) subset $S$ of any hom
set $\C(A,B)$, there exists a morphism $\bigvee_{s \in S} s$ such that
\begin{enumerate}[(i), ref={\roman*}]
  \item $s \le \bigvee_{s \in S} s$ for all $s \in S$, and $s \le t$
  for all $s \in S$ implies $\bigvee_{s \in S} s \le t$; 
  \label{def:join_rest_sup}
  \item $\ridm{\bigvee_{s \in S} s} = \bigvee_{s \in S} \ridm{s}$;
  \label{def:join_rest_cont}
  \item $f \comp \left(\bigvee_{s \in S} s \right) = 
  \bigvee_{s \in S}(f \comp s)$ for all $f : B \to X$; and
  \item $\left(\bigvee_{s \in S} s \right) \comp g = \bigvee_{s \in 
  S}(s \comp g)$ for all $g : Y \to A$.
  \label{def:join_rest_compr}
\end{enumerate}
\end{defi}
For inverse categories, the situation is a bit more tricky, as the join of two compatible partial isomorphisms may not be a partial isomorphism. To ensure this, we need stronger relations:
\begin{defi}\label{def:join_inv_cat}
  In an inverse category, say that parallel maps $f$ and $g$ are
  \emph{disjoint} iff $f \comp \ridm{g} = 0$ and $f^\dag \comp \ridm{g^\dag} =
  0$; and \emph{compatible} iff $f \comp \ridm{g} = g \comp \ridm{f}$ and
  $f^\dag \comp \ridm{g^\dag} = g^\dag \comp \ridm{f^\dag}$.
\end{defi}
We may now extend this to notions of disjoint sets and compatible sets of
morphisms in inverse categories as before. This finally gives notions of \emph{join inverse categories}:
\begin{defi}
  An inverse category \C{} has compatible (disjoint) joins if it
  has a restriction zero and satisfies that for all compatible
  (disjoint) subsets $S$ of all hom sets $\C(A,B)$, there exists a morphism
  $\vee_{s \in S} s$ satisfying \eqref{def:join_rest_sup} --
  \eqref{def:join_rest_compr} of Definition~\ref{def:join_restriction_cat}.
\end{defi}
An example of a join inverse category is \PInj{} of sets and partial injective functions. Here, $\bigvee_{f \in S} f$ for a set of compatible partial injective functions $S$ is constructed as the function with graph the union of the graphs of all $f \in S$; or equivalently as the function
\begin{equation*}
  \left(\bigvee_{f \in S} f\right)(x) = \left\{ \begin{array}{ll}
    g(x) & \text{if there exists $g \in S$ such that $g(x)$ is defined} \\
    \text{undefined} \quad & \text{otherwise}
  \end{array}  \right.
\end{equation*}
Similarly, the category \PHom{} of topological spaces and partial
homeomorphisms (\ie, partial injective functions which are open and continuous
in their domain of definition) is an inverse category with joins constructed in
the same way. Further, for any join restriction category \C, the cofree inverse
category $\Inv(\C)$ of \C{} is a join inverse category as well~\cite[Lemma
3.1.27]{Guo2012}, and joins on both restriction and inverse categories can be
freely adjoined~\cite[Sec.~3.1]{Guo2012}.
A functor $F$ between restriction (or inverse) categories with joins is said to
be join-preserving when $F(\bigvee_{s \in S} s) = \bigvee_{s \in S} F(s)$.
% subsection partial_order_enrichment_and_joins (end)

\subsection{Restriction coproducts, extensivity, and related concepts} % (fold)
\label{sub:extensivity_of_restriction_categories}
While a restriction category may very well have coproducts, these are
ultimately only well-behaved when the coproduct injections $A \tot{\kappa_1}
A+B$ and $B \tot{\kappa_2} A+B$ are total; if this is the case, we say that the
restriction category has \emph{restriction coproducts}. If a restriction
category has all finite restriction coproducts, it also has a restriction zero
serving as unit. Note that in restriction categories with restriction
coproducts, the coproduct injections $\kappa_1$ (respectively $\kappa_2$) are
partial isomorphisms with partial inverse $\kappa_1^\dag = A+B \tot{[\id_A,
0_{B,A}]} A$ (respectively $\kappa_2^\dag = A+B \tot{[0_{A,B}, \id_B]} A$).

In \cite{Cockett2007}, it is shown that the existence of certain maps, called
\emph{decisions}, in a restriction category \C{} with restriction coproducts
leads to the subcategory $\Total(\C)$ of total maps being extensive (in the
sense of, \eg, \cite{Carboni1993}). This leads to the definition of an
\emph{extensive restriction category}\footnote{The name is admittedly mildly confusing, as an extensive restriction category is not extensive in the usual sense. Nevertheless, we stay with the established terminology.}.
\begin{defi}
  A restriction category is said to be \emph{extensive} (as a restriction
  category) if it has restriction coproducts and a restriction zero, and for
  each map $A \tot{f} B+C$ there is a unique \emph{decision} $A \tot{\dec{f}}
  A+A$ satisfying
  \begin{multicols}{2}
  \begin{description}
    \item[\enspace(D.1)] $\codiag \comp \dec{f} = \ridm{f}$ and
    \item[\enspace(D.2)] $(f + f) \comp \dec{f} = (\kappa_1 + \kappa_2) \comp 
    f$.
  \end{description}
  \end{multicols}
\end{defi}
In the above, $\codiag$ denotes the codiagonal $[\id,\id]$. A consequence of
these axioms is that each decision is a partial isomorphism; one can show that
$\dec{f}$ must be partial inverse to $[\ridm{\kappa_1^\dag \comp
f},\ridm{\kappa_2^\dag \comp f}]$ (see \cite{Cockett2007}). Further, when a
restriction category with restriction coproducts has finite joins, it is also
extensive with $\dec{f} = (\kappa_1 \comp \ridm{\kappa_1^\dag \comp f}) \vee
(\kappa_2 \comp \ridm{\kappa_2^\dag \comp f})$. As an example, \Pfn{} is extensive with $A \tot{\dec{f}} A+A$ for $A \tot{f} B+C$ given by
\begin{equation*}
  \dec{f}(x) = \left\{
  \begin{array}{ll}
    \kappa_1(x) & \text{if $f(x) = \kappa_1(y)$ for some $y \in B$} \\
    \kappa_2(x) & \text{if $f(x) = \kappa_2(z)$ for some $z \in C$} \\
    \text{undefined} & \text{if $f(x)$ is undefined}
  \end{array}\right. .
\end{equation*}

While inverse categories only have coproducts (much less restriction
coproducts) in very degenerate cases (see \cite{Giles2014}), they may very well
be equipped with a more general sum-like symmetric monoidal tensor, a
disjointness tensor.

\begin{defi}
  A \emph{disjointness tensor} on a restriction category is a symmetric 
  monoidal restriction functor $- \oplus -$ satisfying that its unit is the 
  restriction zero, and that the canonical maps
  \begin{equation*}
    \amalg_1 = A \tot{\rho^{-1}} A \oplus 0 \tot{\id \oplus \zero} A 
    \oplus B \qquad\qquad
    \amalg_2 = B \tot{\lambda^{-1}} 0 \oplus B \tot{\zero \oplus \id} A 
    \oplus B
  \end{equation*}
  are jointly epic, where $\rho$ respectively $\lambda$ is the left 
  respectively right unitor of the monoidal functor $- \oplus -$.
\end{defi}

It can be straightforwardly shown that any restriction coproduct gives rise to a disjointness tensor. A useful interaction between compatible joins and a join-preserving disjointness tensor in inverse categories was shown in \cite{Axelsen2016,Kaarsgaard2017}, namely that it leads to a \dag-trace (in the sense of \cite{Joyal1996,Selinger2011}):
\begin{prop}\label{prop:trace}
  Let \C{} be an inverse category with (at least countable) compatible joins 
  and a join-preserving disjointness tensor. Then \C{} has a trace operator
  given by
  $$\Tr_{A,B}^U(f) = f_{11} \vee \bigvee_{n \in \omega} f_{21} \comp f_{22}^n 
  \comp f_{12}$$
  satisfying $\Tr_{A,B}^U(f)^\dag = \Tr_{A,B}^U(f^\dag)$, where $f_{ij} = 
  \amalg_j^\dag \comp f \comp \amalg_i$.
\end{prop}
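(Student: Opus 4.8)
The plan is to treat $f : A \oplus U \to B \oplus U$ through a matrix calculus built from the tensor injections $\amalg_1, \amalg_2$, verify the standard trace axioms (in the sense of \cite{Joyal1996,Selinger2011}) by manipulating these matrices, and finally read off the dagger identity. First I would record the \emph{orthonormality} relations $\amalg_i^\dag \comp \amalg_j = \id$ when $i = j$ and $\amalg_i^\dag \comp \amalg_j = 0$ when $i \neq j$, which hold because the $\amalg_i$ are total partial isomorphisms into a tensor whose unit is the restriction zero. Using joint epicness of $\amalg_1, \amalg_2$ together with distributivity of composition over joins (Definition~\ref{def:join_restriction_cat}), I would then establish the resolution of identity $\id = (\amalg_1 \comp \amalg_1^\dag) \vee (\amalg_2 \comp \amalg_2^\dag)$: the right-hand side $e$ satisfies $e \comp \amalg_i = \amalg_i$ for $i=1,2$, hence $e = \id$. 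Inserting this resolution into an arbitrary composite yields the ``matrix multiplication with joins'' identity $(g \comp f)_{ij} = \bigvee_k g_{kj} \comp f_{ik}$, which is the workhorse for everything that follows.

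The main obstacle is well-definedness: I must show the countable family $\{f_{11}\} \cup \{\, f_{21} \comp f_{22}^n \comp f_{12} \mid n \in \omega\,\}$ is compatible, so that its join exists (this is why at least countable joins are assumed). I would argue its members are pairwise \emph{disjoint} in the sense of Definition~\ref{def:join_inv_cat}. On the domain side, the idempotents $\ridm{f_{11}}$ and $\ridm{f_{12}}$ are disjoint, since they split $\ridm{f \comp \amalg_1}$ along the disjoint output idempotents $\amalg_1 \comp \amalg_1^\dag$ and $\amalg_2 \comp \amalg_2^\dag$, and likewise $\ridm{f_{21}}$ and $\ridm{f_{22}}$ are disjoint on $U$. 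Because $\ridm{g \comp f} \le \ridm{f}$ by Lemma~\ref{lem:basic_ridm}\eqref{br3}, every looping term has domain below $\ridm{f_{12}}$ and is thus disjoint from $f_{11}$, while two looping terms of different loop count share a common prefix $f_{22}^m \comp f_{12}$ and then diverge into the disjoint ``exit'' branch $\ridm{f_{21}}$ versus ``continue'' branch $\ridm{f_{22}}$, giving $s_m \comp \ridm{s_n} = 0$. The range-side condition $s_m^\dag \comp \ridm{s_n^\dag} = 0$ follows by applying the identical domain argument to $f^\dag$, whose matrix entries are the daggers of those of $f$. Disjointness implies compatibility, so the join is defined.

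With the matrix calculus in hand, I would verify each trace axiom. Yanking is a one-line computation: for the symmetry $\sigma_{U,U}$ one has $f_{11} = f_{22} = 0$ and $f_{12} = f_{21} = \id$, so the trace collapses to $\id$. Left and right naturality follow immediately from distributivity of composition over joins, and superposing from the join-preservation of $\oplus$. The two genuinely combinatorial cases are dinaturality, where I would reindex $\bigvee_{n}$ to slide a map $g : U \to U'$ across the feedback, and vanishing~II (trace over $U \oplus V$ equals iterated trace), which is a Fubini-style rearrangement of a double join; both are justified by the continuity axioms \eqref{def:join_rest_cont}--\eqref{def:join_rest_compr} of Definition~\ref{def:join_restriction_cat} and by $\oplus$ preserving joins. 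Vanishing~I is immediate, as the traced sum is empty when $U = 0$.

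Finally, for the dagger identity I would note that $(-)^\dag$ is an order-isomorphism on each hom set (monotone and involutive) and that the compatibility relation of Definition~\ref{def:join_inv_cat} is symmetric under $(-)^\dag$; hence $(-)^\dag$ preserves compatible joins, $\left(\bigvee_s s\right)^\dag = \bigvee_s s^\dag$. Since the $\amalg_i$ are total partial isomorphisms, the matrix entries transpose, $(f^\dag)_{ij} = (f_{ji})^\dag$. Combining these with contravariance of $(-)^\dag$ gives
\[
  \Tr_{A,B}^U(f^\dag) = f_{11}^\dag \vee \bigvee_{n \in \omega} f_{12}^\dag \comp (f_{22}^\dag)^n \comp f_{21}^\dag = \left( f_{11} \vee \bigvee_{n \in \omega} f_{21} \comp f_{22}^n \comp f_{12} \right)^{\!\dag} = \Tr_{A,B}^U(f)^\dag ,
\]
which is the desired equation. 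I expect the compatibility/well-definedness step to be the crux; once the join is known to exist, the axioms reduce to formal matrix computations powered by join-continuity of composition and join-preservation of $\oplus$.
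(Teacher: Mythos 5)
Your proof is correct, but the comparison here is slightly unusual: the paper itself never proves Proposition~\ref{prop:trace} — it imports it wholesale from \cite{Axelsen2016,Kaarsgaard2017} — so the relevant benchmark is the proof given in those cited works. There, the trace axioms are not verified by hand; instead, one shows that a join inverse category with a join-preserving disjointness tensor is a unique decomposition category in the sense of Haghverdi (the quasi-injections $\amalg_i$ furnishing exactly your matrix calculus and orthogonality relations of Lemma~\ref{lem:basic_disj}), and then appeals to the execution-formula theorem for such categories, checking the dagger equation on top. Your proposal unpacks that machinery into a direct verification, and it isolates the same mathematical core: pairwise disjointness of $f_{11}$ and the terms $f_{21} \comp f_{22}^n \comp f_{12}$ — your conjugation argument, using $\ridm{g \comp h} = h^\dag \comp \ridm{g} \comp h$ for partial isomorphisms $h$, the orthogonality $(\amalg_1 \comp \amalg_1^\dag) \comp (\amalg_2 \comp \amalg_2^\dag) = 0$, and commutation of restriction idempotents, is essentially how the cited proof establishes it, and the paper re-quotes precisely this disjointness property later in Lemma~\ref{lem:trace_disjoint} — together with the transposition rule $(f^\dag)_{ij} = f_{ji}^\dag$ and the fact that $(-)^\dag$ preserves existing compatible joins, which your order-isomorphism argument handles cleanly since compatibility is dagger-symmetric by Definition~\ref{def:join_inv_cat}. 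What your direct route buys is a self-contained proof staying entirely within the paper's own vocabulary; what it costs is that dinaturality and especially vanishing~II, which you dispatch in one sentence as a reindexing and a Fubini-style rearrangement, are exactly the tedious verifications that the unique-decomposition-category abstraction packages up once and for all. They do go through — sliding reduces to the regrouping $f_{21} \comp (g \comp f_{22})^n \comp g \comp f_{12} = f_{21} \comp g \comp (f_{22} \comp g)^n \comp f_{12}$, and vanishing~II to a path-decomposition of the double join — but a complete write-up would need these spelled out, as they are the longest computations in either approach.
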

% subsection extensivity_of_restriction_categories (end)
% section background (end)
%%% Local Variables:
%%% mode: latex
%%% TeX-master: "Kaarsgaard-Glueck-lmcs.tex"
%%% End:

% !TEX root = Kaarsgaard-Glueck-lmcs.tex

\section{Extensivity of inverse categories} % (fold)
\label{sec:extensivity_inverse_categories}
As discussed earlier, extensivity of restriction categories hinges on the
existence of certain partial isomorphisms -- decisions -- yet their
axiomatization relies on the presence of a map that is not a partial
isomorphism, the codiagonal.

%Unfortunately, this means that we cannot use this
%directly to describe inverse categories with a sum-like monoidal tensor which
%also allows for decision-like maps.

In this section, we tweak the axiomatization of extensivity of restriction
categories to one that is equivalent, but additionally transports more easily
to inverse categories. We then give a definition of extensitivity for inverse
categories, from which it follows that $\Inv(\C)$ is an extensive inverse
category when $\C$ is an extensive restriction category.

Recall that decisions satisfy the following two axioms:
\begin{multicols}{2}
\begin{description}
  \item[\enspace(D.1)] $\codiag \comp \dec{f} = \ridm{f}$ and
  \item[\enspace(D.2)] $(f + f) \comp \dec{f} = (\kappa_1 + \kappa_2) \comp f$
\end{description}
\end{multicols}
As mentioned previously, an immediate problem with this is the reliance on the
codiagonal. However, intuitively, what \textbf{(D.1)} states is simply that the
decision $\dec{f}$ cannot do anything besides to tag its inputs appropriately. Using a disjoint join, we reformulate this axiom to the following:
\begin{description}
  \item[\enspace(D'.1)] $(\kappa_1^\dag \comp \dec{f}) \vee (\kappa_2^\dag \comp
  \dec{f}) = \ridm{f}$
\end{description}
Note that this axiom also subtly states that disjoint joins of the given form
always exist.

Say that a restriction category is \emph{pre-extensive} if it has restriction
coproducts, a restriction zero, and a combinator mapping each map $A \tot{f}
B+C$ to a \emph{pre-decision} $A \tot{\dec{f}} A + A$ (with no additional requirements). We can then show the following:

\begin{thm}\label{thm:pre_ext}
  Let \C{} be a pre-extensive restriction category. The following are 
  equivalent:
  \begin{enumerate}[(i), ref={\roman*}]
    \item \C{} is an extensive restriction category. \label{thm:p:ext}
    \item Every pre-decision of \C{} satisfies \textbf{(D.1)} and 
    \textbf{(D.2)}. \label{thm:p:dec_d1}
    \item Every pre-decision of \C{} satisfies \textbf{(D'.1)} and 
    \textbf{(D.2)}. \label{thm:p:dec_dp1}
  \end{enumerate}
\end{thm}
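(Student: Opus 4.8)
The plan is to split the three-way equivalence into two essentially independent concerns: relating extensivity to the decision axioms, which is (i)$\Leftrightarrow$(ii), and relating the two formulations \textbf{(D.1)} and \textbf{(D'.1)} of the first axiom, which is (ii)$\Leftrightarrow$(iii). The first is an unfolding of definitions on top of a known fact, while the second carries the real content.

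For (i)$\Leftrightarrow$(ii) the key input is the standard observation (from \cite{Cockett2007}) that any map satisfying \textbf{(D.1)} and \textbf{(D.2)} is forced to be the partial inverse of $[\ridm{\kappa_1^\dag \comp f}, \ridm{\kappa_2^\dag \comp f}]$, so that decisions are unique whenever they exist. I would therefore read (ii) as the statement that the given pre-decision combinator is a decision combinator, and (i) as the statement that this combinator exhibits \C{} as extensive; with uniqueness automatic, these say the same thing. Concretely, (ii)$\Rightarrow$(i) uses the cited uniqueness to upgrade ``each $\dec{f}$ is a decision'' to ``\C{} is extensive,'' while (i)$\Rightarrow$(ii) reads \textbf{(D.1)} and \textbf{(D.2)} directly off the defining property of the (unique) decisions.

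For (ii)$\Leftrightarrow$(iii), since both conditions include \textbf{(D.2)}, it suffices to prove that \textbf{(D.1)} and \textbf{(D'.1)} are equivalent. I would do this by establishing, for every map $A \tot{f} B+C$, the identity
\[
  \codiag \comp \dec{f} = (\kappa_1^\dag \comp \dec{f}) \vee (\kappa_2^\dag \comp \dec{f}),
\]
where the right-hand side is understood to assert that the displayed join exists. Granting this, \textbf{(D.1)} and \textbf{(D'.1)} become literally the same equation, so they are interchangeable and (ii)$\Leftrightarrow$(iii) is immediate (note that \textbf{(D.2)} is merely carried along and plays no active role here). Two easy ingredients feed into the identity. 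First, $\codiag = \kappa_1^\dag \vee \kappa_2^\dag$: any common upper bound $t$ of $\kappa_1^\dag$ and $\kappa_2^\dag$ satisfies $t \comp \kappa_i = \id$ (post-compose $t \comp \ridm{\kappa_i^\dag} = \kappa_i^\dag$ with $\kappa_i$ and use $\kappa_i^\dag \comp \kappa_i = \id$), whence $t = [t \comp \kappa_1, t \comp \kappa_2] = \codiag$, so $\codiag$ is the unique upper bound and hence the join; the same argument gives $\id_{A+A} = \ridm{\kappa_1^\dag} \vee \ridm{\kappa_2^\dag}$. Second, $\kappa_1^\dag \comp \dec{f}$ and $\kappa_2^\dag \comp \dec{f}$ are disjoint, from $\kappa_1^\dag \comp \ridm{\kappa_2^\dag} = 0$ together with the restriction identity $\ridm{g} \comp h = h \comp \ridm{g \comp h}$, and $\codiag \comp \dec{f}$ is an upper bound of the two, since $\codiag \comp \dec{f} \comp \ridm{\kappa_i^\dag \comp \dec{f}} = \codiag \comp \ridm{\kappa_i^\dag} \comp \dec{f} = \kappa_i^\dag \comp \dec{f}$ using $\codiag \comp \kappa_i = \id$.

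The step I expect to be the main obstacle is the least-upper-bound half of the identity, equivalently the \emph{existence} of the disjoint join. A pre-extensive restriction category is not assumed to have any joins, so this cannot be quoted; moreover it is exactly the content that \textbf{(D'.1)} ``subtly asserts,'' so the two directions must treat it asymmetrically. In the direction (ii)$\Rightarrow$(iii) I would exploit that a genuine decision $\dec{f}$ is a partial isomorphism: conjugation $(\dec{f})^\dag \comp (-) \comp \dec{f}$ then behaves like a Galois connection on the relevant hom-posets and transports the cover $\id_{A+A} = \ridm{\kappa_1^\dag} \vee \ridm{\kappa_2^\dag}$ to $\ridm{\kappa_1^\dag \comp \dec{f}} \vee \ridm{\kappa_2^\dag \comp \dec{f}} = \ridm{\dec{f}} = \ridm{f}$, which (since each $\kappa_i^\dag \comp \dec{f}$ is the restriction idempotent $\ridm{\kappa_i^\dag \comp f}$) is precisely \textbf{(D'.1)}. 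The converse (iii)$\Rightarrow$(ii) is more delicate, because partial-isomorphy of $\dec{f}$ is not yet in hand: here I would use \textbf{(D.2)} — which already forces $\ridm{f} \le \ridm{\dec{f}}$ — together with the existence of the join guaranteed by \textbf{(D'.1)} to show that the upper bound $\codiag \comp \dec{f}$ is in fact the least one, hence equal to $\ridm{f}$, recovering \textbf{(D.1)}. Getting this final comparison right, without circularly assuming the very join structure one is trying to produce, is the crux of the whole argument.
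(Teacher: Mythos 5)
Your skeleton matches the paper's proof: (i)$\Leftrightarrow$(ii) is quoted from the literature, and (ii)$\Leftrightarrow$(iii) is reduced to the identity $\codiag \comp \dec{f} = (\kappa_1^\dag \comp \dec{f}) \vee (\kappa_2^\dag \comp \dec{f})$; your preparatory ingredients (the covers $\codiag = \kappa_1^\dag \vee \kappa_2^\dag$ and $\id_{A+A} = \ridm{\kappa_1^\dag} \vee \ridm{\kappa_2^\dag}$ via unique-upper-bound arguments, disjointness of the $\kappa_i^\dag \comp \dec{f}$, and the fact that $\codiag \comp \dec{f}$ is an upper bound) are all correct. The gaps sit exactly at the two places you flag as the crux, and they are genuine. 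For (ii)$\Rightarrow$(iii), the claim that conjugation $\dec{f}^\dag \comp (-) \comp \dec{f}$ ``behaves like a Galois connection'' and hence transports the cover is unjustified: conjugation by a partial isomorphism is not an adjoint on the full posets of restriction idempotents; it is only a poset isomorphism between the principal down-sets below $\ridm{\dec{f}}$ and below $\ridm{\dec{f}^\dag}$. To feed your cover through it you must first cut the cover down by $\ridm{\dec{f}^\dag}$, i.e.\ you need $(\ridm{\kappa_1^\dag} \comp \ridm{\dec{f}^\dag}) \vee (\ridm{\kappa_2^\dag} \comp \ridm{\dec{f}^\dag}) = \ridm{\dec{f}^\dag}$, a stability-of-joins-under-composition fact that is \emph{not} automatic in a pre-extensive restriction category (in join restriction categories it is an axiom, not a consequence of the order structure). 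This is precisely what the paper imports instead of proving: Lemma~\ref{lem:ext_join}, resting on the result of \cite{Cockett2007a} that in an extensive restriction category every $h \tot{} B+C$ satisfies $h = (\ridm{\kappa_1^\dag} \comp h) \vee (\ridm{\kappa_2^\dag} \comp h)$; this yields the join $(\id + \zero) \vee (\zero + \id) = \id + \id$ together with the distributivity needed to factor $\kappa_i^\dag = \codiag \comp \ridm{\kappa_i^\dag}$ and pull the join out of the composite.

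For (iii)$\Rightarrow$(ii) you concede you do not have the argument, and the ingredients you list cannot supply it: from \textbf{(D'.1)} and the upper-bound property you get $\ridm{f} \le \codiag \comp \dec{f}$, and since \textbf{(D.2)} gives $\ridm{f} \le \ridm{\dec{f}}$ (not the reverse), the missing half $\codiag \comp \dec{f} \le \ridm{f}$ amounts to $\ridm{\dec{f}} \le \ridm{f}$, which follows neither from \textbf{(D.2)} nor from the bare existence of the join for the given $f$. The missing idea -- the one genuinely clever move in the paper's proof -- is to instantiate the axioms at a specific map: taking $f = \kappa_1 + \kappa_2$, axiom \textbf{(D.2)} together with the universal property of the coproduct (joint epicness of the injections) forces $\dec{\kappa_1 + \kappa_2} = \kappa_1 + \kappa_2$, so \textbf{(D'.1)} applied to this single map produces exactly the needed join $(\id + \zero) \vee (\zero + \id) = \ridm{\kappa_1 + \kappa_2} = \id + \id$, after which the same distributing computation as in the other direction closes the loop. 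Nothing in your proposal plays this role, so as it stands the direction (iii)$\Rightarrow$(ii) is open, and the direction (ii)$\Rightarrow$(iii) rests on an unproven join-preservation claim.
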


To show this theorem, we will need the following lemma:

\begin{lem}\label{lem:ext_join}
  In an extensive restriction category, joins of the form $(f + 0) \vee 
  (0 + g)$ exist for all maps $A \tot{f} B$ and $C \tot{g} D$ and are equal to 
  $f+g$.
\end{lem}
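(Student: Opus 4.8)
The plan is to work directly in the partial order on the hom-set $\C(A+C, B+D)$, where $h \le k$ iff $k \comp \ridm{h} = h$, and to show that $f+g$ is the least upper bound of the pair $\{f+0,\, 0+g\}$; since a join is by definition such a least upper bound, this simultaneously establishes existence and value. I would first record that the two maps are disjoint: computing $(f+0)\comp\ridm{0+g}$ using the restriction-coproduct arithmetic collapses every summand to a zero map, so $(f+0)\comp\ridm{0+g} = 0$, and disjointness gives compatibility. The central tool throughout will be the universal property of the restriction coproduct: two maps out of $A+C$ coincide precisely when they agree after precomposition with the (total) injections $\kappa_1 : A \to A+C$ and $\kappa_2 : C \to A+C$. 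Writing $\kappa_1', \kappa_2'$ for the injections into $B+D$, I will reduce every identity to such componentwise checks.

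The computational heart is the identity $\ridm{h}\comp\kappa_1 = \kappa_1 \comp \ridm{h \comp \kappa_1}$, an instance of restriction axiom (iv), combined with Lemma~\ref{lem:basic_ridm}\eqref{br5}: since $\kappa_1'$ is total and $(f+g)\comp\kappa_1 = \kappa_1' \comp f$, I get $\ridm{f+g}\comp\kappa_1 = \kappa_1 \comp \ridm{\kappa_1'\comp f} = \kappa_1 \comp \ridm{f}$, and likewise $\ridm{f+0}\comp\kappa_1 = \kappa_1\comp\ridm f$ and $\ridm{f+0}\comp\kappa_2 = \kappa_2\comp\ridm{0_{C,D}} = \kappa_2 \comp 0_{C,C}$, with symmetric statements for the second injection. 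These let me push restriction idempotents of coproduct maps past the injections and turn them into restrictions of $f$ and $g$ alone.

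With these in hand, showing $f+g$ is an upper bound is routine: precomposing $(f+g)\comp\ridm{f+0}$ with $\kappa_1$ gives $\kappa_1'\comp f \comp \ridm f = \kappa_1' \comp f = (f+0)\comp\kappa_1$, and with $\kappa_2$ gives a zero map matching $(f+0)\comp\kappa_2$, so $(f+g)\comp\ridm{f+0} = f+0$, i.e.\ $f+0 \le f+g$; symmetrically $0+g \le f+g$. For the least-upper-bound property, suppose $t$ satisfies $f+0 \le t$ and $0+g\le t$. I must show $t\comp\ridm{f+g} = f+g$, which I again verify componentwise: precomposing with $\kappa_1$ and using $\ridm{f+g}\comp\kappa_1 = \kappa_1\comp\ridm f$ turns the goal into $t \comp \kappa_1 \comp \ridm f = \kappa_1' \comp f$, and this follows from $f+0\le t$ by precomposing $t \comp \ridm{f+0} = f+0$ with $\kappa_1$; the $\kappa_2$ component follows symmetrically from $0+g\le t$. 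Uniqueness of copairing then yields $f+g \le t$, completing the least-upper-bound argument.

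The main obstacle is bookkeeping rather than conceptual: I expect the care to lie in handling the restriction idempotents of the coproduct maps $f+0$, $0+g$, and $f+g$, and in being sure that the reduction to injection components is legitimate — that is, that restriction coproducts really do make the injections jointly a coproduct in \C{} (so that agreement on $\kappa_1$ and $\kappa_2$ forces equality), and that the auxiliary facts $\ridm{0_{C,D}} = 0_{C,C}$ and totality of the injections are available. Notably the argument uses only restriction coproducts and the restriction zero, not the decisions or extensivity themselves, so it applies in the full generality in which the lemma is stated.
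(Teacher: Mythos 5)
Your proof is correct for the statement as written, but it takes a genuinely different route from the paper's. The paper's proof is a one-line specialization of a cited result: by \cite{Cockett2007a}, in an extensive restriction category every map $h : A \to B+C$ decomposes as $h = (\ridm{\kappa_1^\dag} \comp h) \vee (\ridm{\kappa_2^\dag} \comp h)$; taking $h = f+g$ and computing $\ridm{\kappa_1^\dag} \comp (f+g) = (\id + 0) \comp (f+g) = f+0$ (and symmetrically) gives the lemma at once. You instead verify the least-upper-bound property from scratch, componentwise along the total, jointly epic injections. Your key identities are right: $\ridm{f+g} \comp \kappa_1 = \kappa_1 \comp \ridm{(f+g)\comp\kappa_1} = \kappa_1 \comp \ridm{f}$ by the fourth restriction axiom and Lemma~\ref{lem:basic_ridm}, $\ridm{0_{C,D}} = 0_{C,C}$ from the restriction zero, and the upper-bound and minimality computations both close up under precomposition with $\kappa_1, \kappa_2$. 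Your closing observation is also a real difference from the paper: your argument never touches decisions, so it establishes the order-theoretic statement in any restriction category with restriction coproducts and a restriction zero (any pre-extensive one), which is strictly more general than the paper's extensivity-based citation.

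The caveat, and the reason the paper reaches for the extensive decomposition, is what ``join'' must deliver for this lemma to be usable downstream. You prove that $f+g$ is the least upper bound of $\{f+0,\, 0+g\}$ in the hom-poset. In this paper's setting (Definition~\ref{def:join_restriction_cat}), joins are additionally required to be stable under composition on both sides, and that is exactly what the proof of Theorem~\ref{thm:pre_ext} uses when it rewrites $(\codiag \comp (\id + 0) \comp \dec{f}) \vee (\codiag \comp (0 + \id) \comp \dec{f})$ as $\codiag \comp ((\id + 0) \vee (0 + \id)) \comp \dec{f}$. Stability under \emph{post}-composition can be recovered by the same componentwise trick you use, but stability under \emph{pre}-composition --- that $(f+g) \comp k$ remains the least upper bound of $(f+0) \comp k$ and $(0+g) \comp k$ for an arbitrary $k$ into $A+C$ --- cannot, since the domain of $k$ offers no coproduct decomposition to argue along; this is precisely the work done by the decomposition the paper cites, and it is where extensivity genuinely enters. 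So your proof is sound and more elementary, but it yields a weaker conclusion than the paper's; to substitute it into the paper you would need to add the stability checks, with pre-composition stability requiring extensivity after all.
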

\begin{proof}
  By \cite{Cockett2007a}, for any map $A \tot{h} B+C$ in an
  extensive restriction category, $h = (\ridm{\kappa_1^\dag} \comp~ h) \vee
  (\ridm{\kappa_2^\dag} \comp~ h)$. But then $f+g = (\ridm{\kappa_1^\dag} \comp
  (f+g)) \vee (\ridm{\kappa_2^\dag} \comp (f+g)) = ((\id + \zero) \comp (f+g)) 
  \vee ((\zero + \id) \comp (f+g)) = (f + \zero) \vee (\zero + g)$.
\end{proof}

We can now continue with the proof.

\begin{proof}
  The equivalence between \eqref{thm:p:ext} and \eqref{thm:p:dec_d1} was given 
  in \cite{Cockett2007}. That \eqref{thm:p:dec_d1} and 
  \eqref{thm:p:dec_dp1} are equivalent follows by
  \begin{align*}
    (\kappa_1^\dag \comp \dec{f}) \vee (\kappa_2^\dag \comp \dec{f})
    & = ([\id, \zero] \comp \dec{f}) \vee ([\zero, \id] \comp \dec{f}) \\
    & = (\codiag \comp (\id + \zero) \comp \dec{f}) \vee (\codiag \comp (\zero
    + \id) \comp \dec{f}) \\
    & = \codiag \comp ((\id + \zero) \vee (\zero + \id)) \comp \dec{f} \\
    & = \codiag \comp (\id + \id) \comp \dec{f}
    = \codiag \comp \dec{f}
  \end{align*}
  where we note that the join $(\id + \zero) \vee (\zero + \id)$ exists and 
  equals $\id + \id$ when every pre-decision satisfies \textbf{(D.1)} and
  \textbf{(D.2)} by Lemma~\ref{lem:ext_join}. That the join also exists when
  every pre-decision satisfies \textbf{(D'.1)} and \textbf{(D.2)} follows as
  well, since the universal mapping property for coproducts guarantees that the
  only map $g$ satisfying $((\kappa_1 + \kappa_2) + (\kappa_1 + \kappa_2)) \comp
  g = (\kappa_1 + \kappa_2) \comp (\kappa_1 + \kappa_2)$ is $\kappa_1 +
  \kappa_2$ itself, so we must have $\dec{\kappa_1 + \kappa_2} = \kappa_1 +
  \kappa_2$, and
  \begin{align*}
    (\id + \zero) \vee (\zero + \id) 
     & = \ridm{\kappa_1^\dag} \vee \ridm{\kappa_2^\dag}
     = (\kappa_1 \comp \kappa_1^\dag) \vee (\kappa_2 \comp \kappa_2^\dag) \\
     & = (\kappa_1^\dag \comp (\kappa_1 + \kappa_2)) \vee
    (\kappa_2^\dag \comp (\kappa_1 + \kappa_2)) \\
     & = \ridm{\kappa_1 + \kappa_2} 
     = \id + \id
  \end{align*}
  which was what we wanted.
\end{proof}

Another subtle consequence of our amended first rule is that $\kappa_1^\dag
\comp \dec{f}$ is its own restriction idempotent (and likewise for
$\kappa_2^\dag \comp \dec{f}$) since $\kappa_1^\dag \comp \dec{f} \le
(\kappa_1^\dag \comp \dec{f}) \vee (\kappa_2^\dag \comp \dec{f}) = \ridm{f} \le
\id$, as the maps below identity are precisely the restriction idempotents.

Our next snag in transporting this definition to inverse categories has to do
with the restriction coproducts themselves, as it is observed in
\cite{Giles2014} that any inverse category with restriction coproducts is a
preorder. Intuitively, the problem is not that unicity of coproduct maps cannot
be guaranteed in non-preorder inverse categories, but rather that the coproduct
map $A+B \tot{[f,g]} C$ in a restriction category is not guaranteed to be a
partial isomorphism when $f$ and $g$ are.

For this reason, we will consider the more general disjointness tensor for
sum-like constructions rather than full-on restriction coproducts, as inverse
categories may very well have a disjointness tensor without it leading to
immediate degeneracy. Notably, \PInj{} has a disjointness tensor, constructed
on objects as the disjoint union of sets (precisely as the restriction
coproduct in \Pfn, but without the requirement of a universal mapping property). 
This leads us to the following definition:

\begin{defi}\label{def:inv_ext}
  An inverse category with a disjointness tensor is said to be \emph{extensive} 
  when each map $A \tot{f} B \oplus C$ has a unique decision 
  $A \tot{\dec{f}} A \oplus A$ satisfying
  \begin{description}
    \item[\enspace(D'.1)] $(\amalg_1^\dag \comp \dec{f}) \vee (\amalg_2^\dag
    \comp \dec{f}) = \ridm{f}$
    \item[\enspace(D'.2)] $(f \oplus f) \comp \dec{f} = (\amalg_1 \oplus
    \amalg_2) \comp f$.
  \end{description}
\end{defi}
As an example, \PInj{} is an extensive inverse category with the unique
decision $A \tot{\dec{f}} A \oplus A$ for a partial injection $A \tot{f} B
\oplus C$ given by
\begin{equation*}
  \dec{f}(x) = \left\{ \begin{array}{ll}
    \amalg_1(x) & \text{if}\ f(x) = \amalg_1(y)\ \text{for some $y \in B$} \\
    \amalg_2(x) & \text{if}\ f(x) = \amalg_2(z)\ \text{for some $z \in C$} \\
    \text{undefined} & \text{if $f(x)$ is undefined} 
  \end{array} \right. \enspace.
\end{equation*}
Aside from a shift from coproduct injections to the quasi-injections of the
disjointness tensor, a subtle change here is the notion of join. That is, for
restriction categories with disjoint joins, any pair of maps $f,g$ with $f
\comp \ridm{g} = 0$ has a join -- but for inverse categories, we additionally
require that their \emph{inverses} are disjoint as well, \ie, that $f^\dag
\comp \ridm{g^\dag} = 0$, for the join to exist. In this case, however, there
is no difference between the two. As previously discussed, a direct consequence
of this axiom is that each $\amalg_i^\dag \comp \dec{f}$ must be its own
restriction idempotent. Since restriction idempotents are self-adjoint (\ie,
satisfy $f = f^\dag$), they are disjoint iff their inverses are disjoint.

Since restriction coproducts give rise to a disjointness tensor, we may
straightforwardly show the following theorem.

\begin{thm}
  When \C{} is an extensive restriction category, $\Inv(\C)$ is an extensive 
  inverse category.
\end{thm}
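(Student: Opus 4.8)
The plan is to transport the decisions of $\C$ directly to $\Inv(\C)$, exploiting the fact that decisions are always partial isomorphisms, so no new maps need to be constructed. First I would equip $\Inv(\C)$ with a disjointness tensor. Since $\C$ has restriction coproducts, these yield a disjointness tensor, and I claim it restricts to $\Inv(\C)$: the bifunctor $-+-$ sends partial isomorphisms to partial isomorphisms (with $(f+g)^\dag = f^\dag + g^\dag$), its unit is the restriction zero (which lives in $\Inv(\C)$), and the structural isomorphisms, being total partial isomorphisms, lie in $\Inv(\C)$ as well. The quasi-injections $\amalg_i$ here coincide with the total coproduct injections $\kappa_i$, and their joint epicness is inherited from $\C$ since $\Inv(\C)$ has fewer morphisms. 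Writing $\oplus$ for this tensor, we thus have $\oplus = +$ and $\amalg_i = \kappa_i$.

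Next, for a morphism $f$ of $\Inv(\C)$, that is, a partial isomorphism $A \tot{f} B + C$ of $\C$, I would take its decision $\dec{f}$ in $\C$. As recalled earlier in the excerpt, every decision is a partial isomorphism, being partial inverse to $[\ridm{\kappa_1^\dag \comp f}, \ridm{\kappa_2^\dag \comp f}]$, so $\dec{f}$ is itself a morphism of $\Inv(\C)$. By Theorem~\ref{thm:pre_ext}, $\dec{f}$ satisfies \textbf{(D'.1)} and \textbf{(D.2)} in $\C$; substituting $\oplus = +$ and $\amalg_i = \kappa_i$ turns these into precisely \textbf{(D'.1)} and \textbf{(D'.2)} of Definition~\ref{def:inv_ext}. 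Since composition and restriction in $\Inv(\C)$ agree with those in $\C$, these identities continue to hold in $\Inv(\C)$.

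The one genuinely delicate point, and the step I expect to be the main obstacle, is that the join in \textbf{(D'.1)} must exist in $\Inv(\C)$, where joins demand disjointness of both the maps and their partial inverses, and not merely in $\C$. Here I would use the observation recorded just after Theorem~\ref{thm:pre_ext}: each $\kappa_i^\dag \comp \dec{f}$ lies below $\ridm{f} \le \id$ and is therefore a restriction idempotent, hence self-adjoint. For self-adjoint maps the inverse-compatibility condition $f^\dag \comp \ridm{g^\dag} = g^\dag \comp \ridm{f^\dag}$ collapses to the ordinary compatibility condition $f \comp \ridm{g} = g \comp \ridm{f}$, so the extra clause that distinguishes joins in inverse categories from joins in restriction categories is automatically satisfied. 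The join $(\kappa_1^\dag \comp \dec{f}) \vee (\kappa_2^\dag \comp \dec{f})$ computed in $\C$ therefore also exists in $\Inv(\C)$ and retains its value $\ridm{f}$.

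Finally I would settle uniqueness: any decision of $f$ in $\Inv(\C)$ is in particular a pre-decision $A \to A + A$ of $\C$ satisfying \textbf{(D'.1)} and \textbf{(D.2)}, so by Theorem~\ref{thm:pre_ext} it coincides with the unique decision of $\C$. This establishes both existence and uniqueness of decisions, and hence that $\Inv(\C)$ is an extensive inverse category.
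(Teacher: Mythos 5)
Your overall strategy is the one the paper has in mind: the paper states this theorem without an explicit proof, remarking only that it follows ``straightforwardly'' once restriction coproducts are seen to induce a disjointness tensor, and your argument fills in exactly that route. The existence half is handled correctly: the coproduct tensor restricts to $\Inv(\C)$, decisions of \C{} are partial isomorphisms and hence morphisms of $\Inv(\C)$, and the key observation that each $\kappa_i^\dag \comp \dec{f}$ is a restriction idempotent, hence self-adjoint, so that disjointness of inverses comes for free, is precisely the remark the paper makes around Definition~\ref{def:inv_ext}. Your implicit claim that the $\C$-join then survives into $\Inv(\C)$ is also sound, since a least upper bound in $\C$ that happens to lie in $\Inv(\C)$ dominates the (smaller) collection of upper bounds available in $\Inv(\C)$.

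The gap is in the uniqueness step. You assert that a decision $d$ of $f$ in $\Inv(\C)$ ``is in particular a pre-decision of $\C$ satisfying \textbf{(D'.1)} and \textbf{(D.2)}'' and feed it to Theorem~\ref{thm:pre_ext}. But \textbf{(D'.1)} is not a plain equation between composites: it asserts that a certain join \emph{exists} and equals $\ridm{f}$, and a join is relative to the ambient hom-poset. What you are given is that the join of $\kappa_1^\dag \comp d$ and $\kappa_2^\dag \comp d$ computed in $\Inv(\C)$ is $\ridm{f}$; what Theorem~\ref{thm:pre_ext} consumes is the join computed in $\C$. The transfer you verified goes the opposite way: a least upper bound in $\Inv(\C)$ need not a priori dominate the larger collection of upper bounds present in $\C$, so \textbf{(D'.1)}-in-$\Inv(\C)$ does not immediately give \textbf{(D'.1)}-in-$\C$. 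The step is repairable with material already in the paper: since $d$ satisfies \textbf{(D'.1)} in $\Inv(\C)$, each $\kappa_i^\dag \comp d$ lies below $\ridm{f} \le \id$ and is its own restriction idempotent; the computation in the proof of Theorem~\ref{thm:pre_ext}, which given Lemma~\ref{lem:ext_join} applies verbatim to any map $A \to A+A$, shows that the $\C$-join $(\kappa_1^\dag \comp d) \vee (\kappa_2^\dag \comp d)$ exists and equals $\codiag \comp d$; this join is again a restriction idempotent, hence lies in $\Inv(\C)$ and is also the join there, so uniqueness of least upper bounds forces $\codiag \comp d = \ridm{f}$. Thus $d$ satisfies \textbf{(D.1)} and \textbf{(D.2)} in $\C$, and the uniqueness clause in the definition of an extensive restriction category yields $d = \dec{f}$. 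With that patch your argument is complete.
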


Further, constructing the decision $\dec{f}$ as $(\amalg_1 \comp
\ridm{\amalg_1^\dag \comp f}) \vee (\amalg_2 \comp \ridm{\amalg_2^\dag \comp
f})$ (\ie, mirroring the construction of decisions in restriction categories
with disjoint joins), we may show the following.

\begin{thm}\label{thm:join_inv_ext}
  Let \C{} be an inverse category with a disjointness tensor, a restriction 
  zero, and finite disjoint joins which are further preserved by the 
  disjointness tensor. Then \C{} is extensive as an inverse category.
\end{thm}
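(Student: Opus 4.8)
The plan is to exhibit the unique decision explicitly, following the construction announced just before the statement, and to check the two axioms of Definition~\ref{def:inv_ext} together with uniqueness. Throughout I would write $e_1 = \ridm{\amalg_1^\dag \comp f}$ and $e_2 = \ridm{\amalg_2^\dag \comp f}$, and set $\dec{f} = (\amalg_1 \comp e_1) \vee (\amalg_2 \comp e_2)$.

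Before anything else I would build a small calculus for the quasi-injections of the disjointness tensor, to play the role that the coproduct injections play in the restriction-category proof. Concretely, I would record that each $\amalg_i$ is total, that $\amalg_i^\dag \comp \amalg_i = \id$, and that $\amalg_1^\dag \comp \amalg_2 = 0 = \amalg_2^\dag \comp \amalg_1$; this last identity falls straight out of the defining factorisations $\amalg_1 = (\id \oplus \zero) \comp \rho^{-1}$ and $\amalg_2 = (\zero \oplus \id) \comp \lambda^{-1}$, using that $\oplus$ is a restriction functor with $\zero$-unit so that composites of zero maps vanish. The decisive replacement for the universal property of coproducts is a completeness identity $(\amalg_1 \comp \amalg_1^\dag) \vee (\amalg_2 \comp \amalg_2^\dag) = \id$: the left-hand side is a join of restriction idempotents which exists because the two summands are disjoint (again by $\amalg_1^\dag \comp \amalg_2 = 0$), and one checks it acts as the identity after precomposition with either $\amalg_i$, whence it equals $\id$ by joint epicness of the quasi-injections. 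From this I would derive the decomposition lemma $u \oplus v = (\amalg_1 \comp u \comp \amalg_1^\dag) \vee (\amalg_2 \comp v \comp \amalg_2^\dag)$ -- the inverse-categorical analogue of Lemma~\ref{lem:ext_join} -- by composing the completeness identity on the left with $u \oplus v$ and invoking naturality of the quasi-injections, join-preservation of the disjointness tensor ensuring the joins involved are computed compatibly with $\oplus$.

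With this calculus in hand, the verifications become bookkeeping. For existence of $\dec{f}$ itself, the genuinely inverse-categorical subtlety is that the defining join requires disjointness of $\amalg_1 \comp e_1$ and $\amalg_2 \comp e_2$ both in the maps and in their daggers. Totality of $\amalg_i$ gives $\ridm{\amalg_2 \comp e_2} = e_2$ (Lemma~\ref{lem:basic_ridm}), and the identity $f \comp e_1 = \amalg_1 \comp \amalg_1^\dag \comp f$ (restriction axiom~(iv)) combined with $\amalg_1^\dag \comp \amalg_2 = 0$ yields $e_1 \comp e_2 = 0$; feeding this in gives $(\amalg_1 \comp e_1) \comp \ridm{\amalg_2 \comp e_2} = 0$, while the dagger condition follows symmetrically, again from $\amalg_1^\dag \comp \amalg_2 = 0$. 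Axiom~\textbf{(D'.1)} then drops out by distributing $\amalg_i^\dag$ over the join and using $\amalg_i^\dag \comp \amalg_i = \id$ and $\amalg_i^\dag \comp \amalg_j = 0$ for $i \neq j$ to get $\amalg_i^\dag \comp \dec{f} = e_i$, followed by recognising $e_1 \vee e_2 = \ridm{f}$ through the completeness identity. Axiom~\textbf{(D'.2)} follows by distributing $f \oplus f$ over the join, rewriting $(f \oplus f) \comp \amalg_i = \amalg_i \comp f$ by naturality and $f \comp e_i = \amalg_i \comp \amalg_i^\dag \comp f$ as above, and identifying the outcome with $(\amalg_1 \oplus \amalg_2) \comp f$ via the decomposition lemma.

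For uniqueness, suppose $g$ is any map satisfying \textbf{(D'.1)} and \textbf{(D'.2)}. Applying the appropriate quasi-projection $\amalg_1^\dag$ to both sides of \textbf{(D'.2)} and simplifying with the calculus yields $f \comp \amalg_1^\dag \comp g = f \comp e_1$; precomposing with $f^\dag$ turns this into $\ridm{f} \comp \amalg_1^\dag \comp g = e_1$. Here \textbf{(D'.1)} for $g$ is what closes the argument: it forces $\amalg_1^\dag \comp g \le \ridm{f}$, so $\amalg_1^\dag \comp g$ is a restriction idempotent absorbed by $\ridm{f}$, giving $\amalg_1^\dag \comp g = e_1$, and symmetrically $\amalg_2^\dag \comp g = e_2$. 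Decomposing $g = (\amalg_1 \comp \amalg_1^\dag \comp g) \vee (\amalg_2 \comp \amalg_2^\dag \comp g)$ through the completeness identity then forces $g = \dec{f}$. I expect the main obstacle to be the second paragraph above: with no coproduct universal property to lean on, the entire calculus must be bootstrapped from joint epicness and from joins, and it is precisely the two-sided disjointness -- the need to verify a condition on both a map and its dagger -- that separates this argument from its restriction-categorical template.
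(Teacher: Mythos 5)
Your proof is correct and takes exactly the approach the paper intends: the paper states this theorem without any written proof, offering only the candidate construction $\dec{f} = (\amalg_1 \comp \ridm{\amalg_1^\dag \comp f}) \vee (\amalg_2 \comp \ridm{\amalg_2^\dag \comp f})$, which is precisely your $\dec{f}$, and your auxiliary "calculus" for the quasi-injections together with the completeness and decomposition identities are the contents of the paper's Lemma~\ref{lem:basic_disj} and Lemma~\ref{lem:basic_ext}~(\ref{be2})--(\ref{be3}). The one subtlety you had to resolve on your own is also handled correctly, and is worth noting: the paper proves $\ridm{\amalg_1^\dag} \vee \ridm{\amalg_2^\dag} = \id$ only for categories already assumed extensive (via uniqueness of decisions applied to $\amalg_1 \oplus \amalg_2$), which would be circular in the present proof, whereas you derive it from disjointness of the idempotents $\amalg_i \comp \amalg_i^\dag$ and joint epicness of $\amalg_1, \amalg_2$ --- exactly what the hypotheses of the theorem (finite disjoint joins plus the disjointness tensor) support.
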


Before we proceed, we show two small lemmas regarding decisions and the (join
preserving) disjointness tensor (for the latter, see also \cite{Giles2014}).
\begin{lem}\label{lem:basic_disj}
  In any inverse category with a disjointness tensor, it is the case that
  \begin{enumerate}[(i), ref={\roman*}]
    \item $\ridm{\amalg_1} = \ridm{\amalg_2} = \id$, \label{bd1}
    \item $\ridm{\amalg_1^\dagger} = \id \oplus 0$ and $\ridm{\amalg_2^\dagger}
    = 0 \oplus \id$, \label{bd2}
    \item $\amalg_j^\dagger \comp \amalg_i = \id$ if $i=j$ and 
    $\amalg_j^\dagger \comp \amalg_i = 0$ otherwise. \label{bd3}
  \end{enumerate}
\end{lem}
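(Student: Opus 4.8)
The plan is to unfold the definitions of the quasi-injections, $\amalg_1 = (\id \oplus \zero) \comp \rho^{-1}$ and $\amalg_2 = (\zero \oplus \id) \comp \lambda^{-1}$, and reduce every claim to three background facts: that $-\oplus-$ is a restriction functor, so $\ridm{f \oplus g} = \ridm{f} \oplus \ridm{g}$, and (being a functor between inverse categories) preserves the dagger, so $(f \oplus g)^\dag = f^\dag \oplus g^\dag$; that the unitors $\rho, \lambda$ are total isomorphisms, whence their partial inverses coincide with their inverses (\eg\ $(\rho^{-1})^\dag = \rho$); and that the unit $0$ is a zero object, so that $\Hom(0,0)$ is a singleton (giving $\ridm{\zero} = \id$ on $0$), daggers of zero maps are zero maps, and composites through $0$ are zero. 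All composites below should then collapse by functoriality and cancellation of the unitors.

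For \eqref{bd1}, I would first observe that $\id \oplus \zero$ is total: by the restriction-functor law, $\ridm{\id \oplus \zero} = \ridm{\id} \oplus \ridm{\zero} = \id \oplus \id = \id$, using Lemma~\ref{lem:basic_ridm}\eqref{br1} and $\ridm{\zero} = \id$ on the zero object. Since $\amalg_1$ is this total map postcomposed with the total iso $\rho^{-1}$, Lemma~\ref{lem:basic_ridm}\eqref{br5} gives $\ridm{\amalg_1} = \ridm{\rho^{-1}} = \id$, and the case of $\amalg_2$ is symmetric. For \eqref{bd2}, I use the partial-inverse law $\ridm{\amalg_1^\dag} = \amalg_1 \comp \amalg_1^\dag$; computing $\amalg_1^\dag = \rho \comp (\id \oplus \zero)$ via the dagger and functor laws, the product $\amalg_1 \comp \amalg_1^\dag$ contains the cancelling pair $\rho^{-1} \comp \rho = \id$, and functoriality collapses the remainder to $\id \oplus (\zero \comp \zero) = \id \oplus 0$, since a composite through the zero object is zero. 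Once more $\amalg_2$ is symmetric.

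For \eqref{bd3}, the diagonal cases are immediate: $\amalg_i^\dag \comp \amalg_i = \ridm{\amalg_i} = \id$ by \eqref{bd1} and the partial-isomorphism identity. For the off-diagonal $\amalg_2^\dag \comp \amalg_1$ (and symmetrically $\amalg_1^\dag \comp \amalg_2$), unfolding the definitions and cancelling the unitor isomorphisms reduces the claim to showing that the surviving tensor of two zero maps, of the shape $\zero \oplus \zero$, is itself a zero map.

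This last reduction is the main obstacle, since it does not follow from functoriality alone. The plan is to factor $\zero \oplus \zero$ as $(\id \oplus \zero) \comp (\zero \oplus \id)$ through $0 \oplus 0$, and then to use that $0 \oplus 0 \iso 0$ (as $0$ is the tensor unit) makes $0 \oplus 0$ a zero object; any map factoring through a zero object is the zero map. Since pre- and postcomposing with the total unitor isomorphisms preserves this, we obtain $\amalg_2^\dag \comp \amalg_1 = 0$, completing \eqref{bd3}.
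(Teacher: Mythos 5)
Your proposal is correct and takes essentially the same route as the paper's proof: unfold $\amalg_1$ and $\amalg_2$, use restriction-functoriality of $\oplus$ (and its preservation of daggers), totality of the unitors, and the zero object, with the off-diagonal case of (iii) handled exactly as in the paper by observing that the tensor of two zero maps factors through $0 \oplus 0 \iso 0$ and is hence zero. The only differences are local streamlinings on your part: you derive (ii) from $\amalg_1 \comp \amalg_1^\dag = \ridm{\amalg_1^\dag}$ and the diagonal case of (iii) directly from (i) via $\amalg_i^\dag \comp \amalg_i = \ridm{\amalg_i}$, whereas the paper recomputes both from the definitions, so your argument is, if anything, slightly shorter with the same ingredients.
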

\begin{proof}
  For \eqref{bd1}, adding subscripts on identities and zero maps for clarity,
  \begin{align*}
    \ridm{\amalg_1} & = \ridm{(\id_X \oplus 0_{0,Y}) \comp 
    \rho^{-1}} = \ridm{\ridm{(\id_X \oplus 0_{0,Y})} \comp \rho^{-1}}
    = \ridm{(\ridm{\id_X} \oplus \ridm{0_{0,Y}}) \comp \rho^{-1}}
    = \ridm{(\id_X \oplus 0_{0,0}) \comp \rho^{-1}} \\
    & = \ridm{(\id_X \oplus \id_0) \comp \rho^{-1}}
    = \ridm{\id_{X \oplus 0} ~\comp \rho^{-1}} = \ridm{\rho^{-1}} = \id_X
  \end{align*}
  and analogously for $\ridm{\amalg_2}$. For \eqref{bd2},
  $$
  \ridm{\amalg_1^\dag} = \ridm{((\id \oplus 0) \comp \rho^{-1})^\dag} =
  \ridm{\rho \comp (\id^\dag \oplus 0^\dag)} = \ridm{\ridm{\rho} \comp (\id
  \oplus 0)} = \ridm{\id ~\comp (\id \oplus 0)} = \ridm{\id \oplus 0} = \id 
  \oplus 0
  $$
  and again, the proof for $\amalg_2$ is analogous. To show \eqref{bd3} when 
  $i=j$
  \begin{align*}
  \amalg_1^\dag \comp \amalg_1 
  & = \rho \comp (\id_X \oplus 0_{Y,0}) \comp (\id_X \oplus 0_{0,Y}) \comp
    \rho^{-1}
  = \rho \comp (\id_X \oplus 0_{0,0}) \comp \rho^{-1} 
  = \rho \comp (\id_X \oplus \id_0) \comp \rho^{-1} \\
  & = \rho \comp ~\id_{X \oplus 0} ~\comp \rho^{-1}
  = \rho \comp \rho^{-1}
  = \id_X
  \end{align*}
  and similarly for $\amalg_2^\dag \comp \amalg_2 = \id$ (noting that 
  $0_{Y,0} \comp 0_{0,Y} = 0_{0,0} = \id_0$ follows by the universal mapping 
  property of the zero object $0$). For $i \neq j$, we proceed with the case 
  where $i=1$ and $j=2$, yielding
  \begin{align*}
    \amalg_2^\dag \comp \amalg_1 = \rho \comp (0_{X,0} \oplus \id_Y) \comp
    (\id_X \oplus 0_{0,Y}) \comp \rho^{-1} = \rho \comp (0_{X,0} \oplus
    0_{0,Y}) \comp \rho^{-1} = \rho ~\comp 0_{X \oplus 0, 0 \oplus Y} ~\comp 
    \rho^{-1} = 0_{X,Y}
  \end{align*}
  where $0_{X,0} \oplus 0_{0,Y} = 0_{X \oplus 0, 0 \oplus Y}$ follows by the
  fact that $0_{X,0} \oplus 0_{0,Y}$ factors through $0 \oplus 0$ (by the
  universal mapping property of $0$) and the fact that $0 \oplus 0 \cong 0$
  (since $0$ serves as unit for $-\oplus-$), so $0_{X,0} \oplus 0_{0,Y}$
  factors through $0$ as well. The case where $i=2$ and $j=1$ is entirely
  analogous.
\end{proof}

\begin{lem}\label{lem:basic_ext}
  In any extensive inverse category, it is the case that
  \begin{enumerate}[(i), ref={\roman*}]
    \item $\ridm{\amalg_1^\dag \comp \dec{f}} = \amalg_1^\dag \comp \dec{f}$, 
    \label{be1}
    \item $\ridm{\amalg_1^\dagger} \vee \ridm{\amalg_2^\dagger} = \id$, 
    \label{be2}
    \item $\dec{f} = (\amalg_1 \comp \ridm{\amalg_1^\dag \comp \dec{f}}) \vee 
    (\amalg_2 \comp \ridm{\amalg_2^\dag \comp \dec{f}})$, and \label{be3}
    \item $\ridm{f} = \ridm{\dec{f}}$. \label{be4}
  \end{enumerate}
\end{lem}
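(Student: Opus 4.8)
The plan is to establish the four parts in order, since \eqref{be3} and \eqref{be4} will rest on \eqref{be1} and \eqref{be2}. Part \eqref{be1} is purely order-theoretic: by \textbf{(D'.1)} the map $\amalg_1^\dag \comp \dec{f}$ lies below the join $(\amalg_1^\dag \comp \dec{f}) \vee (\amalg_2^\dag \comp \dec{f}) = \ridm{f} \le \id$, and since the maps below an identity are precisely the restriction idempotents, $\amalg_1^\dag \comp \dec{f}$ is its own restriction idempotent (and symmetrically for $\amalg_2^\dag \comp \dec{f}$). This mirrors the argument already given for the coproduct injections $\kappa_i$ in the restriction-category setting.

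For \eqref{be2}, the join $\ridm{\amalg_1^\dag} \vee \ridm{\amalg_2^\dag}$ exists because the two restriction idempotents are disjoint: by Lemma~\ref{lem:basic_disj}\eqref{bd2} they are $\id \oplus 0$ and $0 \oplus \id$, whose composite is $0$, and the dagger condition is automatic since restriction idempotents are self-adjoint. To identify the \emph{value} of the join I would exploit that the quasi-injections $\amalg_1, \amalg_2$ are jointly epic. Writing $e = \ridm{\amalg_1^\dag} \vee \ridm{\amalg_2^\dag}$ and distributing composition over the join (Definition~\ref{def:join_restriction_cat}\eqref{def:join_rest_compr}), we get $e \comp \amalg_1 = (\ridm{\amalg_1^\dag} \comp \amalg_1) \vee (\ridm{\amalg_2^\dag} \comp \amalg_1)$; using $\ridm{\amalg_i^\dag} = \amalg_i \comp \amalg_i^\dag$ together with Lemma~\ref{lem:basic_disj}\eqref{bd1} and \eqref{bd3} this collapses to $\amalg_1 \vee 0 = \amalg_1$, and symmetrically $e \comp \amalg_2 = \amalg_2$. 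Hence $e \comp \amalg_i = \id \comp \amalg_i$ for $i = 1,2$, so joint epicness forces $e = \id$. I expect this to be the main obstacle, as it is the one place where the sum-like nature of the disjointness tensor (joint epicness) genuinely enters, rather than formal juggling of restriction idempotents.

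Given \eqref{be1} and \eqref{be2}, part \eqref{be3} is a short calculation. By \eqref{be1} the right-hand side equals $(\amalg_1 \comp \amalg_1^\dag \comp \dec{f}) \vee (\amalg_2 \comp \amalg_2^\dag \comp \dec{f}) = (\ridm{\amalg_1^\dag} \comp \dec{f}) \vee (\ridm{\amalg_2^\dag} \comp \dec{f})$, and factoring $\dec{f}$ out on the right via Definition~\ref{def:join_restriction_cat}\eqref{def:join_rest_compr} gives $(\ridm{\amalg_1^\dag} \vee \ridm{\amalg_2^\dag}) \comp \dec{f} = \id \comp \dec{f} = \dec{f}$ by \eqref{be2}.

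Finally, for \eqref{be4} I would apply the restriction combinator to the decomposition \eqref{be3} and use that restriction commutes with joins (Definition~\ref{def:join_restriction_cat}\eqref{def:join_rest_cont}), obtaining $\ridm{\dec{f}} = \ridm{\amalg_1 \comp \ridm{\amalg_1^\dag \comp \dec{f}}} \vee \ridm{\amalg_2 \comp \ridm{\amalg_2^\dag \comp \dec{f}}}$. Each $\amalg_i$ is total by Lemma~\ref{lem:basic_disj}\eqref{bd1}, so Lemma~\ref{lem:basic_ridm}\eqref{br5} collapses the $i$-th summand to $\ridm{\ridm{\amalg_i^\dag \comp \dec{f}}} = \amalg_i^\dag \comp \dec{f}$ (using Lemma~\ref{lem:basic_ridm}\eqref{br2} and \eqref{be1}). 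Thus $\ridm{\dec{f}} = (\amalg_1^\dag \comp \dec{f}) \vee (\amalg_2^\dag \comp \dec{f}) = \ridm{f}$ by \textbf{(D'.1)}, completing the argument.
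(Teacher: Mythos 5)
Your proof is correct, and on parts \eqref{be1}, \eqref{be3} and \eqref{be4} it essentially coincides with the paper's: part \eqref{be1} is the same order-theoretic argument, part \eqref{be3} is the same computation read right-to-left, and for \eqref{be4} you restrict the decomposition \eqref{be3} and use totality of the $\amalg_i$, where the paper instead starts from \textbf{(D'.1)} and reshuffles restriction idempotents --- an inessential variation. The genuine difference is part \eqref{be2}. The paper computes $\ridm{\amalg_i^\dag \comp (\amalg_1 \oplus \amalg_2)} = \ridm{\amalg_i^\dag}$, asserts $\dec{\amalg_1 \oplus \amalg_2} = \amalg_1 \oplus \amalg_2$, and then applies \textbf{(D'.1)} (together with part \eqref{be1}) to the map $\amalg_1 \oplus \amalg_2$ to conclude $\ridm{\amalg_1^\dag} \vee \ridm{\amalg_2^\dag} = \ridm{\amalg_1 \oplus \amalg_2} = \id$; you instead pin down the join by joint epicness of $\amalg_1, \amalg_2$. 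Your route has the advantage of not needing the self-decision claim $\dec{\amalg_1 \oplus \amalg_2} = \amalg_1 \oplus \amalg_2$, which the paper invokes without proof (it requires the uniqueness clause of Definition~\ref{def:inv_ext} plus a verification of both decision axioms for this candidate), and joint epicness is part of the very definition of a disjointness tensor, so your identification of the \emph{value} of the join is self-contained. What the paper's route buys is \emph{existence}: by Definition~\ref{def:inv_ext}, an extensive inverse category is only guaranteed to have the joins mandated by \textbf{(D'.1)}, not arbitrary disjoint joins, so your opening claim that $\ridm{\amalg_1^\dag} \vee \ridm{\amalg_2^\dag}$ exists ``because the idempotents are disjoint'' is not justified in that bare generality, whereas the paper realizes this join as a \textbf{(D'.1)}-join and gets existence for free. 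In practice this is a caveat rather than a fatal gap: the lemma is applied only in join inverse categories (Section~\ref{sec:soundness_and_adequacy}), where disjoint joins do exist, and indeed both your proof and the paper's freely use distributivity of composition and of restriction over joins (Definition~\ref{def:join_restriction_cat}), which likewise presupposes that ambient join structure.
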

\begin{proof}
  To show \eqref{be1}, we have by definition of the join that $\amalg_1^\dag
  \comp \dec{f} \le (\amalg_1^\dag \comp \dec{f}) \vee (\amalg_2^\dag \comp
  \dec{f}) = \ridm{f}$, where the equality follows precisely by the first axiom 
  of decisions. Since $\ridm{f} \le \id$ by $\id \comp \ridm{\ridm{f}}
  = \ridm{f}$ (since $\ridm{\ridm{f}} = \ridm{f}$ by
  Lemma~\ref{lem:basic_ridm}), it follows by transitivity of $- \le -$ (since
  it is a partial order) that $\amalg_1^\dag \comp \dec{f} \le \id$. But since
  $f \le g$ iff $g \comp \ridm{f} = f$, it follows from $\amalg_1^\dag \comp
  \dec{f} \le \id$ that $\ridm{\amalg_1^\dag \comp \dec{f}} = \id~ \comp
  \ridm{\amalg_1^\dag \comp \dec{f}} = \amalg_1^\dag \comp \dec{f}$.
  
  For \eqref{be2}, using Lemma~\ref{lem:basic_disj} we start by noting that
  $$\ridm{\amalg_1^\dag \comp (\amalg_1 \oplus \amalg_2)} = 
  \ridm{\ridm{\amalg_1^\dag} \comp (\amalg_1 \oplus \amalg_2)} = 
  \ridm{(\id \oplus 0) \comp (\amalg_1 \oplus \amalg_2)} =
  \ridm{\amalg_1 \oplus 0} =
  \ridm{\amalg_1} \oplus \ridm{0} =
  \id \oplus 0 =
  \ridm{\amalg_1^\dag}$$
  and similarly
  $$\ridm{\amalg_2^\dag \comp (\amalg_1 \oplus \amalg_2)} = 
  \ridm{\ridm{\amalg_2^\dag} \comp (\amalg_1 \oplus \amalg_2)} = 
  \ridm{(0 \oplus \id) \comp (\amalg_1 \oplus \amalg_2)} =
  \ridm{0 \oplus \amalg_2} =
  \ridm{0} \oplus \ridm{\amalg_2} =
  0 \oplus \id =
  \ridm{\amalg_2^\dag}$$
  and since $\dec{\amalg_1 \oplus \amalg_2} = \amalg_1 \oplus \amalg_2$, we have
  \begin{align*}
    \ridm{\amalg_1^\dag} \vee \ridm{\amalg_2^\dag} 
    & = \ridm{\amalg_1^\dag \comp
    (\amalg_1 \oplus \amalg_2)} \vee \ridm{\amalg_2^\dag \comp (\amalg_1 \oplus
    \amalg_2)} = \ridm{\amalg_1^\dag \comp \dec{\amalg_1 \oplus \amalg_2}} \vee
    \ridm{\amalg_2^\dag \comp \dec{\amalg_1 \oplus \amalg_2}} \\
    & = \ridm{\ridm{\amalg_1 \oplus \amalg_2}} = \ridm{\amalg_1 \oplus
    \amalg_2} = \ridm{\amalg_1} \oplus \ridm{\amalg_2} = \id \oplus \id = \id.
  \end{align*}
  Using this, we may prove \eqref{be3} (and later \eqref{be4}) as follows:
  \begin{align*}
    \dec{f} & = \id \comp \dec{f} = (\ridm{\amalg_1^\dag} \vee
    \ridm{\amalg_2^\dag}) \comp \dec{f} = (\ridm{\amalg_1^\dag} \comp \dec{f})
    \vee (\ridm{\amalg_2^\dag} \comp \dec{f})
    = (\amalg_1 \comp \amalg_1^\dag \comp \dec{f}) \vee (\amalg_2 \comp
    \amalg_2^\dag \comp \dec{f}) \\
    & = (\amalg_1 \comp \ridm{\amalg_1^\dag \comp \dec{f}}) \vee (\amalg_2 \comp
    \ridm{\amalg_2^\dag \comp \dec{f}})
  \end{align*}
  where $\amalg_1^\dag \comp \dec{f} = \ridm{\amalg_1^\dag \comp \dec{f}}$ 
  follows by \eqref{be1}. Using these again, we show \eqref{be4} by
  \begin{align*}
    \ridm{f} & = (\amalg_1^\dag \comp \dec{f}) \vee (\amalg_2^\dag \comp 
    \dec{f}) = \ridm{\amalg_1^\dag \comp \dec{f}} \vee \ridm{\amalg_2^\dag
    \comp \dec{f}}
    = \ridm{\ridm{\amalg_1^\dag} \comp \dec{f}} \vee \ridm{\ridm{\amalg_2^\dag}
    \comp \dec{f}}
    = \ridm{(\ridm{\amalg_1^\dag} \comp \dec{f}) \vee (\ridm{\amalg_2^\dag}
    \comp \dec{f})} \\
    & = \ridm{(\ridm{\amalg_1^\dag} \vee \ridm{\amalg_2^\dag})
    \comp \dec{f})} = \ridm{\id \comp \dec{f}} = \ridm{\dec{f}}
  \end{align*}
  which was what we wanted.
\end{proof}
% section quasi_extensive_restriction_categories (end)
%%% Local Variables:
%%% mode: latex
%%% TeX-master: "Kaarsgaard-Glueck-lmcs.tex"
%%% End:

% !TEX root = Kaarsgaard-Glueck-lmcs.tex

\section{Modelling structured reversible flowcharts} % (fold)
\label{sec:modelling_reversible_flowcharts}
In the following, let \C{} be an inverse category with (at least countable)
compatible joins and a join-preserving disjointness tensor. As disjoint joins
are compatible, it follows that \C{} is an extensive inverse category with a
(uniform) \dag-trace operator.

In this section, we will show how this framework can be used model reversible
structured flowchart languages. First, we will show how decisions in extensive
inverse categories can be used to model predicates, and then how this
representation extends to give very natural semantics to reversible flowcharts
corresponding to conditionals and loops. %Then we will use the
%``internal program inverter'' given by the canonical dagger functor on \C{} to
%extract a program inverter for reversible flowcharts. %Finally, as an example,
%we will introduce a family of reversible structured flowchart languages called
%$\mathtt{RINT}_k$, and use the developed tools to give models of these
%languages and derive fundamental theorems about them.

\subsection{Predicates as decisions} % (fold)
\label{sub:predicates_as_decisions}
In suitably equipped categories, one naturally considers predicates on an
object $A$ as given by maps $A \to 1+1$. In inverse categories, however, the
mere idea of a predicate as a map of the form $A \to 1 \oplus 1$ is
problematic, as only very degenerate maps of this form are partial
isomorphisms. In the following, we show how decisions give rise to an
unconventional yet ultimately useful representation of predicates. To our
knowledge this representation is novel, motivated here by the necessity to
model predicates in a reversible fashion, as decisions are always partial
isomorphisms.

The simplest useful predicates are the predicates that are always true
(respectively always false). By convention, we represent these by the left
(respectively right) injection (which are both their own decisions),
\begin{align*}
  \oxf{\stt} & = \amalg_1 \\
  \oxf{\sff} & = \amalg_2.
\end{align*}
Semantically, we may think of decisions as a separation of an object $A$ into
\emph{witnesses} and \emph{counterexamples} of the predicate it represents.
In a certain sense, the axioms of decisions say that there is nothing more to a
decision than how it behaves when postcomposed with $\amalg_1^\dag$ or
$\amalg_2^\dag$. As such, given the convention above, we think of
$\amalg_1^\dag \comp \dec{p}$ as the witnesses of the predicate represented by
the decision $\dec{p}$, and $\amalg_2^\dag \comp \dec{p}$ as its
counterexamples.

With this in mind, we turn to boolean combinators. The negation of a
predicate-as-a-decision must simply swap witnesses for counterexamples (and
vice versa). In other words, we obtain the negation of a decision by
postcomposing with the commutator $\gamma$ of the disjointness tensor,
\begin{equation*}
  \oxf{\snot{p}} = \gamma \comp \oxf{p}.
\end{equation*}
With this, it is straightforward to verify that, \eg, $\oxf{\snot{\stt}} = \oxf{\sff}$, as
\begin{equation*}
  \oxf{\snot{\stt}} = \gamma \comp \amalg_1 = \gamma \comp (\id \oplus \zero) \comp \rho^{-1} = (\zero \oplus \id) \comp \gamma \comp \rho^{-1}
   = (\zero \oplus \id) \comp \lambda^{-1} = \amalg_2 = \oxf{\sff}.
\end{equation*}
For conjunction, we exploit that our category has (specifically) finite
disjoint joins, and define the conjunction of predicates-as-decisions $\oxf{p}$ and $\oxf{q}$ by
\begin{equation*}
  \oxf{\sand{p}{q}} = \left((\amalg_1 \comp \ridm{\amalg_1^\dag \comp \oxf{p}}
  ~\comp \ridm{\amalg_1^\dag \comp \oxf{q}}) \vee (\amalg_2 \comp
  (\ridm{\amalg_2^\dag \comp \oxf{p}} \vee \ridm{\amalg_2^\dag \comp 
  \oxf{q}}))\right) \comp \ridm{\oxf{p}} ~\comp \ridm{\oxf{q}}.
\end{equation*}
The intuition behind this definition is that the witnesses of a conjunction of
predicates is given by the meet of the witnesses of the each predicate, while
the counterexamples of a conjunction of predicates is the join of the
counterexamples of each predicate. Note that this is then precomposed with $\ridm{\oxf{p}} ~\comp \ridm{\oxf{q}}$ to ensure that the result is only defined where both $p$ and $q$ are; this gives 

 Noting that the meet of two restriction
idempotents is given by their composition, this is precisely what this
definition states. Similarly we define the disjunction of $\oxf{p}$ and $\oxf{q}$ by
\begin{equation*}
  \oxf{p~\mathbf{or}~q} = \left((\amalg_1 \comp
(\ridm{\amalg_1^\dag \comp \oxf{p}} \vee \ridm{\amalg_1^\dag \comp \oxf{q}}))
\vee (\amalg_2 \comp \ridm{\amalg_2^\dag \comp \oxf{p}} ~\comp \ridm{\amalg_2^\dag \comp \oxf{q}})\right) \comp \ridm{\oxf{p}} ~\comp \ridm{\oxf{q}},
\end{equation*}
as $\oxf{p~\mathbf{or}~q}$ then has as witnesses the join of the witnesses of
$\oxf{p}$ and $\oxf{q}$, and as counterexamples the meet of the counterexamples
of $\oxf{p}$ and $\oxf{q}$. With these definitions, it can be shown that, \eg,
the De Morgan laws are satisfied. However, since we can thus construct this from conjunctions and negations, we will leave disjunctions as syntactic sugar.

That all of these are indeed decisions can be shown straightforwardly, as summarized in the following closure theorem.

\begin{thm}\label{thm:dec_closed}
  Decisions in \C{} are closed under Boolean negation, conjunction, and
  disjunction.
\end{thm}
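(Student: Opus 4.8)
The plan is to verify, for each of the three Boolean combinators, that the proposed definition $\oxf{\snot{p}}$, $\oxf{\sand{p}{q}}$, and $\oxf{p~\mathbf{or}~q}$ is genuinely a decision of some map into $B \oplus C$, i.e.\ that it satisfies axioms \textbf{(D'.1)} and \textbf{(D'.2)} of Definition~\ref{def:inv_ext}. Since decisions are unique when they exist, it suffices to exhibit, for each combinator, the underlying map of which the given formula is the decision, and then check the two axioms. Throughout I would lean heavily on Lemma~\ref{lem:basic_disj} (for the interaction of $\amalg_i$, $\amalg_i^\dag$ with identities and zeros) and Lemma~\ref{lem:basic_ext} (especially the characterization $\dec{f} = (\amalg_1 \comp \ridm{\amalg_1^\dag \comp \dec{f}}) \vee (\amalg_2 \comp \ridm{\amalg_2^\dag \comp \dec{f}})$ and the facts $\ridm{\amalg_i^\dag \comp \dec{f}} = \amalg_i^\dag \comp \dec{f}$ and $\ridm{f} = \ridm{\dec{f}}$), together with the join axioms of Definition~\ref{def:join_restriction_cat}.

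For negation the argument is the easiest: $\gamma$ is an isomorphism (the commutator of a symmetric monoidal tensor), so $\gamma \comp \oxf{p}$ is automatically a partial isomorphism, and I would check directly that postcomposing a decision with $\gamma$ yields the decision of $\gamma \comp f$ by computing $\amalg_i^\dag \comp \gamma = \amalg_{\bar\imath}^\dag$ (where $\bar 1 = 2$, $\bar 2 = 1$) from the defining triangle/hexagon identities, so that the witnesses and counterexamples are simply swapped; verifying \textbf{(D'.1)} and \textbf{(D'.2)} then reduces to the already-known axioms for $\oxf{p}$. For conjunction and disjunction, the plan is to use the characterization in Lemma~\ref{lem:basic_ext}\eqref{be3} as the \emph{definition schema} of a decision: any map of the form $(\amalg_1 \comp e_1) \vee (\amalg_2 \comp e_2)$ with $e_1, e_2$ restriction idempotents satisfying $e_1 \vee e_2 = \ridm{f}$ for the appropriate $f$ is exactly the decision of $f$. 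Thus I would identify $e_1 = \ridm{\amalg_1^\dag \comp \oxf{p}} \comp \ridm{\amalg_1^\dag \comp \oxf{q}}$ (the meet of the witness idempotents) and $e_2 = \ridm{\amalg_2^\dag \comp \oxf{p}} \vee \ridm{\amalg_2^\dag \comp \oxf{q}}$ (the join of the counterexample idempotents) for conjunction, modulo the outer restriction $\ridm{\oxf{p}} \comp \ridm{\oxf{q}}$, and verify that these are disjoint idempotents (so that the join exists) and that their join equals the restriction idempotent of the intended underlying map.

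The main obstacle I expect is twofold. First, one must confirm that the two branches really are \emph{disjoint} in the inverse-categorical sense of Definition~\ref{def:join_inv_cat}, so that the outer join $(\amalg_1 \comp \cdots) \vee (\amalg_2 \comp \cdots)$ is defined; this follows because the two summands land in the disjoint images of $\amalg_1$ and $\amalg_2$, using Lemma~\ref{lem:basic_disj}\eqref{bd3} that $\amalg_2^\dag \comp \amalg_1 = 0$, and because restriction idempotents are self-adjoint so disjointness of the inverses comes for free (as remarked in the text after Definition~\ref{def:inv_ext}). Second, and more delicately, one must show that the resulting map satisfies \textbf{(D'.1)}, i.e.\ that $e_1 \vee e_2$ equals the restriction idempotent $\ridm{\oxf{\sand{p}{q}}}$; this requires the distributivity of composition over joins (axioms \eqref{def:join_rest_cont} and \eqref{def:join_rest_compr}) to push the outer $\ridm{\oxf{p}} \comp \ridm{\oxf{q}}$ through, and careful bookkeeping to see that the meet-of-witnesses and join-of-counterexamples idempotents sum to the correct domain of definition. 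The disjunction case is then formally dual, swapping the roles of meet and join across the two branches, and may simply be deduced from conjunction and negation via the De Morgan law, as the surrounding text suggests we treat $\mathbf{or}$ as syntactic sugar.
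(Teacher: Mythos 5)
A preliminary remark: the paper never actually proves this theorem --- it only says the verification ``can be shown straightforwardly'' and states the result --- so your proposal has to be judged on its own merits. Your overall architecture is the natural one and is essentially workable: negation by postcomposition with $\gamma$ (and your identity $\amalg_1^\dag \comp \gamma = \amalg_2^\dag$, plus naturality of $\gamma$, is exactly what makes \textbf{(D'.1)} and \textbf{(D'.2)} reduce to the axioms for $\oxf{p}$), conjunction by exhibiting the formula in the canonical shape $(\amalg_1 \comp e_1) \vee (\amalg_2 \comp e_2)$ of Lemma~\ref{lem:basic_ext}\eqref{be3}, and disjunction via De Morgan. However, two of your supporting claims are false as stated. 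The first is your ``definition schema'': a map $(\amalg_1 \comp e_1) \vee (\amalg_2 \comp e_2)$ with $e_1, e_2$ restriction idempotents and $e_1 \vee e_2 = \ridm{f}$ is \emph{not} thereby ``exactly the decision of $f$'' --- axiom \textbf{(D'.2)} is precisely what ties a decision to its underlying map, and your criterion omits it. Indeed, swapping $e_1$ and $e_2$ preserves your hypotheses but produces $\gamma \comp \dec{f} = \dec{\gamma \comp f}$, not $\dec{f}$. The repair is cheap but must be made explicit: for closure you only need the map to be a decision \emph{of some map}, and any $(\amalg_1 \comp e_1) \vee (\amalg_2 \comp e_2)$ with $e_1 \comp e_2 = 0$ is its \emph{own} decision (take $f := d$; then $\amalg_i^\dag \comp d = e_i$ gives \textbf{(D'.1)}, and $d \comp e_i = \amalg_i \comp e_i$ gives \textbf{(D'.2)}). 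This self-decision lemma is the real engine of your argument, and you never state or prove it.

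The second gap is that your justification for the existence of the outer join is backwards. By Definition~\ref{def:join_inv_cat}, since $\amalg_2$ is total we have $(\amalg_1 \comp e_1) \comp \ridm{\amalg_2 \comp e_2} = \amalg_1 \comp e_1 \comp e_2$, so the \emph{first} disjointness condition is exactly $e_1 \comp e_2 = 0$; disjointness of the images of $\amalg_1$ and $\amalg_2$ gives you nothing here (in \PInj{}, two everywhere-defined injections with disjoint images have no join at all --- the union of their graphs is not a function). It is the condition on the \emph{inverses} that comes for free from $\amalg_1^\dag \comp \amalg_2 = 0$. Concretely, for conjunction you must prove $e_1 \comp e_2 = 0$ for $e_1 = a_1 \comp b_1$ and $e_2 = a_2 \vee b_2$, where $a_i = \ridm{\amalg_i^\dag \comp \oxf{p}}$ and $b_i = \ridm{\amalg_i^\dag \comp \oxf{q}}$, and this reduces to a fact your plan nowhere isolates: the witness and counterexample idempotents of a single decision are disjoint, $a_1 \comp a_2 = 0$. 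This is true but needs its own argument, e.g.\ $a_1 \comp a_2 = \amalg_1^\dag \comp \oxf{p} \comp \ridm{\amalg_2^\dag \comp \oxf{p}} = \amalg_1^\dag \comp \ridm{\amalg_2^\dag} \comp \oxf{p} = \amalg_1^\dag \comp (0 \oplus \id) \comp \oxf{p} = 0$, using Lemma~\ref{lem:basic_ext}\eqref{be1}, the fourth axiom of Definition~\ref{def:rest_cat}, and Lemma~\ref{lem:basic_disj}\eqref{bd2}. With these two repairs (the self-decision lemma and the corrected disjointness computation), your outline does yield a complete proof.
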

% subsection predicates_as_decisions (end)

\subsection{Reversible structured flowcharts, categorically} % (fold)
\label{sub:reversible_structured_flowcharts_categorically}
To give a categorical account of structured reversible flowchart languages, we
assume the existence of a suitable distinguished object $\Sigma$ of stores,
which we think of as the \emph{domain of computation}, such that we may give
denotations to structured reversible flowcharts as morphisms $\Sigma \to
\Sigma$.

Since atomic steps (corresponding to elementary operations, \eg, store updates)
may vary from language to language, we assume that each such atomic step in our
language has a denotation as a morphism $\Sigma \to \Sigma$. In the realm of reversible flowcharts, these atomic steps are required to be partial injective functions; here, we abstract this to require that their denotation is a partial isomorphism (though this is a trivial requirement in inverse categories). 

Likewise, elementary predicates (\eg, comparison of values in a store) may vary
from language to language, so we assume that such elementary predicates have
denotations as well as decisions $\Sigma \to \Sigma \oplus \Sigma$. If
necessary (as is the case for Janus~\cite{YoGl07a}), we may then close these
elementary predicates under boolean combinations as discussed in the previous
section.

To start, we note how sequencing of flowcharts may be modelled trivially by
means of composition, \ie,
\begin{equation*}
  \oxf{\sseq{c_1}{c_2}} = \oxf{c_2} \comp \oxf{c_1}
\end{equation*}
or, using the diagrammatic notation of flowcharts and the string diagrams for monoidal categories in the style of \cite{Selinger2011} (read left-to-right and bottom-to-top),
\setlength{\unitlength}{1.6mm}
$$
\oxf{\input{Sequence_r.pic}} \enspace 
= \enspace \begin{tikzpicture}[baseline]
\node[draw, minimum size=6mm] (f) at (0,0.1) {\small $\oxf{c_1}$};
\node[draw, minimum size=6mm] (g) at (1.5,0.1) {\small $\oxf{c_2}$};

\draw[->] (-1.13,0.1) to node {} (-0.75,0.1);
\draw[-] (-0.76,0.1) to node {} (-0.38,0.1);
\draw[->] (0.39,0.1) to node {} (0.8,0.1);
\draw[-] (0.79,0.1) to node {} (1.11,0.1);
\draw[->] (1.88,0.1) to node {} (2.255,0.1);
\draw[-] (2.24,0.1) to node {} (2.61,0.1);
\end{tikzpicture} \enspace.
$$
Intuitively, a decision separates an object into
witnesses (in the first component) and counterexamples (in the second). As
such, the partial inverse to a decision must be defined only on witnesses in
the first component, and only on counterexamples in the second. But then, where decisions model \emph{predicates}, codecisions (\ie, partial inverses to
decisions) model \emph{assertions}.

With this in mind, we achieve a denotation of reversible conditionals as
\setlength{\unitlength}{1.9mm}
\begin{equation*}
  \oxf{\sif{p}{c_1}{c_2}{q}} =
  \oxf{q}^\dag \comp (\oxf{c_1} \oplus \oxf{c_2}) \comp \oxf{p}
\end{equation*}
or, as diagrams
$$
\oxf{\input{Selection_r.pic}} \enspace
= \enspace \begin{tikzpicture}[baseline]
\node[draw, minimum height=15mm] (p) at (0,0) {\small $\oxf{p}$};
\node[draw, minimum size=6mm] (g) at (1.5,0.475) {\small $\oxf{c_2}$};
\node[draw, minimum size=6mm] (f) at (1.5,-0.475) {\small $\oxf{c_1}$};
\node[draw, minimum height=15mm] (q) at (3,0) {\small $\oxf{q}^\dag$};

\draw[->] (-1,-0.475) to node {} (-0.65,-0.475);
\draw[-] (-0.66,-0.475) to node {} (-0.32,-0.475);

\draw[->] (0.325,0.475) to node {} (0.8,0.475);
\draw[-] (0.79,0.475) to node {} (g);
\draw[->] (0.325,-0.475) to node {} (0.8,-0.475);
\draw[-] (0.79,-0.475) to node {} (f);
\draw[->] (g) to node {} (2.25,0.475);
\draw[-] (2.24,0.475) to node {} (2.62,0.475);
\draw[->] (f) to node {} (2.25,-0.475);
\draw[-] (2.24,-0.475) to node {} (2.62,-0.475);

\draw[->] (3.38,-0.475) to node {} (3.69,-0.475);
\draw[-] (3.68,-0.475) to node {} (4,-0.475);
\end{tikzpicture} \enspace.
$$

To give the denotation of reversible loops, we use the \dag-trace operator. Defining a shorthand for the body of the loop as
\begin{equation*}
  \beta[p,c,q] = (\id_\Sigma \oplus \oxf{c}) \comp \oxf{q} \comp
  \oxf{p}^\dag
\end{equation*}
we obtain the denotation
\begin{equation*}
  \oxf{\sfrom{p}{c}{q}} = \Tr_{\Sigma,\Sigma}^\Sigma(\beta[p,c,q]) = 
  \Tr_{\Sigma,\Sigma}^\Sigma((\id_\Sigma \oplus \oxf{c}) \comp \oxf{q} \comp
  \oxf{p}^\dag)
\end{equation*}
or diagrammatically
$$
\oxf{\input{WhileLoop_r.pic}} \enspace
= \enspace \begin{tikzpicture}[baseline]
\node[draw, minimum height=15mm] (q) at (0,0) {\small $\oxf{p}^\dag$};
\node[draw, minimum height=15mm] (p) at (1.5,0) {\small $\oxf{q}$};
\node[draw, minimum size=6mm] (f) at (3,0.475) {\small $\oxf{c}$};

\draw[-] (-0.7,0.475) to node {} (-0.38,0.475);

\draw[->] (-1,-0.475) to node {} (-0.66,-0.475);
\draw[-] (-0.67,-0.475) to node {} (-0.38,-0.475);
\draw[->] (0.38,-0.475) to node {} (0.78,-0.475);
\draw[-] (0.77,-0.475) to node {} (1.18,-0.475);
\draw[->] (1.82,0.475) to node {} (2.26,0.475);
\draw[-] (2.25,0.475) to node {} (2.7,0.475);
\draw[->] (1.82,-0.475) to node {} (2.86,-0.475);
\draw[-] (2.85,-0.475) to node {} (4,-0.475);

\draw[-] (3.3,0.475) to node {} (3.627,0.475);
\draw[-] (3.62,0.475) to node {} (3.62,1.257);
\draw[->] (3.62,1.25) to node {} (1.5,1.25);
\draw[-] (1.6,1.25) to node {} (-0.707,1.25);
\draw[-] (-0.7,1.25) to node {} (-0.7,0.468);
\end{tikzpicture} \enspace.
$$
That this has the desired operational behavior follows from the fact that the
\dag-trace operator is canonically constructed in join inverse categories as
%using the trace formula with the join as partial addition, \ie
\begin{equation*}
  \Tr_{X,Y}^U(f) = f_{11} \vee \bigvee_{n \in \omega} f_{21} \comp f_{22}^n
  \comp f_{12} \enspace.
\end{equation*}
Recall that $f_{ij} = \amalg_j^\dag \comp f \comp \amalg_i$. As such, for our
loop construct defined above, the $f_{11}$-cases correpond to cases where a
given state bypasses the loop entirely; $f_{21} \comp f_{12}$ (that is, for
$n=0$) to cases where exactly one iteration is performed by a given state
before exiting the loop; $f_{21} \comp f_{22} \comp f_{12}$ to cases where two
iterations are performed before exiting; and so on. In this way, the given
trace semantics contain all successive loop unrollings, as desired. We will
make this more formal in the following section, where we show computational
soundness and adequacy for these with respect to the operational semantics.

In order to be able to provide a correspondence between categorical and operational semantics, we also need an interpretation of the meta-command $\underline{\mathit{loop}}$. While it may not be so clear at the present, it turns out that the appropriate one is
\begin{equation*}
  \oxf{\sloop{p}{c}{q}} = \bigvee_{n \in \omega} \beta[p,c,q]_{21} \comp 
  \beta[p,c,q]_{22}^n \enspace.
\end{equation*}

While it may seem like a small point, the mere existence of a categorical
semantics in inverse categories for a reversible programming language has some
immediate benefits. In particular, that a programming language is reversible
can be rather complicated to show by means of operational semantics (see, \eg,
\cite[Sec.~2.3]{YoGl07a}), yet it follows directly in our categorical
semantics, as it is compositional and all morphisms in inverse categories have a unique partial
inverse. %Additionally, reversible loops are significantly easier to work with
%categorically as \dag-traces than operationally; the operational semantics of
%Janus~\cite[Fig.~4]{YoGl07a} demonstrate adequately just how difficult it
%can be to give an operational semantics for reversible loops.
% subsection reversible_structured_flowcharts_categorically (end)

% section reversible_flowcharts (end)
%%% Local Variables:
%%% mode: latex
%%% TeX-master: "Kaarsgaard-Glueck-lmcs.tex"
%%% End:

% !TEX root = Kaarsgaard-Glueck-lmcs.tex

\section{Computational soundness and adequacy} % (fold)
\label{sec:soundness_and_adequacy}
Computational soundness and adequacy (see, \eg, \cite{Fiore1994})
are the two fundamental properties of operational semantics with respect to
their denotational counterparts, as soundness and completeness are for proof
systems with respect to their semantics. In brief, computational soundness and
adequacy state that the respective notions of \emph{convergence} of the
operational and denotational semantics are in agreement.

In the operational semantics, the notion of convergence seems straightforward:
a program $c$ converges in a state $\sigma$ if there exists another state
$\sigma'$ such that $\bigstepc{\sigma}{c}{\sigma'}$. On the denotational side,
it seems less obvious what a proper notion of convergence is.

An idea (used by, \eg, Fiore~\cite{Fiore1994}) is to let values (in this case,
states) be interpreted as \emph{total} morphisms from some sufficiently simple
object $I$ into an appropriate object $V$ (here, we will use our object
$\Sigma$ of states). In this context, the notion of convergence for a program
$p$ in a state $\sigma$ is then that the resulting morphism $\oxf{p} \comp
\oxf{\sigma}$ is, again, (the denotation of) a state -- \ie, it is total.
Naturally, this approach requires machinery to separate total maps from partial
ones. As luck would have it inverse categories fit the bill perfectly, as they
can be regarded as special instances of restriction categories.

To make this idea more clear in the current context, and to allow us to use the
established formulations of computational soundness and adequacy,
we define a model of a structured reversible flowchart language to be the
following:

\begin{defi}
  A model of a structured reversible flowchart language $\mathcal{L}$ consists 
  of a join inverse category $\C$ with a disjointness tensor, further equipped 
  with distinguished objects $I$ and $\Sigma$ satisfying
  \begin{enumerate}[(i), ref={\roman*}]
    \item the identity and zero maps on $I$ are distinct, \ie, $\id_I \neq 
    0_{I,I}$,
    \item if $I \tot{e} I$ is a restriction idempotent then $e = \id_I$ or $e = 
    0_{I,I}$, and
    \item each $\mathcal{L}$-state $\sigma$ is interpreted as a 
    \emph{total} morphism $I \tot{\oxf{\sigma}} \Sigma$.%, and
  \end{enumerate}
\end{defi}

Here, we think of $I$ as the \emph{indexing object}, and $\Sigma$ as the
\emph{object of states}. In irreversible programming languages, the first two
conditions in the definition above are often left out, as the indexing object
is typically chosen to be the terminal object $1$. However, terminal
objects are degenerate in inverse categories, as they always coincide with the
initial object when they exist -- that is, they are zero objects. For this
reason, we require instead the existence of a sufficiently simple indexing
object, as described by these two properties. For example, in \PInj, any
one-element set will satisfy these conditions. 

Even further, the third condition is typically proven rather than assumed. We
include it here as an assumption since structured reversible flowchart
languages may take many different forms, and we have no way of knowing how the
concrete states are formed. As such, rather than limiting ourselves to
languages where states take a certain form in order to show totality of
interpretation, we instead assume it to be able to show properties about more programming languages.

This also leads us to another important point: We are only able to show
computational soundness and adequacy for the operational
semantics as they are stated, \ie, we are not able to take into account the
specific atomic steps (besides $\sskip$) or elementary predicates of the
language.

As such, computational soundness and adequacy (and what may
follow from that) should be understood \emph{conditionally}: If a structured
reversible flowchart language has a model of the form above \emph{and} it is
computationally sound and adequate with respect to its atomic steps and
elementary predicates, then the entire interpretation is sound and adequate as
well.

We begin by recalling the definition of the denotation of predicates and commands in a model of a structured reversible flowchart language from Section~\ref{sec:modelling_reversible_flowcharts}.

\begin{defi}\label{def:predicates}
  Recall the interpretation of predicates in $\mathcal{L}$ as decisions in \C:
  \begin{enumerate}[(i), ref={\roman*}]
    \item $\oxf{\stt} = \amalg_1$,
    \item $\oxf{\sff} = \amalg_2$,
    \item $\oxf{\snot{p}} = \gamma \comp \oxf{p}$,
    \item $\oxf{\sand{p}{q}} = \left((\amalg_1 \comp \ridm{\amalg_1^\dag
    \comp \oxf{p}} ~\comp \ridm{\amalg_1^\dag \comp \oxf{q}}) \vee
    (\amalg_2 \comp (\ridm{\amalg_2^\dag \comp \oxf{p}} \vee
    \ridm{\amalg_2^\dag \comp \oxf{q}}))\right) \comp
    \ridm{\oxf{p}} ~\comp \ridm{\oxf{q}}$.
  \end{enumerate}
\end{defi}

\begin{defi}\label{def:commands}
  Recall the interpretation of commands in $\mathcal{L}$ (and the meta-command 
  $\underline{\mathit{loop}}$) as morphisms $\Sigma 
  \to \Sigma$ in \C:
  \begin{enumerate}[(i), ref={\roman*}]
    \item $\oxf{\sskip} = \id_\Sigma$, 
    \item $\oxf{\sseq{c_1}{c_2}} = \oxf{c_2} \comp \oxf{c_1}$, 
    \item $\oxf{\sif{p}{c_1}{c_2}{q}} = \oxf{q}^\dag \comp (\oxf{c_1} \oplus 
    \oxf{c_2}) \comp \oxf{p}$,
    \item $\oxf{\sfrom{p}{c}{q}} = \Tr_{\Sigma,\Sigma}^\Sigma(\beta[p,c,q])$, 
    and
    \item $\oxf{\sloop{p}{c}{q}} = \bigvee_{n \in \omega} 
    \beta[p,c,q]_{21} \comp \beta[p,c,q]_{22}^n$
  \end{enumerate}
  where $\beta[p,c,q] = (\id_\Sigma \oplus \oxf{c}) \comp \oxf{q} \comp
  \oxf{p}^\dag : \Sigma \oplus \Sigma \to \Sigma \oplus \Sigma$. Note also
  that $f_{ij} = \amalg_j^\dag \comp f \comp \amalg_i$.
\end{defi}

The overall strategy we will use to show computational soundness and
computational adequacy for programs is to start by showing it for predicates.
To begin to tackle this, we first need a lemma regarding the totality of
predicates.

\begin{lem}\label{lem:total_pred}
  Let $p$ and $q$ be $\mathcal{L}$-predicates. It is the case that
  \begin{enumerate}[(i), ref={\roman*}]
    \item\label{prtot:const} $I \tot{\oxf{\sigma}} \Sigma \tot{\oxf{\stt}} 
    \Sigma \oplus \Sigma$ and $I \tot{\oxf{\sigma}} \Sigma \tot{\oxf{\sff}} 
    \Sigma \oplus \Sigma$ are total,
    \item\label{prtot:not} $I \tot{\oxf{\sigma}} \Sigma \tot{\oxf{\snot{p}}} 
    \Sigma \oplus
    \Sigma$ is total iff $I \tot{\oxf{\sigma}} \Sigma \tot{\oxf{p}} 
    \Sigma \oplus \Sigma$ is, and
    \item\label{prtot:and} $I \tot{\oxf{\sigma}} \Sigma
    \tot{\oxf{\sand{p}{q}}} \Sigma \oplus \Sigma$ is total iff $I
    \tot{\oxf{\sigma}} \Sigma \tot{\oxf{p}} \Sigma \oplus \Sigma$ and $I
    \tot{\oxf{\sigma}} \Sigma \tot{\oxf{q}} \Sigma \oplus \Sigma$ both are.
  \end{enumerate}
\end{lem}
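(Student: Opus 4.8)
The plan is to handle the three claims in increasing order of difficulty, in each case reducing totality of the composite to a computation of restriction idempotents on the indexing object $I$, where the two special properties of $I$ (namely $\id_I \neq 0_{I,I}$, and that its only restriction idempotents are $\id_I$ and $0_{I,I}$) let us read off totality of a product of such idempotents.

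The first two claims are quick. For \eqref{prtot:const}, since $\oxf{\stt} = \amalg_1$ with $\ridm{\amalg_1} = \id$ by Lemma~\ref{lem:basic_disj}\eqref{bd1}, the quasi-injection $\amalg_1$ is total; as $\oxf{\sigma}$ is total by assumption, Lemma~\ref{lem:basic_ridm}\eqref{br5} gives $\ridm{\amalg_1 \comp \oxf{\sigma}} = \ridm{\oxf{\sigma}} = \id_I$, and $\oxf{\sff} = \amalg_2$ is identical. For \eqref{prtot:not}, recall $\oxf{\snot{p}} = \gamma \comp \oxf{p}$, where the commutator $\gamma$ is an isomorphism and hence total; Lemma~\ref{lem:basic_ridm}\eqref{br5} then yields $\ridm{\oxf{\snot{p}} \comp \oxf{\sigma}} = \ridm{\gamma \comp \oxf{p} \comp \oxf{\sigma}} = \ridm{\oxf{p} \comp \oxf{\sigma}}$, so the two composites are total together.

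The substance is in \eqref{prtot:and}, which I would prove via two independent reductions. The first, purely about predicates, is the identity $\ridm{\oxf{\sand{p}{q}}} = \ridm{\oxf{p}} \comp \ridm{\oxf{q}}$. Writing $\oxf{\sand{p}{q}} = M \comp \ridm{\oxf{p}} \comp \ridm{\oxf{q}}$ with $M$ the outer join of Definition~\ref{def:predicates}, the axiom $\ridm{f \comp \ridm{g}} = \ridm{f} \comp \ridm{g}$ (applied twice) gives $\ridm{\oxf{\sand{p}{q}}} = \ridm{M} \comp \ridm{\oxf{p}} \comp \ridm{\oxf{q}}$. Setting $p_i = \ridm{\amalg_i^\dag \comp \oxf{p}}$ and $q_i = \ridm{\amalg_i^\dag \comp \oxf{q}}$, the join-restriction law Definition~\ref{def:join_restriction_cat}\eqref{def:join_rest_cont} together with totality of $\amalg_1,\amalg_2$ computes $\ridm{M} = (p_1 \comp q_1) \vee p_2 \vee q_2$. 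On the other hand, axiom \textbf{(D'.1)} and Lemma~\ref{lem:basic_ext}\eqref{be1} give $\ridm{\oxf{p}} = p_1 \vee p_2$ and $\ridm{\oxf{q}} = q_1 \vee q_2$, so distributing composition over joins expands $\ridm{\oxf{p}} \comp \ridm{\oxf{q}}$ as $(p_1 \comp q_1) \vee (p_1 \comp q_2) \vee (p_2 \comp q_1) \vee (p_2 \comp q_2)$, and each of these four terms lies below one of the three joinands of $\ridm{M}$. Hence $\ridm{\oxf{p}} \comp \ridm{\oxf{q}} \le \ridm{M}$, which for restriction idempotents means $\ridm{M} \comp \ridm{\oxf{p}} \comp \ridm{\oxf{q}} = \ridm{\oxf{p}} \comp \ridm{\oxf{q}}$, establishing the identity.

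The second reduction uses this identity to compute, with $\ridm{\ridm{g} \comp f} = \ridm{g \comp f}$ (Lemma~\ref{lem:basic_ridm}\eqref{br4}) and the axioms $\ridm{g} \comp f = f \comp \ridm{g \comp f}$ and $\ridm{f \comp \ridm{g}} = \ridm{f} \comp \ridm{g}$,
\begin{equation*}
  \ridm{\oxf{\sand{p}{q}} \comp \oxf{\sigma}}
  = \ridm{\ridm{\oxf{p}} \comp \ridm{\oxf{q}} \comp \oxf{\sigma}}
  = \ridm{\oxf{p} \comp \oxf{\sigma}} \comp \ridm{\oxf{q} \comp \oxf{\sigma}}.
\end{equation*}
Since $\ridm{\oxf{p} \comp \oxf{\sigma}}$ and $\ridm{\oxf{q} \comp \oxf{\sigma}}$ are restriction idempotents on $I$, each equals $\id_I$ or $0_{I,I}$; as $\id_I \neq 0_{I,I}$, their composite is $\id_I$ exactly when both are, which shows $\oxf{\sand{p}{q}} \comp \oxf{\sigma}$ is total iff both $\oxf{p} \comp \oxf{\sigma}$ and $\oxf{q} \comp \oxf{\sigma}$ are. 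The main obstacle is the predicate identity $\ridm{\oxf{\sand{p}{q}}} = \ridm{\oxf{p}} \comp \ridm{\oxf{q}}$: one must navigate the nested joins in the definition of conjunction, exploit that each $\amalg_i^\dag \comp \oxf{p}$ is its own restriction idempotent, and check that all four cross-terms of $\ridm{\oxf{p}} \comp \ridm{\oxf{q}}$ are absorbed into $\ridm{M}$. Everything after that is a routine unwinding of the restriction axioms followed by the two defining properties of the indexing object.
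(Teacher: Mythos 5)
Your proof is correct, and for the substantive part \eqref{prtot:and} it takes a genuinely different route from the paper's. Both proofs reduce \eqref{prtot:and} to the same key identity $\ridm{\oxf{\sand{p}{q}} \comp \oxf{\sigma}} = \ridm{\oxf{p} \comp \oxf{\sigma}} ~\comp \ridm{\oxf{q} \comp \oxf{\sigma}}$ (and your handling of \eqref{prtot:const} and \eqref{prtot:not} is the same as the paper's), but the paper proves that identity by a long equational chain carried out entirely at the state level: it first establishes $\ridm{\oxf{p} \comp \oxf{\sigma}} = \ridm{\amalg_1^\dag \comp \oxf{p} \comp \oxf{\sigma}} \vee \ridm{\amalg_2^\dag \comp \oxf{p} \comp \oxf{\sigma}}$, then pushes $\oxf{\sigma}$ through the nested joins of $\oxf{\sand{p}{q}}$, and recombines everything by distributivity and idempotence of joins over roughly twenty steps. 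You instead factor out the state-free identity $\ridm{\oxf{\sand{p}{q}}} = \ridm{\oxf{p}} ~\comp \ridm{\oxf{q}}$, proved order-theoretically (each cross-term $p_i \comp q_j$ of $\ridm{\oxf{p}} \comp \ridm{\oxf{q}}$ is absorbed below a joinand of $\ridm{M}$), and then reinstate the state with a two-line computation from the restriction axioms; I checked the absorption argument and the final computation, and both are sound --- indeed your reduction holds for an arbitrary morphism $s : I \to \Sigma$, total or not. Your version is more modular and arguably more illuminating: it isolates the reusable fact that the domain of definition of a conjunction is the meet of the domains of its conjuncts. What the paper's brute-force expansion buys in exchange is that its intermediate equations \eqref{paqs1}--\eqref{paqs5} are recycled verbatim in the conjunction cases of the preservation lemma (Lemma~\ref{lem:preservation_pred}), so its longer computation does double duty. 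Two small remarks: your decomposition $\ridm{\oxf{p}} = p_1 \vee p_2$ implicitly also needs $\ridm{\oxf{p}} = \ridm{f}$ for the underlying map $f$ of the decision (Lemma~\ref{lem:basic_ext}\eqref{be4}, or equivalently derive it from \eqref{be3}); and your final step invokes the two model axioms on $I$, where the paper gets by with the general restriction-category fact that $\ridm{g} \comp \ridm{f} = \id$ forces $\ridm{f} = \ridm{g} = \id$ --- both are valid, the paper's just needs fewer hypotheses.
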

\begin{proof}
  For \eqref{prtot:const}, it follows that $$\ridm{\oxf{\stt}\comp \oxf{\sigma}}
  = \ridm{\amalg_1 \comp \oxf{\sigma}} = \ridm{\ridm{\amalg_1} \comp
  \oxf{\sigma}} = \ridm{\id_\Sigma \comp \oxf{\sigma}} = \ridm{\oxf{\sigma}} =
  \id_I,$$ where the final equality follows by the definition of a model. The 
  case for $\sff$ is entirely analogous.
  
  For \eqref{prtot:not}, we have that $$\ridm{\oxf{\snot{p}} \comp
  \oxf{\sigma}} = \ridm{\gamma \comp \oxf{p} \comp \oxf{\sigma}} =
  \ridm{\ridm{\gamma} \comp \oxf{p} \comp \oxf{\sigma}} = \ridm{\id_{\Sigma
  \oplus \Sigma} \comp \oxf{p} \comp \oxf{\sigma}} = \ridm{\oxf{p} \comp
  \oxf{\sigma}}$$ which implies directly that $I \tot{\oxf{\sigma}} \Sigma
  \tot{\oxf{\snot{p}}} \Sigma \oplus \Sigma$ is total iff $I \tot{\oxf{\sigma}}
  \Sigma \tot{\oxf{p}} \Sigma \oplus \Sigma$ is.
  
  For \eqref{prtot:and}, it suffices to show that $\ridm{\oxf{\sand{p}{q}} 
  \comp \oxf{\sigma}} = \ridm{\oxf{p} \comp \oxf{\sigma}} ~\comp \ridm{\oxf{q}
  \comp \oxf{\sigma}}$, since $\ridm{\oxf{p} \comp \oxf{\sigma}} = 
  \ridm{\oxf{q} \comp \oxf{\sigma}} = \id_I$ then yields
  $\ridm{\oxf{\sand{p}{q}} \comp \oxf{\sigma}} = \id_I \comp \id_I = \id_I$ 
  directly; the other direction follows by the fact that if $\ridm{g} \comp 
  \ridm{f} = \id$ then $\id = \ridm{g} \comp \ridm{f} = \ridm{g} \comp \ridm{f} 
  \comp~ \ridm{f} = \id \comp \ridm{f} = \ridm{f}$ (and analogously for $g$).
  
  We start by observing that
  \begin{align}
    \ridm{\oxf{p} \comp \oxf{\sigma}} & =
    \ridm{((\amalg_1 \comp \ridm{\amalg_1^\dag \comp \oxf{p}}) \vee (\amalg_2 
    \comp \ridm{\amalg_2^\dag \comp \oxf{p}})) \comp \oxf{\sigma}} \label{ps1}\\
    & = \ridm{(\amalg_1 \comp \ridm{\amalg_1^\dag \comp \oxf{p}} \comp 
    \oxf{\sigma}) \vee (\amalg_2 
    \comp \ridm{\amalg_2^\dag \comp \oxf{p}} \comp \oxf{\sigma})} \label{ps2}\\
    & = \ridm{(\amalg_1 \comp \oxf{\sigma} \comp \ridm{\amalg_1^\dag \comp
    \oxf{p} \comp \oxf{\sigma}}) \vee (\amalg_2 \comp \oxf{\sigma} \comp
    \ridm{\amalg_2^\dag \comp \oxf{p} \comp \oxf{\sigma}})} \label{ps3}\\
    & = \ridm{(\amalg_1 \comp \oxf{\sigma} \comp \ridm{\amalg_1^\dag \comp
    \oxf{p} \comp \oxf{\sigma}})} \vee \ridm{(\amalg_2 \comp \oxf{\sigma} \comp
    \ridm{\amalg_2^\dag \comp \oxf{p} \comp \oxf{\sigma}})} \label{ps4}\\
    & = \ridm{(\ridm{\amalg_1 \comp \oxf{\sigma}} \comp \ridm{\amalg_1^\dag
    \comp \oxf{p} \comp \oxf{\sigma}})} \vee \ridm{(\ridm{\amalg_2 \comp
    \oxf{\sigma}} \comp \ridm{\amalg_2^\dag \comp \oxf{p} \comp
    \oxf{\sigma}})} \label{ps5}\\
    & = \ridm{\ridm{\amalg_1^\dag \comp \oxf{p} \comp \oxf{\sigma}}} \vee
    \ridm{\ridm{\amalg_2^\dag \comp \oxf{p} \comp \oxf{\sigma}}} \label{ps6}\\
    & = \ridm{\amalg_1^\dag \comp \oxf{p} \comp \oxf{\sigma}} \vee
    \ridm{\amalg_2^\dag \comp \oxf{p} \comp \oxf{\sigma}} \label{ps7}
  \end{align}
  where \eqref{ps1} follows by Lemma~\ref{lem:basic_ext}, \eqref{ps2} by
  distributivity of composition of joins, \eqref{ps3} by the fourth axiom of
  restriction categories (see Definition~\ref{def:rest_cat}), \eqref{ps4} by
  distributivity of restriction over joins (see
  Definition~\ref{def:join_inv_cat}), \eqref{ps5} by Lemma~\ref{lem:basic_ridm},
  \eqref{ps6} since $\ridm{\amalg_1 \comp \oxf{\sigma}} = \ridm{\amalg_2 \comp
  \oxf{\sigma}} = \id_I$ follows by \eqref{prtot:const}, and \eqref{ps7} by
  Lemma~\ref{lem:basic_ridm}.
  We may establish by analogous argument that $$\ridm{\oxf{q} \comp
  \oxf{\sigma}} = \ridm{\amalg_1^\dag \comp \oxf{q} \comp \oxf{\sigma}} \vee
  \ridm{\amalg_2^\dag \comp \oxf{q} \comp \oxf{\sigma}}$$ as well. In the
  following, let $\sigma_p = \oxf{p} \comp \oxf{\sigma}$ and $\sigma_q =
  \oxf{q} \comp \oxf{\sigma}$. We start by computing
  \begin{align}
    \oxf{\sand{p}{q}} \comp \oxf{\sigma} & = ((\amalg_1 \comp
    \ridm{\amalg_1^\dag \comp \oxf{p}} ~\comp \ridm{\amalg_1^\dag \comp
    \oxf{q}}) \vee (\amalg_2 \comp (\ridm{\amalg_2^\dag \comp
    \oxf{p}} \vee \ridm{\amalg_2^\dag \comp \oxf{q}}))) \comp
    \ridm{\oxf{p}} ~\comp \ridm{\oxf{q}} \comp \oxf{\sigma} \label{paqs1}\\
    & = ((\amalg_1 \comp \ridm{\amalg_1^\dag \comp \oxf{p}} ~\comp
    \ridm{\amalg_1^\dag \comp \oxf{q}}) \vee (\amalg_2 \comp
    (\ridm{\amalg_2^\dag \comp \oxf{p}} \vee \ridm{\amalg_2^\dag \comp
    \oxf{q}}))) \comp \oxf{\sigma} \comp \ridm{\oxf{p} \comp
    \oxf{\sigma}} ~\comp \ridm{\oxf{q} \comp \oxf{\sigma}} \label{paqs2}\\
    & = ((\amalg_1 \comp \ridm{\amalg_1^\dag \comp \oxf{p}} ~\comp
    \ridm{\amalg_1^\dag \comp \oxf{q}} \comp \oxf{\sigma}) \vee
    (\amalg_2 \comp (\ridm{\amalg_2^\dag \comp \oxf{p}} \vee
    \ridm{\amalg_2^\dag \comp \oxf{q}}) \comp \oxf{\sigma})) \comp
    \ridm{\sigma_p} ~\comp \ridm{\sigma_q} \label{paqs3}\\
    & = ((\amalg_1 \comp \oxf{\sigma} \comp \ridm{\amalg_1^\dag \comp
    \oxf{p} \comp \oxf{\sigma}} ~\comp \ridm{\amalg_1^\dag \comp \oxf{q} \comp
    \oxf{\sigma}}) \vee \nonumber \\ 
    & \qquad (\amalg_2 \comp \oxf{\sigma} \comp
    (\ridm{\amalg_2^\dag \comp \oxf{p} \comp \oxf{\sigma}} \vee
    \ridm{\amalg_2^\dag \comp \oxf{q} \comp \oxf{\sigma}}))) \comp
    \ridm{\sigma_p} ~\comp \ridm{\sigma_q} \label{paqs4}\\
    & = ((\amalg_1 \comp \oxf{\sigma} \comp \ridm{\amalg_1^\dag \comp
    \sigma_p} ~\comp \ridm{\amalg_1^\dag \comp \sigma_q}) \vee
    (\amalg_2 \comp \oxf{\sigma} \comp (\ridm{\amalg_2^\dag \comp
    \sigma_p} \vee \ridm{\amalg_2^\dag \comp \sigma_q}))) \comp
    \ridm{\sigma_p} ~\comp \ridm{\sigma_q} \label{paqs5}
  \end{align}
  where \eqref{paqs1} follows by the definition of $\oxf{\sand{p}{q}}$ (see
  Definition~\ref{def:predicates}), \eqref{paqs2} by two applications of the
  fourth axiom of restriction categories (see Definition~\ref{def:rest_cat}),
  \eqref{paqs3} by distributivity of composition over joins (see
  Definition~\ref{def:join_inv_cat}) and the definitions of $\sigma_p$ and
  $\sigma_q$, \eqref{paqs4} by repeated applications of the fourth axiom of
  restriction categories, and \eqref{paqs5} by definition of $\sigma_p$ and
  $\sigma_q$.
  But then we have
  \normalsize
  \begin{align}
    \ridm{\oxf{\sand{p}{q}} \comp \oxf{\sigma}} 
    & = \ridm{((\amalg_1 \comp \oxf{\sigma} \comp \ridm{\amalg_1^\dag \comp
    \sigma_p} ~\comp \ridm{\amalg_1^\dag \comp \sigma_q}) \vee
    (\amalg_2 \comp \oxf{\sigma} \comp (\ridm{\amalg_2^\dag \comp
    \sigma_p} \vee \ridm{\amalg_2^\dag \comp \sigma_q}))) \comp
    \ridm{\sigma_p} ~\comp \ridm{\sigma_q}} \label{rpaqs1}\\
    & = \ridm{\ridm{((\amalg_1 \comp \oxf{\sigma} \comp \ridm{\amalg_1^\dag
    \comp \sigma_p} ~\comp \ridm{\amalg_1^\dag \comp \sigma_q}) \vee
    (\amalg_2 \comp \oxf{\sigma} \comp (\ridm{\amalg_2^\dag \comp
    \sigma_p} \vee \ridm{\amalg_2^\dag \comp \sigma_q})))} \comp
    \ridm{\sigma_p} ~\comp \ridm{\sigma_q}} \label{rpaqs2}\\
    & = \ridm{\ridm{((\amalg_1 \comp \oxf{\sigma} \comp \ridm{\amalg_1^\dag
    \comp \sigma_p} ~\comp \ridm{\amalg_1^\dag \comp \sigma_q})} \vee
    \ridm{(\amalg_2 \comp \oxf{\sigma} \comp (\ridm{\amalg_2^\dag
    \comp \sigma_p} \vee \ridm{\amalg_2^\dag \comp \sigma_q})))}
    \comp \ridm{\sigma_p} ~\comp \ridm{\sigma_q}} \label{rpaqs3}\\
    & = \ridm{((\ridm{\amalg_1 \comp \oxf{\sigma}} \comp
    \ridm{\amalg_1^\dag \comp \sigma_p} ~\comp \ridm{\amalg_1^\dag \comp
    \sigma_q}) \vee (\ridm{\amalg_2 \comp \oxf{\sigma}} \comp
    (\ridm{\amalg_2^\dag \comp \sigma_p} \vee \ridm{\amalg_2^\dag \comp
    \sigma_q}))) \comp \ridm{\sigma_p} ~\comp \ridm{\sigma_q}} 
    \label{rpaqs4}\\
    & = ((\ridm{\amalg_1^\dag \comp \sigma_p} ~\comp
    \ridm{\amalg_1^\dag \comp \sigma_q}) \vee (\ridm{\amalg_2^\dag
    \comp \sigma_p} \vee \ridm{\amalg_2^\dag \comp \sigma_q} ))
    \comp \ridm{\sigma_p} ~\comp \ridm{\sigma_q} \label{rpaqs5}\\
    & = (\ridm{\amalg_1^\dag \comp \sigma_p} ~\comp \ridm{\amalg_1^\dag
    \comp \sigma_q} ~\comp \ridm{\sigma_p} ~\comp \ridm{\sigma_q}) \vee
    (\ridm{\amalg_2^\dag \comp \sigma_p} ~\comp \ridm{\sigma_p} ~\comp
    \ridm{\sigma_q} ) \vee (\ridm{\amalg_2^\dag \comp \sigma_q}
    ~\comp \ridm{\sigma_p} ~\comp \ridm{\sigma_q} ) \label{rpaqs7}\\
    & = (\ridm{\amalg_1^\dag \comp \sigma_p} ~\comp \ridm{\sigma_p} ~\comp
    \ridm{\amalg_1^\dag \comp \sigma_q} ~\comp \ridm{\sigma_q}) \vee
    (\ridm{\amalg_2^\dag \comp \sigma_p} ~\comp \ridm{\sigma_p} ~\comp
    \ridm{\sigma_q} ) \vee (\ridm{\amalg_2^\dag \comp \sigma_q} 
    ~\comp \ridm{\sigma_q} ~\comp \ridm{\sigma_p} ) \label{rpaqs8}\\
    & = (\ridm{\amalg_1^\dag \comp \sigma_p} ~\comp
    \ridm{\amalg_1^\dag \comp \sigma_q} ) \vee
    (\ridm{\amalg_2^\dag \comp \sigma_p} ~\comp
    \ridm{\sigma_q} ) \vee (\ridm{\amalg_2^\dag \comp \sigma_q} 
    ~\comp \ridm{\sigma_p} ) \label{rpaqs9}\\
    & = (\ridm{\amalg_1^\dag \comp \sigma_p} ~\comp \ridm{\amalg_1^\dag
    \comp \sigma_q} ) \vee (\ridm{\amalg_2^\dag \comp \sigma_p}
    \comp (\ridm{\amalg_1^\dag \comp \sigma_q} \vee \ridm{\amalg_2^\dag
    \comp \sigma_q}) ) \vee (\ridm{\amalg_2^\dag \comp
    \sigma_q} \comp (\ridm{\amalg_1^\dag \comp \sigma_p} \vee
    \ridm{\amalg_2^\dag \comp \sigma_p}) ) \label{rpaqs10}\\
    & = (\ridm{\amalg_1^\dag \comp \sigma_p} ~\comp \ridm{\amalg_1^\dag
    \comp \sigma_q} ) \vee (\ridm{\amalg_2^\dag \comp
    \sigma_p} ~\comp\ridm{\amalg_1^\dag \comp \sigma_q}) \vee (
    \ridm{\amalg_2^\dag \comp \sigma_p} ~\comp\ridm{\amalg_2^\dag \comp
    \sigma_q}) \vee (\ridm{\amalg_2^\dag \comp
    \sigma_q} ~\comp\ridm{\amalg_1^\dag \comp \sigma_p}) \vee \nonumber 
    \\ 
    & \qquad ( \ridm{\amalg_2^\dag \comp \sigma_q}
    ~\comp\ridm{\amalg_2^\dag \comp \sigma_p}) \label{rpaqs11}\\
    & = (\ridm{\amalg_1^\dag \comp \sigma_p} ~\comp \ridm{\amalg_1^\dag
    \comp \sigma_q} ) \vee (\ridm{\amalg_1^\dag \comp
    \sigma_p} ~\comp\ridm{\amalg_2^\dag \comp \sigma_q}) \vee (
    \ridm{\amalg_2^\dag \comp \sigma_p} ~\comp\ridm{\amalg_1^\dag \comp
    \sigma_q}) \vee (\ridm{\amalg_2^\dag \comp
    \sigma_p} ~\comp\ridm{\amalg_2^\dag \comp \sigma_q}) \label{rpaqs12}\\
    & = (\ridm{\amalg_1^\dag \comp \sigma_p} \vee \ridm{\amalg_2^\dag
    \comp \sigma_p}) \comp (\ridm{\amalg_1^\dag \comp \sigma_q} \vee
    \ridm{\amalg_2^\dag \comp \sigma_q})
    = \ridm{\sigma_p} ~\comp \ridm{\sigma_q} \label{rpaqs13}
  \end{align}
  \normalsize
  Here, \eqref{rpaqs1} follows by \eqref{paqs1}--\eqref{paqs5}, \eqref{rpaqs2} 
  by
  Lemma~\ref{lem:basic_ridm}, \eqref{rpaqs3} by distributivity of restriction
  over joins, \eqref{rpaqs4} by the third axiom of restriction categories,
  \eqref{rpaqs5} follows by \eqref{prtot:const}, \eqref{rpaqs7} by
  distributivity of composition over joins, \eqref{rpaqs8} by commutativity of
  restriction idempotents (see Definition~\ref{def:rest_cat}), \eqref{rpaqs9}
  by Lemma~\ref{lem:basic_ridm}, \eqref{rpaqs10} by Lemma~\ref{lem:basic_ext},
  \eqref{rpaqs11} by distributivity of composition over joins, \eqref{rpaqs12}
  by idempotence of joins, and \eqref{rpaqs13} by distributivity of composition
  over joins and definition of $\sigma_p$ and $\sigma_q$.
\end{proof}

A common way to show computational soundness (see, \eg, \cite{Fiore1994}) is to
show a kind of preservation property; that interpretations are, in a sense,
preserved across evaluation in the operational semantics. This is shown for
predicates in the following lemma:

\begin{lem}\label{lem:preservation_pred}
  If $\bigstepp{\sigma}{p}{b}$ then $\oxf{p} \comp \oxf{\sigma} = \oxf{b} \comp 
  \oxf{\sigma}$.
\end{lem}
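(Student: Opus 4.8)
The plan is to proceed by induction on the derivation of $\bigstepp{\sigma}{p}{b}$, following the rule structure of Figure~\ref{fig:semantics_pred}. The two axioms $\bigstepp{\sigma}{\stt}{\stt}$ and $\bigstepp{\sigma}{\sff}{\sff}$ are immediate, since in each case $b$ coincides with $p$ and the asserted equation is a tautology. For the two negation rules I would use the definition $\oxf{\snot{p}} = \gamma \comp \oxf{p}$ together with the identities $\gamma \comp \amalg_1 = \amalg_2$ and $\gamma \comp \amalg_2 = \amalg_1$ (the first being exactly the computation $\oxf{\snot{\stt}} = \oxf{\sff}$ carried out in Section~\ref{sub:predicates_as_decisions}, the second entirely symmetric). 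Concretely, from the premise $\bigstepp{\sigma}{p}{\stt}$ the induction hypothesis gives $\oxf{p} \comp \oxf{\sigma} = \amalg_1 \comp \oxf{\sigma}$, whence $\oxf{\snot{p}} \comp \oxf{\sigma} = \gamma \comp \amalg_1 \comp \oxf{\sigma} = \amalg_2 \comp \oxf{\sigma} = \oxf{\sff} \comp \oxf{\sigma}$, and the rule with premise $\bigstepp{\sigma}{p}{\sff}$ is handled the same way with the roles of $\amalg_1$ and $\amalg_2$ exchanged.

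The four conjunction rules are where the real work lies, and I expect this to be the main obstacle. Here I would first reuse the manipulation carried out in the proof of Lemma~\ref{lem:total_pred}, namely equations~\eqref{paqs1}--\eqref{paqs5}, to rewrite
$$\oxf{\sand{p}{q}} \comp \oxf{\sigma} = \left((\amalg_1 \comp \oxf{\sigma} \comp \ridm{\amalg_1^\dag \comp \sigma_p} ~\comp \ridm{\amalg_1^\dag \comp \sigma_q}) \vee (\amalg_2 \comp \oxf{\sigma} \comp (\ridm{\amalg_2^\dag \comp \sigma_p} \vee \ridm{\amalg_2^\dag \comp \sigma_q}))\right) \comp \ridm{\sigma_p} ~\comp \ridm{\sigma_q},$$
where $\sigma_p = \oxf{p} \comp \oxf{\sigma}$ and $\sigma_q = \oxf{q} \comp \oxf{\sigma}$. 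The point of this reformulation is that, by the two induction hypotheses, each of $\sigma_p$ and $\sigma_q$ equals either $\amalg_1 \comp \oxf{\sigma}$ or $\amalg_2 \comp \oxf{\sigma}$, according to whether the corresponding premise evaluated to $\stt$ or $\sff$.

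This immediately forces every restriction idempotent to collapse. Using Lemma~\ref{lem:basic_disj}\eqref{bd3} (that $\amalg_j^\dag \comp \amalg_i = \id$ when $i = j$ and $0$ otherwise) together with totality of $\oxf{\sigma}$, a factor such as $\ridm{\amalg_1^\dag \comp \sigma_p}$ equals $\id_I$ when $\sigma_p = \amalg_1 \comp \oxf{\sigma}$ and equals $0$ when $\sigma_p = \amalg_2 \comp \oxf{\sigma}$ (and symmetrically for the $\amalg_2^\dag$ factors and for $q$), while $\ridm{\sigma_p} = \ridm{\sigma_q} = \id_I$ since $\sigma_p$ and $\sigma_q$ are total. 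I would then substitute these values in each of the four subcases. When both premises yield $\stt$, the $\amalg_1$-branch survives as $\amalg_1 \comp \oxf{\sigma}$ while the $\amalg_2$-branch is annihilated by a zero factor, giving $\oxf{\stt} \comp \oxf{\sigma}$; in the remaining three subcases at least one premise yields $\sff$, so the $\amalg_1$-branch is killed by a zero factor and the $\amalg_2$-branch reduces to $\amalg_2 \comp \oxf{\sigma} = \oxf{\sff} \comp \oxf{\sigma}$. The bookkeeping here rests on the join axioms of Definition~\ref{def:join_restriction_cat} — in particular that $0$ is the least element and that composition distributes over joins — which discharge each subcase and complete the induction.
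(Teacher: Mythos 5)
Your proposal is correct and follows essentially the same route as the paper's proof: induction on the derivation, trivial base cases, the commutator identities $\gamma \comp \amalg_1 = \amalg_2$ and $\gamma \comp \amalg_2 = \amalg_1$ for negation, and for conjunction the reuse of the rewriting \eqref{paqs1}--\eqref{paqs5} from Lemma~\ref{lem:total_pred} followed by collapsing the restriction idempotents via Lemma~\ref{lem:basic_disj}, totality of $\oxf{\sigma}$, and the zero/join laws. The only difference is presentational — the paper works through the conjunction rules as separate explicit computations, whereas you dispatch all four subcases with one uniform observation about which factors become $\id_I$ and which become $0$ — but the underlying argument is the same.
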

\begin{proof}
  By induction on the structure of the derivation $\mathcal{D}$ of 
  $\bigstepp{\sigma}{p}{b}$.
  \begin{itemize}
    \item Case $\mathcal{D} = \dfrac{}{\bigstepp{\sigma}{\stt}{\stt}}$. 
    We trivially have $\oxf{\stt} \comp \oxf{\sigma} = \oxf{\stt} \comp 
    \oxf{\sigma}$.
    \item Case $\mathcal{D} = \dfrac{}{\bigstepp{\sigma}{\sff}{\sff}}$. 
    Again, we trivially have $\oxf{\sff} \comp \oxf{\sigma} = \oxf{\sff} \comp 
    \oxf{\sigma}$.
    \item Case $\mathcal{D} = \dfrac{\bigstepp{\sigma}{p}{\stt}}
    {\bigstepp{\sigma}{\snot{p}}{\sff}}$. \\[0.5\baselineskip]
    By induction we have that $\oxf{p}
    \comp \oxf{\sigma} = \oxf{\stt} \comp \oxf{\sigma} = \amalg_1 \comp
    \oxf{\sigma}$. But then
    $\oxf{\snot{p}} \comp \oxf{\sigma} = \gamma \comp \oxf{p} \comp
    \oxf{\sigma} = \gamma \comp \amalg_1 \comp \oxf{\sigma} = \amalg_2 \comp
    \oxf{\sigma} = \oxf{\sff} \comp \oxf{\sigma}$.
    \item Case $\mathcal{D} = \dfrac{\bigstepp{\sigma}{p}{\sff}}
    {\bigstepp{\sigma}{\snot{p}}{\stt}}$.\\[0.5\baselineskip] 
    By induction we have that $\oxf{p}
    \comp \oxf{\sigma} = \oxf{\sff} \comp \oxf{\sigma} = \amalg_2 \comp
    \oxf{\sigma}$. Thus
    $\oxf{\snot{p}} \comp \oxf{\sigma} = \gamma \comp \oxf{p} \comp
    \oxf{\sigma} = \gamma \comp \amalg_2 \comp \oxf{\sigma} = \amalg_1 \comp
    \oxf{\sigma} = \oxf{\stt} \comp \oxf{\sigma}$.
    \item Case $\mathcal{D} = \dfrac{\bigstepp{\sigma}{p}{\stt} \qquad 
    \bigstepp{\sigma}{q}{\stt}}{\bigstepp{\sigma}{\sand{p}{q}}{\stt}}$. 
    \\[0.5\baselineskip] By induction $\oxf{p} \comp \oxf{\sigma} = \oxf{\stt} 
    \comp \oxf{\sigma} = \amalg_1 \comp \oxf{\sigma}$ and $ \oxf{q} \comp 
    \oxf{\sigma} = \oxf{\stt} \comp \oxf{\sigma} = \amalg_1 \comp \oxf{\sigma}$.
    We compute
    \begin{align}
      \oxf{\sand{p}{q}} \comp \oxf{\sigma} & = ((\amalg_1 \comp
      \ridm{\amalg_1^\dag \comp \oxf{p}} ~\comp \ridm{\amalg_1^\dag \comp
      \oxf{q}}) \vee (\amalg_2 \comp (\ridm{\amalg_2^\dag \comp
      \oxf{p}} \vee \ridm{\amalg_2^\dag \comp \oxf{q}}))) \comp
      \ridm{\oxf{p}} ~\comp \ridm{\oxf{q}} \comp \oxf{\sigma} \label{sand1_1}\\
      & = ((\amalg_1 \comp \oxf{\sigma} \comp \ridm{\amalg_1^\dag \comp
      \oxf{p} \comp \oxf{\sigma}} ~\comp \ridm{\amalg_1^\dag \comp \oxf{q}
      \comp \oxf{\sigma}}) \vee \nonumber \\ & \qquad (\amalg_2 
      \comp \oxf{\sigma} \comp
      (\ridm{\amalg_2^\dag \comp \oxf{p} \comp \oxf{\sigma}} \vee
      \ridm{\amalg_2^\dag \comp \oxf{q} \comp \oxf{\sigma}}))) \comp 
      \ridm{\oxf{p} \comp \oxf{\sigma}} ~\comp
      \ridm{\oxf{q} \comp \oxf{\sigma}} \label{sand1_2}\\
      & = ((\amalg_1 \comp \oxf{\sigma} \comp \ridm{\amalg_1^\dag \comp
      \amalg_1 \comp \oxf{\sigma}} ~\comp \ridm{\amalg_1^\dag \comp \amalg_1
      \comp \oxf{\sigma}}) \vee \nonumber \\ & \qquad (\amalg_2
      \comp \oxf{\sigma} \comp (\ridm{\amalg_2^\dag \comp \amalg_1 \comp
      \oxf{\sigma}} \vee \ridm{\amalg_2^\dag \comp \amalg_1 \comp
      \oxf{\sigma}}))) \comp \ridm{\amalg_1 \comp \oxf{\sigma}}
      ~\comp \ridm{\amalg_1 \comp \oxf{\sigma}} \label{sand1_3}\\
      & = ((\amalg_1 \comp \oxf{\sigma} \comp \ridm{\oxf{\sigma}} ~\comp 
      \ridm{\oxf{\sigma}}) \vee (\amalg_2 \comp \oxf{\sigma} \comp
      (\ridm{0_{\Sigma,\Sigma} \comp \oxf{\sigma}} \vee
      \ridm{0_{\Sigma,\Sigma} \comp \oxf{\sigma}}))) \comp
      \ridm{\oxf{\sigma}} ~\comp \ridm{\oxf{\sigma}} \label{sand1_4}\\
      & = ((\amalg_1 \comp \oxf{\sigma} \comp \comp 
      \ridm{\oxf{\sigma}}) \vee (\amalg_2 \comp \oxf{\sigma} \comp
      \ridm{0_{\Sigma,\Sigma} \comp \oxf{\sigma}})) \comp
      \ridm{\oxf{\sigma}} \label{sand1_4.5}\\
      & = ((\amalg_1 \comp \oxf{\sigma}) \vee (\amalg_2
      \comp \oxf{\sigma} \comp \ridm{0_{I,\Sigma}})) \comp
      \ridm{\oxf{\sigma}} \label{sand1_5}\\
      & = ((\amalg_1 \comp \oxf{\sigma}) \vee (\amalg_2
      \comp \oxf{\sigma} \comp 0_{I,I})) \comp \ridm{\oxf{\sigma}}
      \label{sand1_6}\\
      & = ((\amalg_1 \comp \oxf{\sigma}) \vee
      0_{I,\Sigma\oplus\Sigma}) \comp
      \ridm{\oxf{\sigma}} \label{sand1_7}\\
      & = \amalg_1 \comp \oxf{\sigma} \comp \ridm{\oxf{\sigma}} = \amalg_1 
      \comp \oxf{\sigma} = \oxf{\stt} \comp \oxf{\sigma}. \label{sand1_8}
    \end{align}
    where \eqref{sand1_1} is the definition of $\oxf{\sand{p}{q}}$, 
    \eqref{sand1_2} follows by \eqref{paqs1} -- \eqref{paqs4}, \eqref{sand1_3} 
    by $\oxf{p} \comp \oxf{\sigma} = \oxf{q} \comp \oxf{\sigma} = \amalg_1
    \comp \oxf{\sigma}$, \eqref{sand1_4} by Lemma~\ref{lem:basic_disj}, 
    \eqref{sand1_4.5} by Lemma~\ref{lem:basic_ridm} and idempotence of joins, 
    \eqref{sand1_5} by the first axiom of restriction categories and the
    universal mapping property of the zero object, \eqref{sand1_6} by the zero
    object a restriction zero, \eqref{sand1_7} once again by the universal
    mapping property of the zero object, and \eqref{sand1_8} by the first axiom
    of restriction categories and the definition of $\oxf{\stt}$.

    \item Case $\mathcal{D} = \dfrac{\bigstepp{\sigma}{p}{\sff} \qquad 
    \bigstepp{\sigma}{q}{\stt}}{\bigstepp{\sigma}{\sand{p}{q}}{\sff}}$. 
    \\[0.5\baselineskip] By induction $\oxf{p} \comp \oxf{\sigma} = \oxf{\sff} 
    \comp \oxf{\sigma} = \amalg_2 \comp \oxf{\sigma}$ and $ \oxf{q} \comp 
    \oxf{\sigma} = \oxf{\stt} \comp \oxf{\sigma} = \amalg_1 \comp \oxf{\sigma}$.
    
    \begin{align}
      \oxf{\sand{p}{q}} \comp \oxf{\sigma} & = ((\amalg_1 \comp
      \ridm{\amalg_1^\dag \comp \oxf{p}} ~\comp \ridm{\amalg_1^\dag \comp
      \oxf{q}}) \vee (\amalg_2 \comp (\ridm{\amalg_2^\dag \comp
      \oxf{p}} \vee \ridm{\amalg_2^\dag \comp \oxf{q}}))) \comp
      \ridm{\oxf{p}} ~\comp \ridm{\oxf{q}} \comp \oxf{\sigma} \label{sand3_1}\\
      & = ((\amalg_1 \comp \oxf{\sigma} \comp \ridm{\amalg_1^\dag \comp
      \oxf{p} \comp \oxf{\sigma}} ~\comp \ridm{\amalg_1^\dag \comp \oxf{q}
      \comp \oxf{\sigma}}) \vee \nonumber \\ & \qquad (\amalg_2 \comp
      \oxf{\sigma} \comp (\ridm{\amalg_2^\dag \comp \oxf{p} \comp
      \oxf{\sigma}} \vee \ridm{\amalg_2^\dag \comp \oxf{q} \comp
      \oxf{\sigma}}))) \comp \ridm{\oxf{p} \comp \oxf{\sigma}}
      ~\comp \ridm{\oxf{q} \comp \oxf{\sigma}} \label{sand3_2}\\
      & = ((\amalg_1 \comp \oxf{\sigma} \comp \ridm{\amalg_1^\dag \comp
      \amalg_2 \comp \oxf{\sigma}} ~\comp \ridm{\amalg_1^\dag \comp \amalg_1
      \comp \oxf{\sigma}}) \vee \nonumber \\ & \qquad (\amalg_2 \comp
      \oxf{\sigma} \comp (\ridm{\amalg_2^\dag \comp \amalg_2 \comp
      \oxf{\sigma}} \vee \ridm{\amalg_2^\dag \comp \amalg_1 \comp
      \oxf{\sigma}}))) \comp \ridm{\amalg_2 \comp \oxf{\sigma}}
      ~\comp \ridm{\amalg_1 \comp \oxf{\sigma}} \label{sand3_3}\\
      & = ((\amalg_1 \comp \oxf{\sigma} \comp \ridm{0_{\Sigma,\Sigma} \comp
      \oxf{\sigma}} ~\comp \ridm{\oxf{\sigma}}) \vee (\amalg_2 \comp
      \oxf{\sigma} \comp (\ridm{\oxf{\sigma}} \vee \ridm{0_{\Sigma,\Sigma}
      \comp \oxf{\sigma}}))) \comp \ridm{\oxf{\sigma}} ~\comp
      \ridm{\oxf{\sigma}} \label{sand3_4}\\
      & = ((\amalg_1 \comp \oxf{\sigma} \comp 0_{I,I} ~\comp
      \ridm{\oxf{\sigma}}) \vee (\amalg_2 \comp \oxf{\sigma} \comp
      (\ridm{\oxf{\sigma}} \vee 0_{I,I}))) \comp
      \ridm{\oxf{\sigma}} \label{sand3_5}\\
      & = (0_{I,\Sigma \oplus \Sigma} \vee (\amalg_2 \comp
      \oxf{\sigma} \comp \ridm{\oxf{\sigma}})) \comp
      \ridm{\oxf{\sigma}} \label{sand3_6}\\
      & = \amalg_2 \comp \oxf{\sigma} \comp \ridm{\oxf{\sigma}} ~\comp 
      \ridm{\oxf{\sigma}} = \amalg_2 \comp \oxf{\sigma} = \oxf{\sff} \comp 
      \oxf{\sigma} \label{sand3_7}
    \end{align}
    where \eqref{sand3_1} is the definition of $\oxf{\sand{p}{q}}$,
    \eqref{sand3_2} follows by \eqref{paqs1} -- \eqref{paqs4}, \eqref{sand3_3}
    by $\oxf{p} \comp \oxf{\sigma} = \amalg_2 \comp \oxf{\sigma}$ and $\oxf{q}
    \comp \oxf{\sigma} = \amalg_1 \comp \oxf{\sigma}$, \eqref{sand3_4} by
    Lemma~\ref{lem:basic_disj}, \eqref{sand3_5} by the first axiom of
    restriction categories as well as the universal mapping property of the
    zero object and the fact that it is a restriction zero, \eqref{sand3_6} by
    the universal mapping property of the zero object and the fact that zero
    maps are unit for joins, \eqref{sand3_7} by zero maps units for joins, the
    first axiom of restriction categories, and the definition of $\oxf{\sff}$.

    \item Case $\mathcal{D} = \dfrac{\bigstepp{\sigma}{p}{\stt} \qquad
    \bigstepp{\sigma}{q}{\sff}}{\bigstepp{\sigma}{\sand{p}{q}}{\sff}}$, similar 
    to the previous case.
    \item Case $\mathcal{D} = \dfrac{\bigstepp{\sigma}{p}{\sff} \qquad
    \bigstepp{\sigma}{q}{\sff}}{\bigstepp{\sigma}{\sand{p}{q}}{\sff}}$.
    \\[0.5\baselineskip] By induction $\oxf{p} \comp \oxf{\sigma} = \oxf{\sff} 
    \comp \oxf{\sigma} = \amalg_2 \comp \oxf{\sigma}$ and $ \oxf{q} \comp 
    \oxf{\sigma} = \oxf{\sff} \comp \oxf{\sigma} = \amalg_2 \comp \oxf{\sigma}$.
    \begin{align}
      \oxf{\sand{p}{q}} \comp \oxf{\sigma} & = ((\amalg_1 \comp
      \ridm{\amalg_1^\dag \comp \oxf{p}} ~\comp \ridm{\amalg_1^\dag \comp
      \oxf{q}}) \vee (\amalg_2 \comp (\ridm{\amalg_2^\dag \comp
      \oxf{p}} \vee \ridm{\amalg_2^\dag \comp \oxf{q}}))) \comp
      \ridm{\oxf{p}} ~\comp \ridm{\oxf{q}} \comp \oxf{\sigma} \label{sand4_1}\\
      & = ((\amalg_1 \comp \oxf{\sigma} \comp \ridm{\amalg_1^\dag \comp
      \oxf{p} \comp \oxf{\sigma}} ~\comp \ridm{\amalg_1^\dag \comp \oxf{q}
      \comp \oxf{\sigma}}) \vee \nonumber \\ & \qquad (\amalg_2 
      \comp \oxf{\sigma} \comp (\ridm{\amalg_2^\dag \comp \oxf{p} \comp
      \oxf{\sigma}} \vee \ridm{\amalg_2^\dag \comp \oxf{q} \comp
      \oxf{\sigma}}))) \comp \ridm{\oxf{p} \comp \oxf{\sigma}}
      ~\comp \ridm{\oxf{q} \comp \oxf{\sigma}} \label{sand4_2}\\
      & = ((\amalg_1 \comp \oxf{\sigma} \comp \ridm{\amalg_1^\dag \comp
      \amalg_2 \comp \oxf{\sigma}} ~\comp \ridm{\amalg_1^\dag \comp \amalg_2
      \comp \oxf{\sigma}}) \vee \nonumber \\ & \qquad (\amalg_2
      \comp \oxf{\sigma} \comp (\ridm{\amalg_2^\dag \comp \amalg_2 \comp
      \oxf{\sigma}} \vee \ridm{\amalg_2^\dag \comp \amalg_2 \comp
      \oxf{\sigma}}))) \comp \ridm{\amalg_2 \comp \oxf{\sigma}}
      ~\comp \ridm{\amalg_2 \comp \oxf{\sigma}} \label{sand4_3}\\
      & = ((\amalg_1 \comp
      \oxf{\sigma} \comp \ridm{0_{\Sigma,\Sigma} \comp \oxf{\sigma}} ~\comp
      \ridm{0_{\Sigma,\Sigma} \comp \oxf{\sigma}}) \vee (\amalg_2
      \comp \oxf{\sigma} \comp (\ridm{\oxf{\sigma}} \vee
      \ridm{\oxf{\sigma}}))) \comp \ridm{\oxf{\sigma}} ~\comp
      \ridm{\oxf{\sigma}} \label{sand4_4}\\
      & = ((\amalg_1 \comp \oxf{\sigma} \comp 0_{I,I}) \vee
      (\amalg_2 \comp \oxf{\sigma} \comp \ridm{\oxf{\sigma}})) \comp
      \ridm{\oxf{\sigma}} \label{sand4_5}\\
      & = (0_{I,\Sigma \oplus \Sigma} \vee (\amalg_2
      \comp \oxf{\sigma} \comp \ridm{\oxf{\sigma}})) \comp
      \ridm{\oxf{\sigma}} \label{sand4_6}\\
      & = \amalg_2 \comp \oxf{\sigma} \comp \ridm{\oxf{\sigma}} ~\comp
      \ridm{\oxf{\sigma}} = \amalg_2 \comp \oxf{\sigma} = \oxf{\sff} \comp
      \oxf{\sigma} \label{sand4_7}
    \end{align}
    where \eqref{sand4_1} is the definition of $\oxf{\sand{p}{q}} \comp 
    \oxf{\sigma}$, \eqref{sand4_2} follows by \eqref{paqs1} -- \eqref{paqs4}, 
    \eqref{sand4_3} by $\oxf{p} \comp \oxf{\sigma} = \oxf{q} \comp 
    \oxf{\sigma} = \amalg_2 \comp \oxf{\sigma}$, \eqref{sand4_4} by 
    Lemma~\ref{lem:basic_disj}, \eqref{sand4_5} by idempotence of restriction 
    idempotents and joins, \eqref{sand4_6} by the universal mapping property of 
    the zero object, and \eqref{sand4_7} by the first axiom of restriction 
    categories and the definition of $\oxf{\sff}$. \qedhere
  \end{itemize}
\end{proof}

With this done, the computational soundness lemma for predicates follows readily.

\begin{lem}
  If there exists $b$ such that $\bigstepp{\sigma}{p}{b}$ then $\oxf{p} \comp 
  \oxf{\sigma}$ is total.
\end{lem}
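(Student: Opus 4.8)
The plan is to derive this directly from the two preceding lemmas, since essentially all of the real work has already been carried out. The statement asserts that whenever a predicate $p$ evaluates to some Boolean value $b$ in state $\sigma$, the composite $\oxf{p} \comp \oxf{\sigma}$ is total. The key observation is that convergence of the operational evaluation gives us, via Lemma~\ref{lem:preservation_pred}, an equality between $\oxf{p} \comp \oxf{\sigma}$ and a \emph{constant} predicate composed with $\oxf{\sigma}$, and constant predicates are already known to yield total composites by Lemma~\ref{lem:total_pred}\eqref{prtot:const}.

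Concretely, I would proceed as follows. Suppose $\bigstepp{\sigma}{p}{b}$ holds. By Lemma~\ref{lem:preservation_pred}, this gives $\oxf{p} \comp \oxf{\sigma} = \oxf{b} \comp \oxf{\sigma}$. Now the Boolean value $b$ is necessarily either $\stt$ or $\sff$, these being the only values in the grammar of Boolean outcomes. In either case, Lemma~\ref{lem:total_pred}\eqref{prtot:const} tells us that $I \tot{\oxf{\sigma}} \Sigma \tot{\oxf{b}} \Sigma \oplus \Sigma$ is total, i.e.\ $\ridm{\oxf{b} \comp \oxf{\sigma}} = \id_I$. Combining these two facts, $\ridm{\oxf{p} \comp \oxf{\sigma}} = \ridm{\oxf{b} \comp \oxf{\sigma}} = \id_I$, so $\oxf{p} \comp \oxf{\sigma}$ is total, as required.

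I do not anticipate any genuine obstacle here: the totality argument for the constant and compound cases was already absorbed into Lemma~\ref{lem:total_pred}, and the preservation of interpretation across evaluation into Lemma~\ref{lem:preservation_pred}. The only point worth stating carefully is the exhaustiveness of the case split on $b \in \{\stt, \sff\}$, which licenses the appeal to Lemma~\ref{lem:total_pred}\eqref{prtot:const} regardless of which value the derivation produced. The proof is therefore a one-line composition of the two earlier results.
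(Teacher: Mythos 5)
Your proof is correct and follows exactly the same route as the paper's: apply Lemma~\ref{lem:preservation_pred} to get $\oxf{p} \comp \oxf{\sigma} = \oxf{b} \comp \oxf{\sigma}$, note that $b$ must be $\stt$ or $\sff$, and conclude totality from Lemma~\ref{lem:total_pred}~\eqref{prtot:const}. There is nothing to add or correct.
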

\begin{proof}
  Suppose there exists $b$ such that $\bigstepp{\sigma}{p}{b}$ by some
  derivation. It follows by the operational semantics that $b$
  must be either $\stt$ or $\sff$, and in either case it follows by
  Lemma~\ref{lem:total_pred}~\eqref{prtot:const} that $\oxf{b} \comp
  \oxf{\sigma}$ is total, \ie, $\ridm{\oxf{b} \comp \oxf{\sigma}} = \id_I$. 
  Applying the derivation of $\bigstepp{\sigma}{p}{b}$ to
  Lemma~\ref{lem:preservation_pred} yields that $\oxf{p} \comp \oxf{\sigma} =
  \oxf{b} \comp \oxf{\sigma}$, so specifically $\ridm{\oxf{p} \comp
  \oxf{\sigma}} = \ridm{\oxf{b} \comp \oxf{\sigma}} = \id_I$, as desired.
\end{proof}

Adequacy for predicates can then be shown by induction on the structure of the
predicate, and by letting Lemma~\ref{lem:total_pred} (regarding the totality of predicates) do much of the heavy lifting.

\begin{lem}\label{lem:adequacy_pred}
  If $\oxf{p} \comp \oxf{\sigma}$ is total then there exists $b$ such that 
  $\bigstepp{\sigma}{p}{b}$.
\end{lem}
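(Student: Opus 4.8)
The plan is to proceed by induction on the structure of the predicate $p$, letting Lemma~\ref{lem:total_pred} handle all the totality bookkeeping and the operational rules of Figure~\ref{fig:semantics_pred} supply the witnessing Boolean value. The observation that makes the argument go through cleanly is that any derivable judgment $\bigstepp{\sigma}{p}{b}$ has $b \in \{\stt, \sff\}$, so at each node it suffices merely to produce \emph{some} derivation; the point is then that the two negation rules (respectively the four conjunction rules) are exhaustive over the possible Boolean values of the immediate subderivations.

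For the base cases $p = \stt$ and $p = \sff$, no hypothesis is needed: the axioms of Figure~\ref{fig:semantics_pred} give $\bigstepp{\sigma}{\stt}{\stt}$ and $\bigstepp{\sigma}{\sff}{\sff}$ outright, so $b = \stt$ (respectively $b = \sff$) witnesses the claim regardless of totality.

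For $p = \snot{p'}$, I would first apply Lemma~\ref{lem:total_pred}~\eqref{prtot:not} to transport totality of $\oxf{\snot{p'}} \comp \oxf{\sigma}$ to totality of $\oxf{p'} \comp \oxf{\sigma}$, and then invoke the induction hypothesis to obtain some $b'$ with $\bigstepp{\sigma}{p'}{b'}$. Since $b' \in \{\stt, \sff\}$, exactly one negation rule applies, yielding $\bigstepp{\sigma}{\snot{p'}}{\sff}$ when $b' = \stt$ and $\bigstepp{\sigma}{\snot{p'}}{\stt}$ when $b' = \sff$; in both cases a witness $b$ exists. The conjunction case $p = \sand{p'}{q'}$ is analogous but uses Lemma~\ref{lem:total_pred}~\eqref{prtot:and} to extract totality of \emph{both} $\oxf{p'} \comp \oxf{\sigma}$ and $\oxf{q'} \comp \oxf{\sigma}$, hence, by two appeals to the induction hypothesis, Booleans $b_1$ and $b_2$ with $\bigstepp{\sigma}{p'}{b_1}$ and $\bigstepp{\sigma}{q'}{b_2}$. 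A four-way case split on $(b_1, b_2)$ then matches exactly one of the four conjunction rules and produces the required $b$.

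I do not expect a genuine obstacle here, precisely because Lemma~\ref{lem:total_pred} has already absorbed all the categorical computation into an if-and-only-if for totality; the induction merely threads these equivalences through the syntax. The only point deserving a moment's care is to confirm that the operational rules are exhaustive over the sub-values, \ie\ that for every combination of Boolean outcomes of the subderivations there is a matching rule, so that each induction hypothesis can always be discharged into an actual derivation rather than merely a totality fact.
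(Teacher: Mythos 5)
Your proposal is correct and follows essentially the same route as the paper's own proof: structural induction on $p$, with the base cases discharged by the axioms, the negation case handled by Lemma~\ref{lem:total_pred}~\eqref{prtot:not} plus a two-way case split on the induction hypothesis, and the conjunction case by Lemma~\ref{lem:total_pred}~\eqref{prtot:and} plus a four-way case split matching the four $\mathbf{and}$-rules. The only presentational difference is that you make explicit the exhaustiveness of the operational rules over Boolean outcomes, which the paper leaves implicit.
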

\begin{proof}
  By induction on the structure of $p$.
  \begin{itemize}
    \item Case $p = \stt$. Then $\bigstepp{\sigma}{\stt}{\stt}$ by 
    $\dfrac{}{\bigstepp{\sigma}{\stt}{\stt}}$.
    \item Case $p = \sff$. Then $\bigstepp{\sigma}{\sff}{\sff}$ by 
    $\dfrac{}{\bigstepp{\sigma}{\sff}{\sff}}$.
    \item Case $p = \snot{p'}$. Since $\oxf{\snot{p'}} \comp \oxf{\sigma}$ is 
    total, it follows by Lemma~\ref{lem:total_pred} that $\oxf{p'} \comp 
    \oxf{\sigma}$ is total as well, so by induction, there exists $b$ such that 
    $\bigstepp{\sigma}{p'}{b}$ by some derivation $\mathcal{D}$. We have two 
    cases to consider: If $b = \stt$, $\mathcal{D}$ is a derivation of 
    $\bigstepp{\sigma}{p'}{\stt}$, and so we may derive 
    $\bigstepp{\sigma}{\snot{p'}}{\sff}$ by
    \begin{equation*}
      \dfrac{\overset{\mathcal{D}}{\bigstepp{\sigma}{p'}{\stt}}}
      {\bigstepp{\sigma}{\snot{p'}}{\sff}}
    \end{equation*}
    If on the other hand $b = \sff$, $\mathcal{D}$ is a derivation of 
    $\bigstepp{\sigma}{p'}{\sff}$, and we may use the other $\mathbf{not}$-rule 
    with $\mathcal{D}$ to derive
    \begin{equation*}
      \dfrac{\overset{\mathcal{D}}{\bigstepp{\sigma}{p'}{\sff}}}
      {\bigstepp{\sigma}{\snot{p'}}{\stt}} \enspace.
    \end{equation*}
    \item Case $p = \sand{q}{r}$. Since we have that $\oxf{\sand{q}{r}} \comp 
    \oxf{\sigma}$ is total, by Lemma~\ref{lem:total_pred}, so are $\oxf{q} 
    \comp \oxf{\sigma}$ and $\oxf{r} \comp \oxf{\sigma}$. Thus, it follows by 
    induction that there exist $b_1$ and $b_2$ such that 
    $\bigstepp{\sigma}{q}{b_1}$ respectively $\bigstepp{\sigma}{r}{b_2}$ by
    derivations $\mathcal{D}_1$ respectively $\mathcal{D}_2$. This gives us
    four cases depending on what $b_1$ and $b_2$ are. Luckily, these four cases
    match precisely the four different rules we have for $\mathbf{and}$: For 
    example, if $b_1 = \stt$ and $b_2 = \sff$, we may derive 
    $\bigstepp{\sigma}{\sand{q}{r}}{\sff}$ by
    \begin{equation*}
      \dfrac{\overset{\mathcal{D}_1}{\bigstepp{\sigma}{q}{\stt}} \qquad
      \overset{\mathcal{D}_2}{\bigstepp{\sigma}{r}{\sff}}}
      {\bigstepp{\sigma}{\sand{q}{r}}{\sff}},
    \end{equation*}
    and so on. \qedhere
  \end{itemize}
\end{proof}

With computational soundness and adequacy done for the
predicates, we turn our attention to commands. Before we can show computational
soundness, we will need a technical lemma regarding the denotational behaviour
of loop bodies in states $\sigma$ when the relevant predicates are either true
or false (see Definition~\ref{def:commands} for the definition of the loop body
$\beta[p,c,q]$).

\begin{lem}\label{lem:beta_preds}
  Let $\sigma$ be a state, and $p$ and $q$ be predicates. Then
  \begin{enumerate}
    \item\label{lem:case:tt_tt} If $\bigstepp{\sigma}{p}{\stt}$ and
    $\bigstepp{\sigma}{q}{\stt}$ then $\beta[p,c,q]_{11} \comp \oxf{\sigma} =
    \oxf{\sigma}$,
    \item\label{lem:case:ff_tt} If $\bigstepp{\sigma}{p}{\sff}$ and
    $\bigstepp{\sigma}{q}{\stt}$ then $\beta[p,c,q]_{21} \comp \oxf{\sigma} =
    \oxf{\sigma}$,
    \item\label{lem:case:tt_ff} If $\bigstepp{\sigma}{p}{\stt}$ and
    $\bigstepp{\sigma}{q}{\sff}$ then $\beta[p,c,q]_{12} \comp \oxf{\sigma} =
    \oxf{c} \comp \oxf{\sigma}$, and
    \item\label{lem:case:ff_ff} If $\bigstepp{\sigma}{p}{\sff}$ and
    $\bigstepp{\sigma}{q}{\sff}$ then $\beta[p,c,q]_{22} \comp \oxf{\sigma} =
    \oxf{c} \comp \oxf{\sigma}$.
  \end{enumerate}
  Further, in each case, for all other choices of $i$ and $j$, 
  $\beta[p,c,q]_{ij} \comp \oxf{\sigma} = 0_{I,\Sigma}$.
\end{lem}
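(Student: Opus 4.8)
The plan is to collapse each $\beta[p,c,q]_{ij} \comp \oxf{\sigma}$ to the single composite $\amalg_j^\dag \comp (\id_\Sigma \oplus \oxf{c}) \comp \oxf{q} \comp \oxf{p}^\dag \comp \amalg_i \comp \oxf{\sigma}$ (just unfolding $\beta[p,c,q]$ and $(-)_{ij}$) and then to evaluate it in two halves, feeding in the hypotheses through Lemma~\ref{lem:preservation_pred}, which lets me replace $\oxf{p} \comp \oxf{\sigma}$ and $\oxf{q} \comp \oxf{\sigma}$ by $\oxf{\stt} \comp \oxf{\sigma} = \amalg_1 \comp \oxf{\sigma}$ or $\oxf{\sff} \comp \oxf{\sigma} = \amalg_2 \comp \oxf{\sigma}$ according to the truth values of $p$ and $q$.

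The key step, which I expect to be the main obstacle, is the right half $\oxf{p}^\dag \comp \amalg_i \comp \oxf{\sigma}$, where the dagger blocks a direct appeal to preservation. First I would establish the identity $\oxf{p}^\dag \comp \amalg_i = \ridm{\amalg_i^\dag \comp \oxf{p}}$: taking the dagger of the canonical decomposition $\oxf{p} = (\amalg_1 \comp \ridm{\amalg_1^\dag \comp \oxf{p}}) \vee (\amalg_2 \comp \ridm{\amalg_2^\dag \comp \oxf{p}})$ from Lemma~\ref{lem:basic_ext}, using that the dagger distributes over joins and that restriction idempotents are self-adjoint, and then collapsing $\amalg_k^\dag \comp \amalg_i$ by Lemma~\ref{lem:basic_disj}. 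Precomposing with $\oxf{\sigma}$ and moving the restriction idempotent across it by the fourth restriction axiom gives $\oxf{p}^\dag \comp \amalg_i \comp \oxf{\sigma} = \oxf{\sigma} \comp \ridm{\amalg_i^\dag \comp \oxf{p} \comp \oxf{\sigma}}$. Rewriting $\oxf{p} \comp \oxf{\sigma}$ as $\oxf{b} \comp \oxf{\sigma}$ via Lemma~\ref{lem:preservation_pred} (where $\bigstepp{\sigma}{p}{b}$) and collapsing $\amalg_i^\dag \comp \oxf{b}$ by Lemma~\ref{lem:basic_disj}, this yields $\oxf{\sigma} \comp \ridm{\oxf{\sigma}} = \oxf{\sigma}$ when $\amalg_i = \oxf{b}$ (using totality of states) and $\oxf{\sigma} \comp \ridm{0} = 0_{I,\Sigma}$ otherwise.

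For the left half I would use the elementary facts $(\id_\Sigma \oplus \oxf{c}) \comp \amalg_1 = \amalg_1$ and $(\id_\Sigma \oplus \oxf{c}) \comp \amalg_2 = \amalg_2 \comp \oxf{c}$ (functoriality of $\oplus$, naturality of the unitors, and the universal property of the zero object), together with Lemma~\ref{lem:preservation_pred} rewriting $\oxf{q} \comp \oxf{\sigma}$ as $\amalg_1 \comp \oxf{\sigma}$ (for $q$ true) or $\amalg_2 \comp \oxf{\sigma}$ (for $q$ false). Then $\amalg_j^\dag \comp (\id_\Sigma \oplus \oxf{c}) \comp \oxf{q} \comp \oxf{\sigma}$ collapses by Lemma~\ref{lem:basic_disj} to $\oxf{\sigma}$ exactly when $q$ is true and $j=1$, to $\oxf{c} \comp \oxf{\sigma}$ exactly when $q$ is false and $j=2$, and to $0_{I,\Sigma}$ for every other value of $j$.

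Finally I would assemble the two halves. Since the right half is $\oxf{\sigma}$ precisely when $\amalg_i$ matches the truth value of $p$ and $0_{I,\Sigma}$ otherwise, composition with $0$ kills every case in which $i$ is wrong, while in the surviving case the left-half analysis supplies the stated $j$-dependence. Running this over the four combinations of truth values for $p$ and $q$ produces exactly the four stated identities together with the claim that $\beta[p,c,q]_{ij} \comp \oxf{\sigma} = 0_{I,\Sigma}$ for all remaining $i,j$. The case bookkeeping is routine once the right-half identity is in place; the only genuinely delicate point is that dagger manipulation.
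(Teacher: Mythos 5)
Your proposal is correct, and it takes a genuinely more uniform route than the paper's own proof. The paper proves each main case by direct computation: when $i$ matches the truth value of $p$, it rewrites $\amalg_i \comp \oxf{\sigma}$ as $\oxf{p} \comp \oxf{\sigma}$ (Lemma~\ref{lem:preservation_pred} read right-to-left), collapses $\oxf{p}^\dag \comp \oxf{p}$ to $\ridm{\oxf{p}}$, and pushes this restriction idempotent across $\oxf{\sigma}$ by the fourth restriction axiom; the vanishing cases are then handled by computing representative instances, one of which requires a separate trick ($\amalg_2 = \gamma \comp \amalg_1$ combined with $\oxf{p}^\dag \comp \oxf{\snot{p}} = 0_{\Sigma,\Sigma}$), with the remaining cases dismissed as analogous. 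You instead isolate the single identity $\oxf{p}^\dag \comp \amalg_i = \ridm{\amalg_i^\dag \comp \oxf{p}}$ and evaluate the two halves of the composite independently, which disposes of all sixteen combinations of truth values and indices at once; the vanishing cases fall out of the same computation instead of needing ad hoc treatment. One caveat: your derivation of the key identity takes the dagger of the join decomposition in Lemma~\ref{lem:basic_ext}\eqref{be3}, which requires that the dagger commutes with joins. This is true in any join inverse category (the dagger is a monotone involution, hence an order-isomorphism of hom-posets), but the paper never records it as a lemma. A shorter derivation, which the paper itself uses later inside the proof of Theorem~\ref{thm:adequacy}, is $\oxf{p}^\dag \comp \amalg_i = (\amalg_i^\dag \comp \oxf{p})^\dag = \ridm{\amalg_i^\dag \comp \oxf{p}}^\dag = \ridm{\amalg_i^\dag \comp \oxf{p}}$, using that $\amalg_i^\dag \comp \oxf{p}$ is its own restriction idempotent (Lemma~\ref{lem:basic_ext}\eqref{be1}) and that restriction idempotents are self-adjoint; substituting this removes the only unproved ingredient in your argument.
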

\begin{proof}
  For \eqref{lem:case:tt_tt}, suppose $\bigstepp{\sigma}{p}{\stt}$ and
  $\bigstepp{\sigma}{q}{\stt}$, so by Lemma~\ref{lem:preservation_pred},
  $\oxf{p} \comp \oxf{\sigma} = \oxf{\stt} \comp \oxf{\sigma} = \amalg_1 \comp
  \oxf{\sigma}$ and $\oxf{q} \comp \oxf{\sigma} = \oxf{\stt} \comp \oxf{\sigma}
  = \amalg_1 \comp \oxf{\sigma}$. We have
  \begin{align}
    \beta[p,c,q]_{11} \comp \oxf{\sigma} & = \amalg_1^\dag \comp (\id_\Sigma
    \oplus \oxf{c}) \comp \oxf{q} \comp \oxf{p}^\dag \comp \amalg_1 \comp
    \oxf{\sigma} = \amalg_1^\dag \comp (\id_\Sigma \oplus \oxf{c}) \comp
    \oxf{q} \comp \oxf{p}^\dag \comp \oxf{p} \comp \oxf{\sigma}\label{b11_1} \\ 
    & =
    \amalg_1^\dag \comp (\id_\Sigma \oplus \oxf{c}) \comp \oxf{q} \comp
    \ridm{\oxf{p}} \comp \oxf{\sigma} = \amalg_1^\dag \comp (\id_\Sigma \oplus
    \oxf{c}) \comp \oxf{q} \comp \oxf{\sigma} \comp \ridm{\oxf{p} \comp
    \oxf{\sigma}} \label{b11_2} \\ 
    & = \amalg_1^\dag \comp (\id_\Sigma \oplus \oxf{c}) \comp
    \amalg_1 \comp \oxf{\sigma} \comp \ridm{\amalg_1 \comp \oxf{\sigma}} =
    \amalg_1^\dag \comp (\id_\Sigma \oplus \oxf{c}) \comp \amalg_1 \comp
    \oxf{\sigma} \label{b11_3} \\
    & = \amalg_1^\dag \comp \amalg_1 \comp \id_\Sigma \comp \oxf{\sigma}
    = \ridm{\amalg_1} \comp \oxf{\sigma} = \id_\Sigma \comp \oxf{\sigma} = 
    \oxf{\sigma} \label{b11_4}
  \end{align}
  where \eqref{b11_1} follows by definition of $\beta[p,c,q]_{11}$ and $\oxf{p} 
  \comp \oxf{\sigma} = \amalg_1 \comp \oxf{\sigma}$, \eqref{b11_2} by $\oxf{p}$ 
  a partial isomorphism and the fourth axiom of restriction categories, 
  \eqref{b11_3} by $\oxf{p} \comp \oxf{\sigma} = \oxf{q} \comp \oxf{\sigma} =
  \amalg_1 \comp \oxf{\sigma}$ and the first axiom of restriction categories,
  and \eqref{b11_4} by naturality and totality of $\amalg_1$.

  The proof of \eqref{lem:case:ff_tt} is analogous to that of 
  \eqref{lem:case:tt_tt}.
  
  For \eqref{lem:case:tt_ff}, suppose $\bigstepp{\sigma}{p}{\stt}$ and
  $\bigstepp{\sigma}{q}{\sff}$, so by Lemma~\ref{lem:preservation_pred},
  $\oxf{p} \comp \oxf{\sigma} = \oxf{\stt} \comp \oxf{\sigma} = \amalg_1 \comp
  \oxf{\sigma}$ and $\oxf{q} \comp \oxf{\sigma} = \oxf{\sff} \comp \oxf{\sigma}
  = \amalg_2 \comp \oxf{\sigma}$. We compute
  \begin{align}
    \beta[p,c,q]_{12} \comp \oxf{\sigma} & = \amalg_2^\dag \comp (\id_\Sigma
    \oplus \oxf{c}) \comp \oxf{q} \comp \oxf{p}^\dag \comp \amalg_1 \comp
    \oxf{\sigma} = \amalg_2^\dag \comp (\id_\Sigma
    \oplus \oxf{c}) \comp \oxf{q} \comp \oxf{p}^\dag \comp \oxf{p} \comp
    \oxf{\sigma} \label{b12_1}\\
    & = \amalg_2^\dag \comp (\id_\Sigma \oplus \oxf{c}) \comp \oxf{q} \comp
    \ridm{\oxf{p}} \comp \oxf{\sigma} = \amalg_2^\dag \comp (\id_\Sigma \oplus
    \oxf{c}) \comp \oxf{q} \comp \oxf{\sigma} \comp \ridm{\oxf{p} \comp
    \oxf{\sigma}} \label{b12_2}\\
    & = \amalg_2^\dag \comp (\id_\Sigma \oplus \oxf{c}) \comp \amalg_2 \comp
    \oxf{\sigma} \comp \ridm{\amalg_1 \comp \oxf{\sigma}} = \amalg_2^\dag \comp
    \amalg_2 \comp \oxf{c} \comp \oxf{\sigma} \comp \ridm{\ridm{\amalg_1} \comp
    \oxf{\sigma}} \label{b12_3}\\
    & = \ridm{\amalg_2} \comp \oxf{c} \comp \oxf{\sigma} \comp \ridm{\id_\Sigma
    \comp \oxf{\sigma}} = \id_\Sigma \comp \oxf{c} \comp \oxf{\sigma} \comp 
    \ridm{\oxf{\sigma}} = \oxf{c} \comp \oxf{\sigma} \label{b12_4}
  \end{align}
  where \eqref{b12_1} follows by definition of $\beta[p,c,q]_{12}$ and $\oxf{p} 
  \comp \oxf{\sigma} = \amalg_1 \comp \oxf{\sigma}$, \eqref{b12_2} by $\oxf{p}$ 
  a partial isomorphism and the fourth axiom of restriction categories, 
  \eqref{b12_3} by $\oxf{q} \comp \oxf{\sigma} = \amalg_2 \comp \oxf{\sigma}$,
  naturality of $\amalg_2$ and Lemma~\ref{lem:basic_ridm}, and \eqref{b12_4} by
  totality of $\amalg_2$ and the first axiom of restriction categories.
  
  The proof of \eqref{lem:case:ff_ff} is analogous.
  
  To see that in each case, for all other choices of $i,j$, $\beta[p,c,q]_{ij}
  \comp \oxf{\sigma} = 0_{I,\Sigma}$, we show
  a few of the cases where $\bigstepp{\sigma}{p}{\stt}$ and 
  $\bigstepp{\sigma}{q}{\stt}$. The rest follow by the same line of reasoning.
  Recall that when $\bigstepp{\sigma}{p}{\stt}$ and $\bigstepp{\sigma}{q}{\stt}$
  we have $\oxf{p} \comp \oxf{\sigma} = \oxf{\stt} \comp \oxf{\sigma} =
  \amalg_1 \comp \oxf{\sigma}$ and $\oxf{q} \comp \oxf{\sigma} = \oxf{\stt}
  \comp \oxf{\sigma} = \amalg_1 \comp \oxf{\sigma}$.
  \begin{align}
    \beta[p,c,q]_{12} \comp \oxf{\sigma} & = \amalg_2^\dag \comp (\id_\Sigma
    \oplus \oxf{c}) \comp \oxf{q} \comp \oxf{p}^\dag \comp \amalg_1 \comp
    \oxf{\sigma} = \amalg_2^\dag \comp (\id_\Sigma
    \oplus \oxf{c}) \comp \oxf{q} \comp \oxf{p}^\dag \comp \oxf{p} \comp
    \oxf{\sigma} \label{nb12_1}\\
    & = \amalg_2^\dag \comp (\id_\Sigma \oplus \oxf{c}) \comp \oxf{q} \comp
    \ridm{\oxf{p}} \comp \oxf{\sigma} = \amalg_2^\dag \comp (\id_\Sigma \oplus
    \oxf{c}) \comp \oxf{q} \comp \oxf{\sigma} \comp \ridm{\oxf{p} \comp 
    \oxf{\sigma}} \label{nb12_2}\\
    & = \amalg_2^\dag \comp (\id_\Sigma \oplus
    \oxf{c}) \comp \amalg_1 \comp \oxf{\sigma} \comp \ridm{\oxf{p} \comp 
    \oxf{\sigma}} = \amalg_2^\dag \comp \amalg_1 \comp \id_\Sigma \comp
    \oxf{\sigma} \comp \ridm{\oxf{p} \comp \oxf{\sigma}} \label{nb12_3}\\
    & = 0_{\Sigma, \Sigma} \comp \id_\Sigma \comp
    \oxf{\sigma} \comp \ridm{\oxf{p} \comp \oxf{\sigma}} = 
    0_{I,\Sigma}\label{nb12_4}
  \end{align}
  where \eqref{nb12_1} and \eqref{nb12_2} follow as in \eqref{b12_1} and
  \eqref{b12_2}, \eqref{nb12_3} by $\oxf{q} \comp \oxf{\sigma} = \amalg_1 \comp
  \oxf{\sigma}$ and naturality of $\amalg_1$, and \eqref{nb12_4} by
  Lemma~\ref{lem:basic_disj} and the universal mapping property of the zero
  object.
  
  Finally, for $\beta[p,c,q]_{21}$ we have
  \begin{align}
    \beta[p,c,q]_{21} \comp \oxf{\sigma} & = \amalg_1^\dag \comp (\id_\Sigma
    \oplus \oxf{c}) \comp \oxf{q} \comp \oxf{p}^\dag \comp \amalg_2 \comp
    \oxf{\sigma} = \amalg_1^\dag \comp (\id_\Sigma
    \oplus \oxf{c}) \comp \oxf{q} \comp \oxf{p}^\dag \comp \gamma \comp 
    \amalg_1 \comp \oxf{\sigma} \label{nb21_1} \\
    & = \amalg_1^\dag \comp (\id_\Sigma
    \oplus \oxf{c}) \comp \oxf{q} \comp \oxf{p}^\dag \comp \gamma \comp 
    \oxf{p} \comp \oxf{\sigma} = \amalg_1^\dag \comp (\id_\Sigma
    \oplus \oxf{c}) \comp \oxf{q} \comp \oxf{p}^\dag \comp 
    \oxf{\snot{p}} \comp \oxf{\sigma} \label{nb21_2} \\
    & = \amalg_1^\dag \comp (\id_\Sigma \oplus \oxf{c}) \comp \oxf{q} \comp
    0_{\Sigma,\Sigma} \comp \oxf{\sigma} = 0_{I,\Sigma} \label{nb21_3}
  \end{align}
  where \eqref{nb21_1} follows by definition of $\beta[p,c,q]_{21}$ and $\gamma 
  \comp \amalg_1 = \amalg_2$, \eqref{nb21_2} by $\oxf{p} \comp \oxf{\sigma} =
  \amalg_1 \comp \oxf{\sigma}$ and definition of $\oxf{\snot{p}}$, and 
  \eqref{nb21_3} by $\oxf{p}^\dag \comp \oxf{\snot{p}} = 0_{\Sigma,\Sigma}$ and
  the universal mapping property of the zero object.
\end{proof}

With this lemma done, we turn our attention to the preservation lemma for commands in order to show computational soundness.

\begin{lem}\label{lem:preservation_cmd}
  If $\bigstepc{\sigma}{c}{\sigma'}$ then $\oxf{c} \comp \oxf{\sigma} = 
  \oxf{\sigma'}$.
\end{lem}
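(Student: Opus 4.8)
The plan is to induct on the structure of the derivation $\mathcal{D}$ of $\bigstepc{\sigma}{c}{\sigma'}$, dispatching on the final rule applied among the eight rules of Figure~\ref{fig:semantics_cmd}, exactly as Lemma~\ref{lem:preservation_pred} does for predicates. The two structurally trivial cases are immediate: for $\sskip$ we have $\oxf{\sskip} \comp \oxf{\sigma} = \id_\Sigma \comp \oxf{\sigma} = \oxf{\sigma}$, and for $\sseq{c_1}{c_2}$ two applications of the induction hypothesis give $\oxf{c_2} \comp \oxf{c_1} \comp \oxf{\sigma} = \oxf{c_2} \comp \oxf{\sigma'} = \oxf{\sigma''}$.

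For the two selection rules I would first rewrite $\oxf{p} \comp \oxf{\sigma}$ using Lemma~\ref{lem:preservation_pred}: in the true-rule the premise $\bigstepp{\sigma}{p}{\stt}$ yields $\oxf{p} \comp \oxf{\sigma} = \amalg_1 \comp \oxf{\sigma}$. Pushing $\amalg_1$ through $\oxf{c_1} \oplus \oxf{c_2}$ by naturality and applying the induction hypothesis on $c_1$ turns $(\oxf{c_1} \oplus \oxf{c_2}) \comp \amalg_1 \comp \oxf{\sigma}$ into $\amalg_1 \comp \oxf{\sigma'}$. The remaining premise $\bigstepp{\sigma'}{q}{\stt}$ lets me replace $\amalg_1 \comp \oxf{\sigma'}$ by $\oxf{q} \comp \oxf{\sigma'}$, so that $\oxf{q}^\dag \comp \amalg_1 \comp \oxf{\sigma'} = \oxf{q}^\dag \comp \oxf{q} \comp \oxf{\sigma'} = \ridm{\oxf{q}} \comp \oxf{\sigma'}$; this equals $\oxf{\sigma'} \comp \ridm{\oxf{q} \comp \oxf{\sigma'}}$ by the fourth restriction axiom, and collapses to $\oxf{\sigma'}$ since $\ridm{\oxf{q} \comp \oxf{\sigma'}} = \ridm{\amalg_1 \comp \oxf{\sigma'}} = \id_I$ by Lemma~\ref{lem:total_pred}~\eqref{prtot:const}. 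The false-rule is symmetric, with $\amalg_2$ replacing $\amalg_1$ throughout.

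The loop rules are where the genuine work lies, and I expect them to be the main obstacle. Writing $\beta = \beta[p,c,q]$, the uniform strategy is to expand $\oxf{\sfrom{p}{c}{q}} \comp \oxf{\sigma} = (\beta_{11} \vee \bigvee_{n \in \omega} \beta_{21} \comp \beta_{22}^n \comp \beta_{12}) \comp \oxf{\sigma}$ (and analogously $\oxf{\sloop{p}{c}{q}} \comp \oxf{\sigma} = \bigvee_{n \in \omega} \beta_{21} \comp \beta_{22}^n \comp \oxf{\sigma}$), distribute composition over the joins, and then use Lemma~\ref{lem:beta_preds} to annihilate all but the relevant summand. For the immediate $\mathbf{from}$-rule, the premises $\bigstepp{\sigma}{p}{\stt}$ and $\bigstepp{\sigma}{q}{\stt}$ put us in case~\eqref{lem:case:tt_tt}, giving $\beta_{11} \comp \oxf{\sigma} = \oxf{\sigma}$ while every other component sends $\oxf{\sigma}$ to $0_{I,\Sigma}$; in particular $\beta_{12} \comp \oxf{\sigma} = 0_{I,\Sigma}$ annihilates the entire indexed join, leaving $\oxf{\sigma}$. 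For the immediate $\underline{\mathit{loop}}$-rule, case~\eqref{lem:case:ff_tt} makes the $n = 0$ summand $\beta_{21} \comp \oxf{\sigma} = \oxf{\sigma}$ while $\beta_{22} \comp \oxf{\sigma} = 0_{I,\Sigma}$ kills every $n \ge 1$ summand.

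The two recursive loop rules need one extra manoeuvre and constitute the crux of the argument. In the recursive $\underline{\mathit{loop}}$-rule, after rewriting $\oxf{c} \comp \oxf{\sigma} = \oxf{\sigma'}$ by the induction hypothesis on $c$, case~\eqref{lem:case:ff_ff} gives $\beta_{22} \comp \oxf{\sigma} = \oxf{\sigma'}$ while $\beta_{21} \comp \oxf{\sigma} = 0_{I,\Sigma}$; the $n = 0$ summand vanishes and the surviving terms reindex as $\bigvee_{n \ge 1} \beta_{21} \comp \beta_{22}^{n-1} \comp \oxf{\sigma'} = (\bigvee_{m \in \omega} \beta_{21} \comp \beta_{22}^m) \comp \oxf{\sigma'} = \oxf{\sloop{p}{c}{q}} \comp \oxf{\sigma'}$, which equals $\oxf{\sigma''}$ by the induction hypothesis on the premise $\bigstepc{\sigma'}{\sloop{p}{c}{q}}{\sigma''}$. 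The recursive $\mathbf{from}$-rule is even cleaner: case~\eqref{lem:case:tt_ff} gives $\beta_{12} \comp \oxf{\sigma} = \oxf{\sigma'}$ and $\beta_{11} \comp \oxf{\sigma} = 0_{I,\Sigma}$, so after discarding the $\beta_{11}$ term the surviving join is $\bigvee_{n \in \omega} \beta_{21} \comp \beta_{22}^n \comp \oxf{\sigma'} = \oxf{\sloop{p}{c}{q}} \comp \oxf{\sigma'} = \oxf{\sigma''}$ with no reindexing required, since $\beta_{12}$ already supplies the first unrolling. The only bookkeeping to watch throughout is the existence of the distributed joins, but this is automatic: the joins in the trace and in $\oxf{\sloop{p}{c}{q}}$ exist by assumption, postcomposition by the relevant $\amalg_j^\dag$ and precomposition with $\oxf{\sigma}$ preserve them, and the discarded terms are zero maps, which are units for joins.
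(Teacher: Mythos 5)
Your proposal is correct and follows essentially the same route as the paper's proof: induction on the derivation, with Lemma~\ref{lem:preservation_pred} plus naturality of the injections handling the conditional cases, and Lemma~\ref{lem:beta_preds} together with distributivity of composition over joins (unrolling/reindexing the $n=0$ summand) handling the four loop cases. The only cosmetic divergence is in collapsing $\oxf{\sigma'} \comp \ridm{\oxf{q} \comp \oxf{\sigma'}}$, where you invoke totality via Lemma~\ref{lem:total_pred} while the paper uses the first restriction axiom; both are valid.
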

\begin{proof}
  By induction on the structure of the derivation $\mathcal{D}$ of 
  $\bigstepc{\sigma}{c}{\sigma'}$.
  
  \begin{itemize}
    \item Case $\mathcal{D} = \dfrac{}{\bigstepc{\sigma}{\sskip}{\sigma}}$. We 
    have $\oxf{\sskip} \comp \oxf{\sigma} = \id_\Sigma \comp \oxf{\sigma} =
    \oxf{\sigma}$.
    \item Case $\mathcal{D} = \dfrac{\bigstepc{\sigma}{c_1}{\sigma'} \qquad
    \bigstepc{\sigma'}{c_2}{\sigma''}}
    {\bigstepc{\sigma}{\sseq{c_1}{c_2}}{\sigma''}}$.
    \\[0.5\baselineskip] By induction, $\oxf{c_1} \comp \oxf{\sigma} =
    \oxf{\sigma'}$ and $\oxf{c_2} \comp \oxf{\sigma'} = \oxf{\sigma''}$. But 
    then $$\oxf{\sseq{c_1}{c_2}} \comp \oxf{\sigma} = \oxf{c_2} \comp \oxf{c_1} 
    \comp \oxf{\sigma} = \oxf{c_2} \comp \oxf{\sigma'} = \oxf{\sigma''}$$
    as desired.
    \item Case $\mathcal{D} = \dfrac{
      \bigstepp{\sigma}{p}{\stt} \quad\enspace
      \bigstepc{\sigma}{c_1}{\sigma'} \quad\enspace
      \bigstepp{\sigma'}{q}{\stt}
    }{
      \bigstepc{\sigma}{\sif{p}{c_1}{c_2}{q}}{\sigma'}
    }$.\\[0.5\baselineskip]
    By induction, $\oxf{c_1} \comp \oxf{\sigma} = \oxf{\sigma'}$, and by 
    Lemma~\ref{lem:preservation_pred}, $\oxf{p} \comp \oxf{\sigma} = 
    \oxf{\stt} \comp \oxf{\sigma} = \amalg_1 \comp \oxf{\sigma}$ and $\oxf{q}
    \comp \oxf{\sigma'} = \oxf{\stt} \comp \oxf{\sigma'} = \amalg_1 \comp
    \oxf{\sigma'}$. We compute:
    \begin{align}
      \oxf{\sif{p}{c_1}{c_2}{q}} \comp \oxf{\sigma} & = \oxf{q}^\dag \comp
      (\oxf{c_1} \oplus \oxf{c_2}) \comp \oxf{p} \comp \oxf{\sigma} 
      \label{pre_if1_1}\\
      & = \oxf{q}^\dag \comp (\oxf{c_1} \oplus \oxf{c_2}) \comp \amalg_1 \comp
      \oxf{\sigma} \label{pre_if1_2}\\
      & = \oxf{q}^\dag \comp \amalg_1 \comp \oxf{c_1} \comp \oxf{\sigma}
      = \oxf{q}^\dag \comp \amalg_1 \comp \oxf{\sigma'} 
      = \oxf{q}^\dag \comp \oxf{q} \comp \oxf{\sigma'} \label{pre_if1_3}\\
      & = \ridm{\oxf{q}} \comp \oxf{\sigma'} = \oxf{\sigma'} \comp
      \ridm{\oxf{q} \comp \oxf{\sigma'}} = \oxf{\sigma'} \comp
      \ridm{\amalg_1 \comp \oxf{\sigma'}} \label{pre_if1_4}\\
      & = \oxf{\sigma'} \comp \ridm{\ridm{\amalg_1} \comp \oxf{\sigma'}} = 
      \oxf{\sigma'} \comp \ridm{\id_\Sigma \comp \oxf{\sigma'}} = \oxf{\sigma'} 
      \comp \ridm{\oxf{\sigma'}} = \oxf{\sigma'}\label{pre_if1_5}
    \end{align}
    where \eqref{pre_if1_1} follows by definition of
    $\oxf{\sif{p}{c_1}{c_2}{q}}$; \eqref{pre_if1_2} by $\oxf{p} \comp
    \oxf{\sigma} = \amalg_1 \comp \oxf{\sigma}$; \eqref{pre_if1_3} by
    naturality of $\amalg_1$, $\oxf{c_1} \comp \oxf{\sigma} = \oxf{\sigma'}$,
    and $\oxf{q} \comp \oxf{\sigma'} = \amalg_1 \comp \oxf{\sigma'}$;
    \eqref{pre_if1_4} by $\oxf{q}$ a partial isomorphism, the fourth axiom of
    restriction categories, and $\oxf{q} \comp \oxf{\sigma'} = \amalg_1 \comp
    \oxf{\sigma'}$; and \eqref{pre_if1_5} by Lemmas \ref{lem:basic_ridm} and
    \ref{lem:basic_disj} and the first axiom of restriction categories.

    \item Case $\mathcal{D} = \dfrac{
      \bigstepp{\sigma}{p}{\sff} \quad\enspace
      \bigstepc{\sigma}{c_2}{\sigma'} \quad\enspace
      \bigstepp{\sigma'}{q}{\sff}
    }{
      \bigstepc{\sigma}{\sif{p}{c_1}{c_2}{q}}{\sigma'}
    }$.\\[0.5\baselineskip]
    By induction, $\oxf{c_2} \comp \oxf{\sigma} = \oxf{\sigma'}$, and by 
    Lemma~\ref{lem:preservation_pred}, $\oxf{p} \comp \oxf{\sigma} = 
    \oxf{\sff} \comp \oxf{\sigma} = \amalg_2 \comp \oxf{\sigma}$ and $\oxf{q}
    \comp \oxf{\sigma'} = \oxf{\sff} \comp \oxf{\sigma'} = \amalg_2 \comp
    \oxf{\sigma'}$. We have
    \begin{align}
      \oxf{\sif{p}{c_1}{c_2}{q}} \comp \oxf{\sigma} & = \oxf{q}^\dag \comp
      (\oxf{c_1} \oplus \oxf{c_2}) \comp \oxf{p} \comp \oxf{\sigma} \\
      & = \oxf{q}^\dag \comp (\oxf{c_1} \oplus \oxf{c_2}) \comp \amalg_2 \comp
      \oxf{\sigma} \\
      & = \oxf{q}^\dag \comp \amalg_2 \comp \oxf{c_2} \comp \oxf{\sigma}
      = \oxf{q}^\dag \comp \amalg_2 \comp \oxf{\sigma'} 
      = \oxf{q}^\dag \comp \oxf{q} \comp \oxf{\sigma'} \\
      & = \ridm{\oxf{q}} \comp \oxf{\sigma'} = \oxf{\sigma'} \comp
      \ridm{\oxf{q} \comp \oxf{\sigma'}} = \oxf{\sigma'} \comp
      \ridm{\amalg_2 \comp \oxf{\sigma'}} \\
      & = \oxf{\sigma'} \comp \ridm{\ridm{\amalg_2} \comp \oxf{\sigma'}} = 
      \oxf{\sigma'} \comp \ridm{\id_\Sigma \comp \oxf{\sigma'}} = \oxf{\sigma'} 
      \comp \ridm{\oxf{\sigma'}} = \oxf{\sigma'}.
    \end{align}
    which follows by similar arguments as in \eqref{pre_if1_1} -- 
    \eqref{pre_if1_5}.
    
    \item Case $\mathcal{D} = \dfrac{
      \bigstepp{\sigma}{p}{\stt} \quad\enspace
      \bigstepp{\sigma}{q}{\stt}
    }{
      \bigstepc{\sigma}{\sfrom{p}{c}{q}}{\sigma}
    }$. \\[0.5\baselineskip]
    Since $\bigstepp{\sigma}{p}{\stt}$ and $\bigstepp{\sigma}{q}{\stt}$, by 
    Lemma~\ref{lem:beta_preds} we get $\beta[p,c,q]_{11} \comp \oxf{\sigma} = 
    \oxf{\sigma}$ and $\beta[p,c,q]_{12} \comp \oxf{\sigma} = 0_{I,\Sigma}$,
    and so
    \begin{align}
      & \oxf{\sfrom{p}{c}{q}} \comp \oxf{\sigma} = 
      \Tr_{\Sigma,\Sigma}^\Sigma(\beta[p,c,q]) \label{sfrom1_1} \\
      & \qquad = \left(\beta[p,c,q]_{11} \vee \bigvee_{n \in \omega}
      \beta[p,c,q]_{21} ~\comp \beta[p,c,q]_{22}^n ~\comp
      \beta[p,c,q]_{12}\right) \comp \oxf{\sigma} \label{sfrom1_2} \\
      & \qquad = \left(\left(\beta[p,c,q]_{11} \comp \oxf{\sigma}\right) \vee
      \bigvee_{n \in \omega} \beta[p,c,q]_{21} ~\comp \beta[p,c,q]_{22}^n
      ~\comp \beta[p,c,q]_{12} \comp \oxf{\sigma}\right) \label{sfrom1_3} \\
      & \qquad = \oxf{\sigma} \vee
      \bigvee_{n \in \omega} \beta[p,c,q]_{21} ~\comp \beta[p,c,q]_{22}^n
      ~\comp 0_{I,\Sigma}
      = \oxf{\sigma} \vee 0_{I,\Sigma} = \oxf{\sigma} \label{sfrom1_4}
    \end{align}
    where \eqref{sfrom1_1} follows by definition of $\oxf{\sfrom{p}{c}{q}}$,
    \eqref{sfrom1_2} by definition of the canonical trace in join inverse 
    categories (see Proposition~\ref{prop:trace}), \eqref{sfrom1_3} by 
    distributivity of composition over joins, and \eqref{sfrom1_4} by 
    $\beta[p,c,q]_{11} \comp \oxf{\sigma} = \oxf{\sigma}$, 
    $\beta[p,c,q]_{12} \comp \oxf{\sigma} = 0_{I,\Sigma}$, the universal
    mapping property of the zero object, and the fact that zero maps are units
    for joins.
    
    \item Case $\mathcal{D} = \dfrac{
      \bigstepp{\sigma}{p}{\stt} \quad\enspace
      \bigstepp{\sigma}{q}{\sff} \quad\enspace
      \bigstepc{\sigma}{c}{\sigma'} \quad\enspace
      \bigstepc{\sigma'}{\sloop{p}{c}{q}}{\sigma''}
    }{
      \bigstepc{\sigma}{\sfrom{p}{c}{q}}{\sigma''}
    }$. \\[0.5\baselineskip]
    By induction, $\oxf{c} \comp \oxf{\sigma} = \oxf{\sigma'}$ and 
    $\oxf{\sloop{p}{c}{q}} \comp \oxf{\sigma'} = \oxf{\sigma''}$, and since
    $\bigstepp{\sigma}{p}{\stt}$ and $\bigstepp{\sigma}{q}{\sff}$, by 
    Lemma~\ref{lem:beta_preds} we get $\beta[p,c,q]_{11} \comp \oxf{\sigma} = 
    0_{I,\Sigma}$ and $\beta[p,c,q]_{12} \comp \oxf{\sigma} = 
    \oxf{c} \comp \oxf{\sigma}$. Thus
    \begin{align}
      & \oxf{\sfrom{p}{c}{q}} \comp \oxf{\sigma} =
      \Tr_{\Sigma,\Sigma}^\Sigma(\beta[p,c,q]) \comp \oxf{\sigma}
      \label{sfrom2_1}\\
      & \qquad = \left(\beta[p,c,q]_{11} \vee \bigvee_{n \in \omega}
      \beta[p,c,q]_{21} ~\comp \beta[p,c,q]_{22}^n ~\comp
      \beta[p,c,q]_{12}\right) \comp \oxf{\sigma} \label{sfrom2_2}\\
      & \qquad = \left(\left(\beta[p,c,q]_{11} \comp \oxf{\sigma}\right) \vee
      \bigvee_{n \in \omega} \beta[p,c,q]_{21} ~\comp \beta[p,c,q]_{22}^n
      ~\comp \beta[p,c,q]_{12} \comp \oxf{\sigma}\right) \label{sfrom2_3}\\
      & \qquad = \left(0_{I,\Sigma} \vee
      \bigvee_{n \in \omega} \beta[p,c,q]_{21} ~\comp \beta[p,c,q]_{22}^n
      ~\comp \oxf{c} \comp \oxf{\sigma}\right) \label{sfrom2_4}\\
      & \qquad = \bigvee_{n \in \omega} \beta[p,c,q]_{21} ~\comp
      \beta[p,c,q]_{22}^n ~\comp \oxf{\sigma'} \label{sfrom2_5}\\
      & \qquad = \left(\bigvee_{n \in \omega} \beta[p,c,q]_{21} ~\comp
      \beta[p,c,q]_{22}^n\right) \comp \oxf{\sigma'} \label{sfrom2_6}\\
      & \qquad = \oxf{\sloop{p}{c}{q}} \comp \oxf{\sigma'} = \oxf{\sigma''}
      \label{sfrom2_7}
    \end{align}
    where \eqref{sfrom2_1} follows by the definition of $\oxf{\sfrom{p}{c}{q}}$,
    \eqref{sfrom2_2} by the definition of the canonical trace 
    (Proposition~\ref{prop:trace}), \eqref{sfrom2_3} by distributivity of 
    composition over joins, \eqref{sfrom2_4} by $\beta[p,c,q]_{11} \comp
    \oxf{\sigma} = 0_{I,\Sigma}$ and $\beta[p,c,q]_{12} \comp \oxf{\sigma} =
    \oxf{c} \comp \oxf{\sigma}$, \eqref{sfrom2_5} by $\oxf{\sigma'} = \oxf{c} 
    \comp \oxf{\sigma}$ and the fact that zero maps are units for joins, 
    \eqref{sfrom2_6} by distributivity of composition over joins, and 
    \eqref{sfrom2_7} by definition of $\oxf{\sloop{p}{c}{q}}$ and
    $\oxf{\sloop{p}{c}{q}} \comp \oxf{\sigma'} = \oxf{\sigma''}$.
    
    \item Case $\mathcal{D} = \dfrac{
      \bigstepp{\sigma}{p}{\sff} \quad\enspace
      \bigstepp{\sigma}{q}{\stt}
    }{
      \bigstepc{\sigma}{\sloop{p}{c}{q}}{\sigma}
    }$. \\[0.5\baselineskip]
    Since $\bigstepp{\sigma}{p}{\sff}$ and $\bigstepp{\sigma}{q}{\stt}$, by 
    Lemma~\ref{lem:beta_preds} we get $\beta[p,c,q]_{22} \comp \oxf{\sigma} = 
    0_{I,\Sigma}$ and $\beta[p,c,q]_{21} \comp \oxf{\sigma} = \oxf{\sigma}$. 
    This gives us
    \begin{align}
      \oxf{\sloop{p}{c}{q}} \comp \oxf{\sigma} & = 
      \left(\bigvee_{n \in \omega} \beta[p,c,q]_{21} \comp \beta[p,c,q]_{22}^n 
      \right) \comp \oxf{\sigma} \label{sloop1_1}\\
      & = \left(\beta[p,c,q]_{21} \vee \bigvee_{n \in \omega} \beta[p,c,q]_{21} 
      \comp \beta[p,c,q]_{22}^{n+1} \right) \comp \oxf{\sigma} 
      \label{sloop1_2}\\
      & = \left(\left(\beta[p,c,q]_{21} \comp \oxf{\sigma}\right) \vee 
      \bigvee_{n \in \omega} \beta[p,c,q]_{21} 
      \comp \beta[p,c,q]_{22}^{n+1} \comp \oxf{\sigma} \right) 
      \label{sloop1_3}\\
      & = \left(\oxf{\sigma} \vee 
      \bigvee_{n \in \omega} \beta[p,c,q]_{21} 
      \comp \beta[p,c,q]_{22}^{n} \comp \beta[p,c,q]_{22} \comp \oxf{\sigma} 
      \right) \label{sloop1_4}\\
      & = \left(\oxf{\sigma} \vee 
      \bigvee_{n \in \omega} \beta[p,c,q]_{21} 
      \comp \beta[p,c,q]_{22}^{n} \comp 0_{I,\Sigma} \right) \label{sloop1_5}\\
      & = \oxf{\sigma} \vee 0_{I, \Sigma} = \oxf{\sigma} \label{sloop1_6}
    \end{align}
    where \eqref{sloop1_1} follows by definition of $\oxf{\sloop{p}{c}{q}}$, 
    \eqref{sloop1_2} by unrolling the case for $n=0$, \eqref{sloop1_3} by 
    distributivity of composition over joins, \eqref{sloop1_4} by 
    $\beta[p,c,q]_{21} \comp \oxf{\sigma} = \oxf{\sigma}$ and unrolling of 
    the $n+1$-ary composition, \eqref{sloop1_5} by $\beta[p,c,q]_{22} \comp
    \oxf{\sigma} = 0_{I,\Sigma}$, and \eqref{sloop1_6} by the universal mapping
    property of the zero object and the fact that zero maps are units for joins.
    
    \item Case $\mathcal{D} = \dfrac{
      \bigstepp{\sigma}{p}{\sff} \quad\enspace
      \bigstepp{\sigma}{q}{\sff} \quad\enspace
      \bigstepc{\sigma}{c}{\sigma'} \quad\enspace
      \bigstepc{\sigma'}{\sloop{p}{c}{q}}{\sigma''}
    }{
      \bigstepc{\sigma}{\sloop{p}{c}{q}}{\sigma''}
    }$. \\[0.5\baselineskip]
    By induction, $\oxf{c} \comp \oxf{\sigma} = \oxf{\sigma'}$ and 
    $\oxf{\sloop{p}{c}{q}} \comp \oxf{\sigma'} = \oxf{\sigma''}$, and since 
    ${\bigstepp{\sigma}{p}{\sff}}$ and $\bigstepp{\sigma}{q}{\sff}$, it follows 
    by Lemma~\ref{lem:beta_preds} that $\beta[p,c,q]_{22} \comp \oxf{\sigma} =
    \oxf{c} \comp \oxf{\sigma}$ and $\beta[p,c,q]_{21} \comp \oxf{\sigma} =
    0_{I,\Sigma}$. We then have
    \begin{align}
      \oxf{\sloop{p}{c}{q}} \comp \oxf{\sigma} & = \left(\bigvee_{n \in \omega} 
      \beta[p,c,q]_{21} \comp \beta[p,c,q]_{22}^n\right) \comp \oxf{\sigma} 
      \label{sloop2_1}\\
      & = \left(\beta[p,c,q]_{21} \vee \bigvee_{n \in \omega} \beta[p,c,q]_{21} 
      \comp \beta[p,c,q]_{22}^{n+1}\right) \comp \oxf{\sigma} \label{sloop2_2}\\
      & = \left( \left(\beta[p,c,q]_{21} \comp \oxf{\sigma} \right) \vee 
      \bigvee_{n \in \omega} \beta[p,c,q]_{21} \comp
      \beta[p,c,q]_{22}^{n+1} \comp \oxf{\sigma}\right) \label{sloop2_3}\\
      & = 0_{I,\Sigma} \vee \bigvee_{n \in \omega} \beta[p,c,q]_{21} \comp
      \beta[p,c,q]_{22}^{n+1} \comp \oxf{\sigma} \label{sloop2_4}\\
      & = \bigvee_{n \in \omega} \beta[p,c,q]_{21} \comp
      \beta[p,c,q]_{22}^{n} \comp \beta[p,c,q]_{22} \comp \oxf{\sigma} 
      \label{sloop2_5}\\
      & = \bigvee_{n \in \omega} \beta[p,c,q]_{21} \comp
      \beta[p,c,q]_{22}^{n} \comp \oxf{c} \comp \oxf{\sigma} \label{sloop2_6}\\
      & = \bigvee_{n \in \omega} \beta[p,c,q]_{21} \comp
      \beta[p,c,q]_{22}^{n} \comp \oxf{\sigma'} \label{sloop2_7}\\
      & = \left(\bigvee_{n \in \omega} \beta[p,c,q]_{21} \comp
      \beta[p,c,q]_{22}^{n}\right) \comp \oxf{\sigma'} \label{sloop2_8}\\
      & = \oxf{\sloop{p}{c}{q}} \comp \oxf{\sigma'} = \oxf{\sigma''} 
      \label{sloop2_9}
    \end{align}
    where \eqref{sloop2_1} follows by definition of $\oxf{\sloop{p}{c}{q}}$, 
    \eqref{sloop2_2} by unrolling the case for $n=0$, \eqref{sloop2_3} by 
    distributivity of composition over joins, \eqref{sloop2_4} by
    $\beta[p,c,q]_{21} \comp \oxf{\sigma} = 0_{I,\Sigma}$, \eqref{sloop2_5} by 
    zero maps units for joins and unrolling of the $n+1$-ary composition,
    \eqref{sloop2_6} by $\beta[p,c,q]_{22} \comp \oxf{\sigma} =
    \oxf{c} \comp \oxf{\sigma}$, \eqref{sloop2_7} by $\oxf{c} \comp 
    \oxf{\sigma} = \oxf{\sigma'}$, \eqref{sloop2_8} by distributivity of 
    composition over joins, and finally \eqref{sloop2_9} by definition of 
    $\oxf{\sloop{p}{c}{q}}$ and $\oxf{\sloop{p}{c}{q}} \comp \oxf{\sigma'} = 
    \oxf{\sigma''}$.
  \end{itemize}
\end{proof}

This finally allows us to show the computational soundness theorem for commands -- and so, for programs -- in a straightforward manner.

\begin{thm}[Computational soundness]\label{thm:soundness}
  If there exists $\sigma'$ such that $\bigstepc{\sigma}{c}{\sigma'}$ then 
  $\oxf{c} \comp \oxf{\sigma}$ is total.
\end{thm}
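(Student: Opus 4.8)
The plan is to reduce everything to the preservation lemma for commands (Lemma~\ref{lem:preservation_cmd}), which already carries the entire burden of the argument. First I would observe that the hypothesis gives us an explicit derivation of $\bigstepc{\sigma}{c}{\sigma'}$ for some state $\sigma'$. Feeding this derivation directly into Lemma~\ref{lem:preservation_cmd} yields the equality $\oxf{c} \comp \oxf{\sigma} = \oxf{\sigma'}$, so the question of totality of $\oxf{c} \comp \oxf{\sigma}$ is transported to the question of totality of $\oxf{\sigma'}$.

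The second and final step is to appeal to the definition of a model of a structured reversible flowchart language: its third condition requires that each $\mathcal{L}$-state $\sigma'$ be interpreted as a \emph{total} morphism $I \tot{\oxf{\sigma'}} \Sigma$, i.e.\ $\ridm{\oxf{\sigma'}} = \id_I$. Combining this with the equality from the preservation lemma, we obtain $\ridm{\oxf{c} \comp \oxf{\sigma}} = \ridm{\oxf{\sigma'}} = \id_I$, which is precisely the assertion that $\oxf{c} \comp \oxf{\sigma}$ is total. This completes the proof in two lines.

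I do not expect any genuine obstacle at this stage, since all of the technical difficulty has already been discharged in establishing Lemma~\ref{lem:preservation_cmd} (in particular the cases for conditionals and the trace-based loop semantics, where the behaviour of the decisions $\oxf{p}$, $\oxf{q}$ and the components $\beta[p,c,q]_{ij}$ had to be analysed via Lemma~\ref{lem:beta_preds}). The soundness theorem itself is a corollary of preservation together with the totality of state denotations built into the notion of a model; the only thing worth remarking is that the preservation lemma was proved uniformly for commands and the meta-command $\sloop{p}{c}{q}$, so no separate treatment of the internal loop syntax is needed here.
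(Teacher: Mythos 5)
Your proposal is correct and is essentially identical to the paper's own proof: both apply Lemma~\ref{lem:preservation_cmd} to obtain $\oxf{c} \comp \oxf{\sigma} = \oxf{\sigma'}$ and then conclude totality from the model's assumption that every state denotation is total. Nothing is missing.
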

\begin{proof}
  Suppose there exists $\sigma'$ such that $\bigstepc{\sigma}{c}{\sigma'}$.
  By Lemma~\ref{lem:preservation_cmd}, $\oxf{c} \comp \oxf{\sigma} = 
  \oxf{\sigma'}$, and since the interpretation of \emph{any} state is assumed 
  to be total, it follows that $\ridm{\oxf{c} \comp \oxf{\sigma}} = 
  \ridm{\oxf{\sigma'}} = \id_I$, which was what we wanted.
\end{proof}

With computational soundness done, we only have computational adequacy left to
prove. Adequacy is much simpler than usual in our case, as we have no higher
order data to deal with, and as such, it can be shown by plain structural
induction rather than by the assistance of logical relations. Nevertheless, we
require two technical lemmas in order to succeed.

\begin{lem}\label{lem:trace_disjoint}
  Let $f : A \oplus U \to B \oplus U$ and $s : I \to A$ be morphisms. If 
  $\Tr^U_{A,B}(f) \comp s$ is total, either 
  $\Tr^U_{A,B}(f) \comp s = f_{11} \comp s$, or there exists $n \in \omega$ 
  such that $\Tr^U_{A,B}(f) \comp s = f_{21} \comp f_{22}^n \comp f_{12} 
  \comp s$.
\end{lem}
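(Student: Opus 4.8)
The plan is to compose the trace formula with $s$, distribute $s$ across the join, and then exploit the simplicity of the indexing object $I$ to force all but one of the resulting summands to vanish.

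First I would apply distributivity of composition over joins (axiom~\eqref{def:join_rest_compr} of Definition~\ref{def:join_restriction_cat}) to rewrite
$$\Tr_{A,B}^U(f) \comp s = (f_{11} \comp s) \vee \bigvee_{n \in \omega} (f_{21} \comp f_{22}^n \comp f_{12} \comp s),$$
a countable join of morphisms $I \to B$. Abbreviating $t_{-1} = f_{11}$ and $t_n = f_{21} \comp f_{22}^n \comp f_{12}$ for $n \in \omega$, the right-hand side becomes $\bigvee_k (t_k \comp s)$ indexed over $\{-1\} \cup \omega$, and the goal is to show this join equals a single one of the terms $t_k \comp s$.

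Next I would use that each $t_k \comp s$ is a morphism $I \to B$, so its restriction idempotent $\ridm{t_k \comp s}$ is an endomorphism of $I$ below $\id_I$, hence by the defining assumption on the model equal to either $\id_I$ or $0_{I,I}$; in the latter case $t_k \comp s = (t_k \comp s) \comp \ridm{t_k \comp s} = 0$. Thus every term is either total or the zero map. Combining the join-restriction axiom~\eqref{def:join_rest_cont} with the hypothesis that $\Tr_{A,B}^U(f) \comp s$ is total gives $\bigvee_k \ridm{t_k \comp s} = \id_I$; since each of these idempotents lies in $\{\id_I, 0_{I,I}\}$ and $\id_I \neq 0_{I,I}$, at least one term $t_{k_0} \comp s$ must be total.

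Finally I would collapse the join onto that single total term. Since the trace is defined, the family $\{t_k\}_k$ is compatible; as composition preserves compatibility (equivalently, the very existence of the right-hand join forces it), the family $\{t_k \comp s\}_k$ is compatible as well. Any two total members $u,v$ of a compatible family satisfy $u = u \comp \ridm{v} = v \comp \ridm{u} = v$ using $\ridm{u} = \ridm{v} = \id_I$, so all total terms coincide with $v := t_{k_0} \comp s$ and the rest are zero; as zero is the bottom of the order, $\bigvee_k (t_k \comp s) = v = t_{k_0} \comp s$. Reading off whether $k_0 = -1$ or $k_0 = n \in \omega$ then yields the two alternatives of the statement. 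I expect the last step to be the main obstacle: one must carefully justify that $\{t_k \comp s\}_k$ is compatible and that a compatible family of total maps out of the simple object $I$ contains at most one distinct value; everything before that is routine once simplicity of $I$ has pinned each term into $\{\text{total}, 0\}$.
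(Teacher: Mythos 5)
Your proof is correct, and it reaches the conclusion by a genuinely different mechanism than the paper in the one step that matters. Both proofs start identically: distribute $s$ over the canonical trace join, observe that each summand $t_k \comp s$ is a morphism $I \to B$ whose restriction idempotent must be $\id_I$ or $0_{I,I}$ by the model axioms (so each summand is total or zero), and use $\ridm{\bigvee_k t_k \comp s} = \bigvee_k \ridm{t_k \comp s} = \id_I$ to extract at least one total summand. The divergence is in how the join is then collapsed onto that summand. The paper invokes a result from the trace construction (Theorem~20 of \cite{Kaarsgaard2017}) stating that the join defining $\Tr^U_{A,B}(f)$ is not merely compatible but \emph{disjoint}: $f_{11} \comp \ridm{f_{21} \comp f_{22}^n \comp f_{12}} = 0$ and $\ridm{t_n} \comp t_m = 0$ for $n \neq m$. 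Disjointness then forces every summand other than the total one to vanish after composition with $s$, since $t_k \comp s \comp \ridm{t_{k_0} \comp s} = t_k \comp \ridm{t_{k_0}} \comp s = 0$ while $\ridm{t_{k_0} \comp s} = \id_I$. You avoid citing that external disjointness theorem entirely: you use only compatibility of the family (which indeed comes for free, either because composition preserves compatibility or because any family with a common upper bound is compatible) together with simplicity of $I$ to show all total summands coincide, after which the join of one value and a collection of zeros is that value. What each approach buys: the paper's disjointness argument yields the stronger, operationally meaningful fact that \emph{exactly one} summand is nonzero (a loop cannot exit after both $n$ and $m$ iterations), whereas your argument is more elementary and self-contained — it would apply verbatim to any canonically constructed compatible join composed with a point of $I$, not just to the disjoint join arising from the trace — at the cost of concluding only that the surviving summands are all equal rather than that they are mutually exclusive. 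Since the lemma as stated only asserts equality with \emph{some} summand, your weaker conclusion suffices.
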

\begin{proof}
  Since the trace is canonically constructed, it takes the form
  $$\Tr^U_{A,B}(f) = f_{11} \vee \bigvee_{n \in \omega} f_{21} \comp f_{22}^n 
  \comp f_{12}.$$
  Further, in the proof of Theorem~20 in \cite{Kaarsgaard2017}, it is shown 
  that this join not only exists but is a \emph{disjoint} join, \ie, for any 
  choice of $n \in \omega$,
  \begin{equation*}
    f_{11} \comp \ridm{f_{21} \comp f_{22}^n \comp f_{12}} = \left(f_{21} \comp
    f_{22}^n \comp f_{12}\right) \comp \ridm{f_{11}} = 0_{A,B}
  \end{equation*}
  and for all $n,m \in \omega$ with $n \neq m$,
  \begin{equation*}
    \ridm{f_{21} \comp f_{22}^n \comp f_{12}} \comp f_{21} \comp f_{22}^m
    \comp f_{12} = \ridm{f_{21} \comp f_{22}^m \comp f_{12}} \comp
    f_{21} \comp f_{22}^n \comp f_{12} = 0_{A,B}.
  \end{equation*}
  But then, 
  \begin{align*}
    \ridm{\Tr^U_{A,B}(f) \comp s} & = \ridm{\left(f_{11} \vee \bigvee_{n \in
    \omega} f_{21} \comp f_{22}^n \comp f_{12}\right) \comp s} \\
    & = \ridm{\left(f_{11} \comp s\right) \vee \bigvee_{n \in \omega}
    \left(f_{21} \comp f_{22}^n \comp f_{12} \comp s\right)} \\
    & = \ridm{f_{11} \comp s} \vee \bigvee_{n \in \omega}
    \ridm{f_{21} \comp f_{22}^n \comp f_{12} \comp s}.
  \end{align*}
  Since all of the morphisms $\ridm{f_{11} \comp s}$ and $\ridm{f_{21} \comp 
  f_{22}^n \comp f_{12} \comp s}$ for any $n \in \omega$ are restriction 
  idempotents $I \to I$, it follows for each of them that they are either equal 
  to $\id_I$ or to $0_{I,I}$. Suppose that none of these are equal to the 
  identity $\id_I$. Then they must all be $0_{I,I}$, and so
  $\ridm{\Tr^U_{A,B}(f) \comp s} = 0_{I,I} \neq \id_I$, contradicting totality. 
  On the other hand, suppose that there exists an identity among these. Then, 
  it follows by the disjointness property above that the rest must be $0_{I,I}$.
\end{proof}

With these done, we are finally ready to tackle the adequacy theorem.

\begin{thm}[Computational adequacy]\label{thm:adequacy}
  If $\oxf{c} \comp \oxf{\sigma}$ is total then there exists $\sigma'$ such 
  that ${\bigstepc{\sigma}{c}{\sigma'}}$.
\end{thm}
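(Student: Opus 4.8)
The plan is to proceed by induction on the structure of the command $c$, where we regard the meta-command $\sloop{p}{c'}{q}$ as a command whose only proper subcommand is $c'$. The case $\sskip$ is immediate with $\sigma' = \sigma$. For $\sseq{c_1}{c_2}$, totality of $\oxf{c_2} \comp \oxf{c_1} \comp \oxf{\sigma}$ forces $\oxf{c_1} \comp \oxf{\sigma}$ total (Lemma~\ref{lem:basic_ridm}), so the induction hypothesis gives $\sigma'$ with $\bigstepc{\sigma}{c_1}{\sigma'}$ and, via Lemma~\ref{lem:preservation_cmd}, $\oxf{c_1} \comp \oxf{\sigma} = \oxf{\sigma'}$; applying the hypothesis again to the now-total $\oxf{c_2} \comp \oxf{\sigma'}$ yields $\sigma''$ with $\bigstepc{\sigma'}{c_2}{\sigma''}$, whence $\bigstepc{\sigma}{\sseq{c_1}{c_2}}{\sigma''}$.

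For $\sif{p}{c_1}{c_2}{q}$, totality of $\oxf{q}^\dag \comp (\oxf{c_1} \oplus \oxf{c_2}) \comp \oxf{p} \comp \oxf{\sigma}$ forces $\oxf{p} \comp \oxf{\sigma}$ total, so Lemma~\ref{lem:adequacy_pred} gives some $b$ with $\bigstepp{\sigma}{p}{b}$ and Lemma~\ref{lem:preservation_pred} gives $\oxf{p} \comp \oxf{\sigma} = \amalg_1 \comp \oxf{\sigma}$ or $\amalg_2 \comp \oxf{\sigma}$. In the $b = \stt$ case, naturality rewrites the denotation as $\oxf{q}^\dag \comp \amalg_1 \comp \oxf{c_1} \comp \oxf{\sigma}$, whose totality forces $\oxf{c_1} \comp \oxf{\sigma}$ total; the hypothesis then yields $\sigma'$ with $\bigstepc{\sigma}{c_1}{\sigma'}$ and $\oxf{c_1} \comp \oxf{\sigma} = \oxf{\sigma'}$. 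To read off $\bigstepp{\sigma'}{q}{\stt}$ I use the codecision identity $\oxf{q}^\dag \comp \amalg_1 = \ridm{\amalg_1^\dag \comp \oxf{q}}$ (which follows from Lemma~\ref{lem:basic_ext} and the fact that the dagger preserves joins): totality of $\oxf{q}^\dag \comp \amalg_1 \comp \oxf{\sigma'}$ forces $\amalg_1^\dag \comp \oxf{q} \comp \oxf{\sigma'}$ total, hence $\oxf{q} \comp \oxf{\sigma'}$ is total with nonzero first component, so Lemmas~\ref{lem:adequacy_pred} and~\ref{lem:preservation_pred} force $b' = \stt$, and the true-branch rule applies. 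The case $b = \sff$ is symmetric.

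The principal work is the loop, which I handle via a nested induction. Fixing $p, c, q$ and writing $\beta = \beta[p,c,q]$, I first establish an auxiliary claim by induction on $n \in \omega$: assuming adequacy already holds for the body $c$, if $\beta_{21} \comp \beta_{22}^n \comp \oxf{\tau}$ is total then some $\tau'$ satisfies $\bigstepc{\tau}{\sloop{p}{c}{q}}{\tau'}$. For $n=0$, peeling off factors by Lemma~\ref{lem:basic_ridm} and combining the codecision identities with Lemmas~\ref{lem:adequacy_pred} and~\ref{lem:preservation_pred} forces $\bigstepp{\tau}{p}{\sff}$ and $\bigstepp{\tau}{q}{\stt}$, so the exit rule gives $\bigstepc{\tau}{\sloop{p}{c}{q}}{\tau}$. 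For $n+1$, totality of $\beta_{21} \comp \beta_{22}^n \comp (\beta_{22} \comp \oxf{\tau})$ forces $\beta_{22} \comp \oxf{\tau}$ total, which by the same extraction forces $p$ and $q$ both false at $\tau$; Lemma~\ref{lem:beta_preds} then gives $\beta_{22} \comp \oxf{\tau} = \oxf{c} \comp \oxf{\tau}$, so adequacy for $c$ yields $\tau_1$ with $\bigstepc{\tau}{c}{\tau_1}$ and $\oxf{c} \comp \oxf{\tau} = \oxf{\tau_1}$; the inner hypothesis applied to the total $\beta_{21} \comp \beta_{22}^n \comp \oxf{\tau_1}$ supplies $\tau_2$ with $\bigstepc{\tau_1}{\sloop{p}{c}{q}}{\tau_2}$, and the recursive loop rule gives $\bigstepc{\tau}{\sloop{p}{c}{q}}{\tau_2}$.

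With the claim in hand, the two remaining cases close. For $\sloop{p}{c'}{q}$, the main hypothesis gives adequacy for $c'$; from totality of $\bigvee_{n} \beta_{21} \comp \beta_{22}^n \comp \oxf{\sigma}$ I extract, using that restriction idempotents $I \to I$ are $\id_I$ or $0_{I,I}$, some $n$ with $\beta_{21} \comp \beta_{22}^n \comp \oxf{\sigma}$ total, then apply the claim. For $\sfrom{p}{c'}{q}$, Lemma~\ref{lem:trace_disjoint} applied to $\Tr_{\Sigma,\Sigma}^\Sigma(\beta) \comp \oxf{\sigma}$ gives two subcases: either it equals $\beta_{11} \comp \oxf{\sigma}$, where extraction plus Lemma~\ref{lem:beta_preds} forces $p,q$ both true and the immediate-exit rule applies with $\sigma' = \sigma$; or it equals $\beta_{21} \comp \beta_{22}^n \comp \beta_{12} \comp \oxf{\sigma}$ for some $n$, where totality of $\beta_{12} \comp \oxf{\sigma}$ forces $p$ true, $q$ false, so $\beta_{12} \comp \oxf{\sigma} = \oxf{c'} \comp \oxf{\sigma}$ by Lemma~\ref{lem:beta_preds}, adequacy for $c'$ produces $\sigma'$ with $\oxf{c'} \comp \oxf{\sigma} = \oxf{\sigma'}$, and the claim applied to the total $\beta_{21} \comp \beta_{22}^n \comp \oxf{\sigma'}$ yields $\sigma''$ with $\bigstepc{\sigma'}{\sloop{p}{c'}{q}}{\sigma''}$, whence the recursive while rule gives $\bigstepc{\sigma}{\sfrom{p}{c'}{q}}{\sigma''}$. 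I expect the main obstacle to be exactly this recurring step of converting categorical totality of the various $\beta_{ij} \comp \oxf{\sigma}$ into the operational truth-values of $p$ and $q$, where the codecision identities $\oxf{p}^\dag \comp \amalg_i = \ridm{\amalg_i^\dag \comp \oxf{p}}$, Lemma~\ref{lem:beta_preds}, and the simplicity of the indexing object $I$ must be combined carefully, together with arranging the $\sfrom$/$\sloop$ recursion as a single inner induction on the iteration count.
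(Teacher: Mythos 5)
Your proposal is correct and follows essentially the same route as the paper's own proof: the same outer structural induction, the same reliance on Lemmas~\ref{lem:basic_ridm}, \ref{lem:adequacy_pred}, \ref{lem:preservation_pred}, \ref{lem:preservation_cmd}, \ref{lem:beta_preds}, and \ref{lem:trace_disjoint}, the same codecision identity $\oxf{q}^\dag \comp \amalg_i = \ridm{\amalg_i^\dag \comp \oxf{q}}$ for converting totality into operational truth values, and the same inner induction on the iteration count. The only difference is organizational: you promote the meta-command $\sloop{p}{c'}{q}$ to its own case of the structural induction and state the iteration-count argument as a standalone auxiliary claim, whereas the paper (for which $\sloop{p}{c'}{q}$ is not a syntactically valid command) embeds exactly that claim as an internal lemma inside the $\sfrom{p}{c'}{q}$ case.
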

\begin{proof}
  By induction on the structure of $c$.
  \begin{itemize}
    \item Case $c = \sskip$. Then $\bigstepc{\sigma}{\sskip}{\sigma}$ by 
    $\dfrac{}{\bigstepc{\sigma}{\sskip}{\sigma}}$.
    \item Case $c = \sseq{c_1}{c_2}$. \\[0.5\baselineskip] In this case, 
    $\oxf{c} \comp \oxf{\sigma} = \oxf{\sseq{c_1}{c_2}} \comp \oxf{\sigma} = 
    \oxf{c_2} \comp \oxf{c_1} \comp \oxf{\sigma}$. Since this is total,
    so is $\oxf{c_1} \comp \oxf{\sigma}$ by Lemma~\ref{lem:basic_ridm}. But 
    then, by induction,
    there exists $\sigma'$ such that $\bigstepc{\sigma}{c_1}{\sigma'}$ by some
    derivation $\mathcal{D}_1$, and by Lemma~\ref{lem:preservation_cmd},
    $\oxf{c_1} \comp \oxf{\sigma} = \oxf{\sigma'}$. But then $\oxf{c_2} \comp
    \oxf{c_1} \comp \oxf{\sigma} = \oxf{c_2} \comp \oxf{\sigma'}$, so by
    induction there exists $\sigma''$ such that
    $\bigstepc{\sigma'}{c_2}{\sigma''}$ by some derivation $\mathcal{D}_2$. But 
    then $\bigstepc{\sigma}{\sseq{c_1}{c_2}}{\sigma''}$ by
    \begin{equation*}
      \dfrac{\overset{\mathcal{D}_1}{\bigstepc{\sigma}{c_1}{\sigma'}}
      \qquad\enspace
      \overset{\mathcal{D}_2}{\bigstepc{\sigma'}{c_2}{\sigma''}}}{\bigstepc{\sigma
      }{\sseq{c_1}{c_2}}{\sigma''}} \enspace.
    \end{equation*}
    \item Case $c = \sif{p}{c_1}{c_2}{q}$. \\[0.5\baselineskip] Thus, $\oxf{c}
    \comp \oxf{\sigma} = \oxf{\sif{p}{c_1}{c_2}{q}} \comp \oxf{\sigma} =
    \oxf{q}^\dag \comp (\oxf{c_1} \oplus \oxf{c_2}) \comp \oxf{p} \comp
    \oxf{\sigma}$, and since this is total, $\oxf{p} \comp \oxf{\sigma}$ is
    total as well by analogous argument to the previous case. It then follows
    by Lemma~\ref{lem:adequacy_pred} that there exists $b$ such that
    $\bigstepp{\sigma}{p}{b}$ by some derivation $\mathcal{D}_1$, and by 
    Lemma~\ref{lem:preservation_pred}, $\oxf{p} \comp \oxf{\sigma} = \oxf{b}
    \comp \oxf{\sigma}$. We have two cases depending on what $b$ is.
    
    When $b = \stt$ we have
    \begin{align*}
      \oxf{c} \comp \oxf{\sigma} & = \oxf{q}^\dag \comp (\oxf{c_1} \oplus
      \oxf{c_2}) \comp \oxf{p} \comp \oxf{\sigma} = \oxf{q}^\dag \comp
      (\oxf{c_1} \oplus \oxf{c_2}) \comp \oxf{\stt} \comp \oxf{\sigma} \\
      & = \oxf{q}^\dag \comp (\oxf{c_1} \oplus \oxf{c_2}) \comp \amalg_1 \comp
      \oxf{\sigma} = \oxf{q}^\dag \comp \amalg_1 \comp \oxf{c_1} \comp 
      \oxf{\sigma}
    \end{align*}
    by $\oxf{p} \comp \oxf{\sigma} = \oxf{\stt} \comp \oxf{\sigma} = \amalg_1
    \comp \oxf{\sigma}$ and naturality of $\amalg_1$. Since this is total,
    $\oxf{c_1} \comp \oxf{\sigma}$ must be total as well by
    Lemma~\ref{lem:basic_ridm}. But then, by induction, there exists $\sigma'$
    such that $\bigstepc{\sigma}{c_1}{\sigma'}$ by some derivation
    $\mathcal{D}_2$, and by Lemma~\ref{lem:preservation_cmd}, $\oxf{c_1} \comp
    \oxf{\sigma} = \oxf{\sigma'}$. Continuing the computation, we get
    \begin{align*}
      \oxf{c} \comp \oxf{\sigma} & = \oxf{q}^\dag \comp \amalg_1 \comp 
      \oxf{c_1} \comp \oxf{\sigma} = \oxf{q}^\dag \comp \amalg_1 \comp 
      \oxf{\sigma'} = (\amalg_1^\dag \comp \oxf{q})^\dag \comp 
      \oxf{\sigma'} = \ridm{\amalg_1^\dag \comp \oxf{q}}^\dag 
      \comp \oxf{\sigma'} \\
      & = \ridm{\amalg_1^\dag \comp \oxf{q}} \comp \oxf{\sigma'} =
      \oxf{\sigma'} \comp \ridm{\amalg_1^\dag \comp \oxf{q} \comp 
      \oxf{\sigma'}} = \oxf{\sigma'} \comp \ridm{\amalg_1^\dag \comp \oxf{q}
      \comp \oxf{\sigma'}} ~\comp \ridm{\oxf{q} \comp \oxf{\sigma'}}
    \end{align*}
    where we exploit that $\oxf{c_1} \comp \oxf{\sigma} = \oxf{\sigma'}$ as 
    well as the fact that $\amalg_1^\dag \comp \oxf{q} = \ridm{\amalg_1^\dag 
    \comp \oxf{q}}$ (by Lemma~\ref{lem:basic_ext}) and the fact that 
    restriction idempotents are their own partial inverses. But then 
    $\ridm{\oxf{q} 
    \comp \oxf{\sigma'}}$ must be total, in turn meaning that
    $\oxf{q} \comp \oxf{\sigma'}$ must be total. But then by
    Lemma~\ref{lem:adequacy_pred}, there must exist $b'$ such that
    $\bigstepp{\sigma'}{q}{b'}$ by some derivation $\mathcal{D}_3$, with
    $\oxf{q} \comp \oxf{\sigma'} = \oxf{b'} \comp \oxf{\sigma'}$ by
    Lemma~\ref{lem:preservation_pred}. Again, we have two cases depending on
    $b'$. If $b'=\stt$, we derive 
    $\bigstepc{\sigma}{\sif{p}{c_1}{c_2}{q}}{\sigma'}$ by
    \begin{equation*}
      \dfrac{\overset{\mathcal{D}_1}{\bigstepp{\sigma}{p}{\stt}} \qquad\enspace
      \overset{\mathcal{D}_2}{\bigstepc{\sigma}{c_1}{\sigma'}} \qquad\enspace
      \overset{\mathcal{D}_3}{\bigstepp{\sigma'}{q}{\stt}}}{\bigstepc{\sigma}{\sif
      {p}{c_1}{c_2}{q}}{\sigma'}}
    \end{equation*}
    On the other hand, when $b' = \sff$, we have
    \begin{align*}
      \oxf{c} \comp \oxf{\sigma} & = \oxf{\sigma'} \comp \ridm{\amalg_1^\dag
      \comp \oxf{q} \comp \oxf{\sigma'}} = \oxf{\sigma'} \comp
      \ridm{\amalg_1^\dag \comp \oxf{\sff} \comp \oxf{\sigma'}} = \oxf{\sigma'}
      \comp \ridm{\amalg_1^\dag \comp \amalg_2 \comp \oxf{\sigma'}} \\ &
      = \oxf{\sigma'} \comp \ridm{0_{\Sigma,\Sigma} \comp \oxf{\sigma'}}
      = \oxf{\sigma'} \comp \ridm{0_{I,\Sigma}}
      = \oxf{\sigma'} \comp 0_{I,I}
      = 0_{I,\Sigma}
    \end{align*}
    using $\oxf{q} \comp \oxf{\sigma'} = \oxf{\sff} \comp \oxf{\sigma'} = 
    \amalg_2 \comp \oxf{\sigma'}$ and $\amalg_1^\dag \comp \amalg_2 = 
    0_{\Sigma,\Sigma}$ by Lemma~\ref{lem:basic_disj}. But then
    $\ridm{\oxf{c} \comp \oxf{\sigma}} = 0_{I,I}$, contradicting 
    $\ridm{\oxf{c} \comp \oxf{\sigma}} = \id_I$ since $0_{I,I} \neq \id_I$ by 
    definition of a model. Thus there exists $\sigma'$ such that 
    $\bigstepc{\sigma}{\sif{p}{c_1}{c_2}{q}}{\sigma'}$.
    
    To show the case when $b = \sff$, we proceed as before. We then have
    \begin{align*}
      \oxf{c} \comp \oxf{\sigma} & = \oxf{q}^\dag \comp (\oxf{c_1} \oplus
      \oxf{c_2}) \comp \oxf{p} \comp \oxf{\sigma} = \oxf{q}^\dag \comp
      (\oxf{c_1} \oplus \oxf{c_2}) \comp \oxf{\sff} \comp \oxf{\sigma} \\
      & = \oxf{q}^\dag \comp (\oxf{c_1} \oplus \oxf{c_2}) \comp \amalg_2 \comp
      \oxf{\sigma} = \oxf{q}^\dag \comp \amalg_2 \comp \oxf{c_2} \comp 
      \oxf{\sigma}
    \end{align*}
    using the fact that $\oxf{p} \comp \oxf{\sigma} = \oxf{\sff} \comp 
    \oxf{\sigma} = \amalg_2 \comp \oxf{\sigma}$ and naturality of $\amalg_2$.
    Thus $\oxf{c_2} \comp \oxf{\sigma}$ must be total by 
    Lemma~\ref{lem:basic_ridm}, which means that by induction there exists 
    $\sigma'$ such that $\bigstepc{\sigma}{c_2}{\sigma'}$ by a derivation 
    $\mathcal{D}_2$, and by Lemma~\ref{lem:preservation_cmd}, $\oxf{c_2} \comp
    \oxf{\sigma} = \oxf{\sigma'}$. Continuing as before, we obtain now that
    \begin{align*}
      \oxf{c} \comp \oxf{\sigma} & = \oxf{q}^\dag \comp \amalg_2 \comp 
      \oxf{c_2} \comp \oxf{\sigma} = \oxf{q}^\dag \comp \amalg_2 \comp 
      \oxf{\sigma'} = (\amalg_2^\dag \comp \oxf{q})^\dag \comp 
      \oxf{\sigma'} = \ridm{\amalg_2^\dag \comp \oxf{q}}^\dag 
      \comp \oxf{\sigma'} \\
      & = \ridm{\amalg_2^\dag \comp \oxf{q}} \comp \oxf{\sigma'} =
      \oxf{\sigma'} \comp \ridm{\amalg_2^\dag \comp \oxf{q} \comp 
      \oxf{\sigma'}} = \oxf{\sigma'} \comp \ridm{\amalg_2^\dag \comp \oxf{q}
      \comp \oxf{\sigma'}} ~\comp \ridm{\oxf{q} \comp \oxf{\sigma'}}
    \end{align*}
    and so $\oxf{q} \comp \oxf{\sigma'}$ must be total in this case as well (by 
    arguments analogous to the corresponding case for $b=\stt$), so
    by Lemma~\ref{lem:adequacy_pred} there must exist $b'$ such that 
    $\bigstepp{\sigma'}{q}{b'}$ by some derivation $\mathcal{D}_3$, and $\oxf{q} 
    \comp \oxf{\sigma'} = \oxf{b'} \comp \oxf{\sigma'}$ by
    Lemma~\ref{lem:preservation_pred}. Again, we do a case analysis depending 
    on the value of $b'$.
    
    If $b' = \stt$, we have
    \begin{align*}
      \oxf{c} \comp \oxf{\sigma} & = \oxf{\sigma'} \comp \ridm{\amalg_2^\dag
      \comp \oxf{q} \comp \oxf{\sigma'}} = \oxf{\sigma'} \comp
      \ridm{\amalg_2^\dag \comp \oxf{\stt} \comp \oxf{\sigma'}} = \oxf{\sigma'}
      \comp \ridm{\amalg_2^\dag \comp \amalg_1 \comp \oxf{\sigma'}} \\ &
      = \oxf{\sigma'} \comp \ridm{0_{\Sigma,\Sigma} \comp \oxf{\sigma'}}
      = \oxf{\sigma'} \comp \ridm{0_{I,\Sigma}}
      = \oxf{\sigma'} \comp 0_{I,I}
      = 0_{I,\Sigma}
    \end{align*}
    by arguments analogous to the corresponding case where $b=\stt$.
    This contradicts the totality of $\oxf{c} \comp \oxf{\sigma}$ by 
    $\ridm{\oxf{c} \comp \oxf{\sigma}} = \ridm{0_{I,\Sigma}} = 0_{I,I} \neq 
    \id_I$, and we get by contradiction that there exists $\sigma'$ such 
    that $\bigstepc{\sigma}{\sif{p}{c_1}{c_2}{q}}{\sigma'}$.
    
    If $b' = \sff$, we derive $\bigstepc{\sigma}{\sif{p}{c_1}{c_2}{q}}{\sigma'}$ 
    by
    \begin{equation*}
      \dfrac{\overset{\mathcal{D}_1}{\bigstepp{\sigma}{p}{\sff}} \qquad\enspace
      \overset{\mathcal{D}_2}{\bigstepc{\sigma}{c_2}{\sigma'}} \qquad\enspace
      \overset{\mathcal{D}_3}{\bigstepp{\sigma'}{q}{\sff}}}{\bigstepc{\sigma}{\sif
      {p}{c_1}{c_2}{q}}{\sigma'}} \enspace.
    \end{equation*}
    
    \item Case $c = \sfrom{p}{c_1}{q}$. \\[0.5\baselineskip] In this case, 
    $\oxf{c} \comp \oxf{\sigma} = \oxf{\sfrom{p}{c_1}{q}} \comp \oxf{\sigma} = 
    \Tr^\Sigma_{\Sigma,\Sigma}(\beta[p,c_1,q]) \comp \oxf{\sigma}$. Since this 
    is total and $\oxf{\sigma} : I \to \Sigma$, it follows by 
    Lemma~\ref{lem:trace_disjoint} that either $$\oxf{\sfrom{p}{c_1}{q}} \comp 
    \oxf{\sigma} = \beta[p,c_1,q]_{11} \comp \oxf{\sigma}$$ or there exists $n 
    \in \omega$ such that $$\oxf{\sfrom{p}{c_1}{q}} \comp 
    \oxf{\sigma} = \beta[p,c_1,q]_{21} \comp \beta[p,c_1,q]_{22}^n \comp 
    \beta[p,c_1,q]_{12} \comp \oxf{\sigma}.$$
    
    If $\oxf{\sfrom{p}{c_1}{q}} \comp \oxf{\sigma} = \beta[p,c_1,q]_{11} \comp
    \oxf{\sigma}$, we have
    \begin{align*}
      \oxf{\sfrom{p}{c_1}{q}} \comp \oxf{\sigma} & = \beta[p,c_1,q]_{11} \comp
      \oxf{\sigma} = \amalg_1^\dag \comp (\id_\Sigma \oplus \oxf{c_1}) \comp
      \oxf{q} \comp \oxf{p}^\dag \comp \amalg_1 \comp \oxf{\sigma} \\
      & = \amalg_1^\dag \comp (\id_\Sigma \oplus \oxf{c_1}) \comp
      \oxf{q} \comp \ridm{\amalg_1^\dag \comp \oxf{p}} \comp \oxf{\sigma} \\
      & = \amalg_1^\dag \comp (\id_\Sigma \oplus \oxf{c_1}) \comp
      \oxf{q} \comp \oxf{\sigma} \comp \ridm{\amalg_1^\dag \comp \oxf{p} \comp 
      \oxf{\sigma}} \\
      & = \amalg_1^\dag \comp (\id_\Sigma \oplus \oxf{c_1}) \comp
      \oxf{q} \comp \oxf{\sigma} \comp \ridm{\amalg_1^\dag \comp \oxf{p} \comp 
      \oxf{\sigma}} ~\comp \ridm{\oxf{q} \comp \oxf{\sigma}} ~\comp 
      \ridm{\oxf{p} \comp \oxf{\sigma}}
    \end{align*}
    using the trick from previously that $\oxf{p}^\dag \comp \amalg_1 =
    (\oxf{p}^\dag \comp \amalg_1)^{\dag\dag} = (\amalg_1^\dag \comp
    \oxf{p})^{\dag} = \ridm{\amalg_1^\dag \comp \oxf{p}}^\dag =
    \ridm{\amalg_1^\dag \comp \oxf{p}}$, $\oxf{q} \comp \oxf{\sigma} = \oxf{q}
    \comp \oxf{\sigma} \comp \ridm{\oxf{q} \comp \oxf{\sigma}}$, the fourth
    axiom of restriction idempotents and commutativity of restriction
    idempotents to obtain this. It follows by the totality of $\oxf{c} \comp 
    \oxf{\sigma}$ that 
    $\ridm{\oxf{p} \comp \oxf{\sigma}}$ and $\ridm{\oxf{q} \comp \oxf{\sigma}}$
    must be total, so $\oxf{p} \comp \oxf{\sigma}$ and $\oxf{q} \comp 
    \oxf{\sigma}$ must be total as well. It then follows by 
    Lemma~\ref{lem:adequacy_pred} that there exist $b_1$ and $b_2$ such that 
    $\bigstepp{\sigma}{p}{b_1}$ and $\bigstepp{\sigma}{q}{b_2}$ by derivations 
    $\mathcal{D}_1$ respectively $\mathcal{D}_2$. But then it follows by 
    Lemma~\ref{lem:beta_preds} that $b_1 = b_2 = \stt$, as we would otherwise 
    have $\oxf{c} \comp \oxf{\sigma} = \beta[p,c_1,q]_{11} \comp
    \oxf{\sigma} = 0_{I,\Sigma}$, contradicting totality. Thus we may derive
    $\bigstepc{\sigma}{\sfrom{p}{c_1}{q}}{\sigma}$ by
    \begin{equation*}
      \dfrac{\overset{\mathcal{D}_1}{\bigstepp{\sigma}{p}{\stt}} \qquad\enspace
      \overset{\mathcal{D}_2}{\bigstepp{\sigma}{q}{\stt}}}
      {\bigstepc{\sigma}{\sfrom{p}{c_1}{q}}{\sigma}}.
    \end{equation*}
    On the other hand, suppose that there exists $n \in \omega$ such that
    $$\oxf{c} \comp \oxf{\sigma} = \oxf{\sfrom{p}{c_1}{q}} \comp \oxf{\sigma} =
    \beta[p,c_1,q]_{21} \comp \beta[p,c_1,q]_{22}^n \comp \beta[p,c_1,q]_{12}
    \comp \oxf{\sigma}.$$ Since this is total, by Lemma~\ref{lem:basic_ridm}
    $\beta[p,c_1,q]_{12} \comp \oxf{\sigma}$ is total as well, and
    \begin{align*}
      \beta[p,c_1,q]_{12} \comp \oxf{\sigma} & = 
      \amalg_2^\dag \comp (\id_\Sigma \oplus \oxf{c_1}) \comp \oxf{q} \comp
      \oxf{p}^\dag \comp \amalg_1 \comp \oxf{\sigma}
      = \amalg_2^\dag \comp (\id_\Sigma \oplus \oxf{c_1}) \comp \oxf{q} \comp
      \ridm{\amalg_1^\dag \comp \oxf{p}} \comp \oxf{\sigma} \\
      & = \amalg_2^\dag \comp (\id_\Sigma \oplus \oxf{c_1}) \comp \oxf{q} \comp
      \oxf{\sigma} \comp \ridm{\amalg_1^\dag \comp \oxf{p} \comp \oxf{\sigma}} 
      \\
      & = \amalg_2^\dag \comp (\id_\Sigma \oplus \oxf{c_1}) \comp \oxf{q} \comp
      \oxf{\sigma} \comp \ridm{\amalg_1^\dag \comp \oxf{p} \comp \oxf{\sigma}}
      \comp \ridm{\oxf{p} \comp \oxf{\sigma}} ~\comp \ridm{\oxf{q} \comp 
      \oxf{\sigma}} 
    \end{align*}
    which follows by arguments analogous to the previous case. But then
    $\oxf{p} \comp \oxf{\sigma}$ and $\oxf{q} \comp \oxf{\sigma}$ must be total
    as well, so by Lemma~\ref{lem:adequacy_pred} there exist $b_1$ and $b_2$
    such that $\bigstepp{\sigma}{p}{b_1}$ and $\bigstepp{\sigma}{q}{b_2}$ by
    derivations $\mathcal{D}_1$ respectively $\mathcal{D}_2$. Since
    $\beta[p,c_1,q]_{12} \comp \oxf{\sigma}$ is total, it further follows by
    Lemma~\ref{lem:beta_preds} that $b_1 = \stt$ and $b_2 = \sff$, as we would
    otherwise have $\beta[p,c_1,q]_{12} \comp \oxf{\sigma} = 0_{I,\Sigma}$.
    Further, $\beta[p,c_1,q]_{12} \comp \oxf{\sigma} = \oxf{c_1} \comp
    \oxf{\sigma}$, and since this is total, by induction there exists $\sigma'$
    such that $\bigstepc{\sigma}{c_1}{\sigma'}$ by some derivation
    $\mathcal{D}_3$, with $\oxf{c_1} \comp \oxf{\sigma} = \oxf{\sigma'}$ by
    Lemma~\ref{lem:preservation_cmd}.
    
    To summarize, we have now that
    \begin{align*}
      \oxf{\sfrom{p}{c_1}{q}} \comp \oxf{\sigma} & =
      \beta[p,c_1,q]_{21} \comp \beta[p,c_1,q]_{22}^n \comp \beta[p,c_1,q]_{12}
      \comp \oxf{\sigma} \\&=
      \beta[p,c_1,q]_{21} \comp \beta[p,c_1,q]_{22}^n \comp \oxf{c_1}
      \comp \oxf{\sigma} \\&=
      \beta[p,c_1,q]_{21} \comp \beta[p,c_1,q]_{22}^n \oxf{\sigma'}
    \end{align*}
    is total, and we have derivations $\mathcal{D}_1$ of 
    $\bigstepp{\sigma}{p}{\stt}$, $\mathcal{D}_2$ of $\bigstepp{\sigma}{q}{\sff}$,
    and $\mathcal{D}_3$ of $\bigstepc{\sigma}{c_1}{\sigma'}$. To finish the 
    proof, we show by induction on $n$ that if $\beta[p,c_1,q]_{21} \comp 
    \beta[p,c_1,q]_{22}^n \oxf{\sigma'}$ is total for any state $\sigma'$, 
    there exists a state $\sigma''$ such that 
    $\bigstepc{\sigma'}{\sloop{p}{c_1}{q}}{\sigma''}$ by some derivation 
    $\mathcal{D}_4$. \vspace{0.5\baselineskip}
    
    \begin{itemize}
      \item In the base case $n = 0$, so $\beta[p,c_1,q]_{21} \comp 
      \beta[p,c_1,q]_{22}^n 
      \oxf{\sigma'} = \beta[p,c_1,q]_{21} \comp \oxf{\sigma'}$. By proof 
      analogous to previous cases, we have that $\oxf{p} \comp \oxf{\sigma'}$
      and $\oxf{q} \comp \oxf{\sigma'}$ must be total, so by 
      Lemma~\ref{lem:adequacy_pred} there exist $b'_1$ and $b'_2$ such that 
      $\bigstepp{\sigma'}{p}{b'_1}$ and $\bigstepp{\sigma'}{q}{b'_1}$ by 
      derivations $\mathcal{D}'_1$ respectively $\mathcal{D}'_2$. Further, by 
      totality, it follows by Lemma~\ref{lem:beta_preds} that we must have 
      $b_1' = \sff$, $b_2' = \stt$. Thus we may produce our derivation 
      $\mathcal{D}_4$ of $\bigstepc{\sigma'}{\sloop{p}{c_1}{q}}{\sigma'}$ by
      \begin{equation*}
        \dfrac{
          \overset{\mathcal{D}_1'}{\bigstepp{\sigma'}{p}{\sff}} \quad\enspace
          \overset{\mathcal{D}_2'}{\bigstepp{\sigma'}{q}{\stt}}
        }{
          \bigstepc{\sigma'}{\sloop{p}{c}{q}}{\sigma'}
        }
      \end{equation*}
      \item In the inductive case, we have that
      \begin{align*}
        \beta[p,c_1,q]_{21} \comp \beta[p,c_1,q]_{22}^{n+1} \oxf{\sigma'} =
        \beta[p,c_1,q]_{21} \comp \beta[p,c_1,q]_{22}^n \comp 
        \beta[p,c_1,q]_{22} \comp \oxf{\sigma'}
      \end{align*}
      and since this is assumed to be total, it follows by 
      Lemma~\ref{lem:basic_ridm} that $\beta[p,c_1,q]_{22} \comp \oxf{\sigma'}$
      is total as well. Again, by argument analogous to previous cases, this
      implies that $\oxf{p} \comp \oxf{\sigma'}$ and $\oxf{q} \comp 
      \oxf{\sigma'}$ are both total, so by Lemma~\ref{lem:adequacy_pred} there 
      exist $b_1'$ and $b_2'$ such that $\bigstepp{\sigma'}{p}{b_1'}$ and
      $\bigstepp{\sigma'}{q}{b_2'}$ by derivations $\mathcal{D}'_1$ respectively
      $\mathcal{D}'_2$. Likewise, it then follows by Lemma~\ref{lem:beta_preds}
      that since this is total, we must have $b_1' = b_2' = \sff$, and so
      $\beta[p,c_1,q]_{22} \comp \oxf{\sigma'} = \oxf{c_1} \comp \oxf{\sigma'}$,
      again by Lemma~\ref{lem:beta_preds}. Since $\oxf{c_1} \comp \oxf{\sigma'}$
      is total, by the outer induction hypothesis there exists a derivation 
      $\mathcal{D}_3'$ of $\bigstepc{\sigma'}{c_1}{\sigma''}$ for some 
      $\sigma''$, and so $\oxf{c_1} \comp \oxf{\sigma'} = \oxf{\sigma''}$ by 
      Lemma~\ref{lem:preservation_cmd}. But then
      \begin{align*}
        \beta[p,c_1,q]_{21} \comp \beta[p,c_1,q]_{22}^{n+1} \oxf{\sigma'} & =
        \beta[p,c_1,q]_{21} \comp \beta[p,c_1,q]_{22}^n \comp 
        \beta[p,c_1,q]_{22} \comp \oxf{\sigma'} \\ 
        & = \beta[p,c_1,q]_{21} \comp \beta[p,c_1,q]_{22}^n \comp \oxf{c_1} 
        \comp \oxf{\sigma'} \\
        & = \beta[p,c_1,q]_{21} \comp \beta[p,c_1,q]_{22}^n \comp \oxf{\sigma''}
      \end{align*}
      since this is total, by (inner) induction there exists a derivation 
      $\mathcal{D}_4'$ of $\bigstepc{\sigma''}{\sloop{p}{c_1}{q}}{\sigma'''}$.
      Thus, we may produce our derivation $\mathcal{D}_4$ as
      \begin{equation*}
        \dfrac{
          \overset{\mathcal{D}'_1}{\bigstepp{\sigma'}{p}{\sff}} \quad\enspace
          \overset{\mathcal{D}'_2}{\bigstepp{\sigma'}{q}{\sff}} \quad\enspace
          \overset{\mathcal{D}'_3}{\bigstepc{\sigma'}{c_1}{\sigma''}} 
          \quad\enspace
          \overset{\mathcal{D}'_4}{\bigstepc{\sigma''} 
          {\sloop{p}{c}{q}}{\sigma'''}}
        }{
          \bigstepc{\sigma'}{\sloop{p}{c_1}{q}}{\sigma'''}
        }
      \end{equation*}
      concluding the internal lemma.
    \end{itemize} 
    \vspace{0.5\baselineskip}
    
    \noindent
    Since this is the case, we may finally show 
    $\bigstepc{\sigma}{\sfrom{p}{q}{c}}{\sigma'}$ by
    \begin{equation*}
      \dfrac{
        \overset{\mathcal{D}_1}{\bigstepp{\sigma}{p}{\stt}} \quad\enspace
        \overset{\mathcal{D}_2}{\bigstepp{\sigma}{q}{\sff}} \quad\enspace
        \overset{\mathcal{D}_3}{\bigstepc{\sigma}{c}{\sigma'}} \quad\enspace
        \overset{\mathcal{D}_4}{\bigstepc{\sigma'}{\sloop{p}{c}{q}}{\sigma''}}
      }{
        \bigstepc{\sigma}{\sfrom{p}{c}{q}}{\sigma''}
      },
    \end{equation*}
    which concludes the proof. \qedhere
  \end{itemize}
\end{proof}
% section soundness_and_adequacy (end)

\section{Full abstraction} % (fold)
\label{sec:full_abstraction}
Where computational soundness and adequacy state an agreement in the notions of
convergence between the operational and denotational semantics, full
abstraction deals with their respective notions of equivalence (see, \eg,
\cite{Stoughton1988,Plotkin1999}). Unlike the case for computational soundness
and computational adequacy, where defining a proper notion of convergence
required more work on the categorical side, the tables have turned when it
comes to program equivalence. In the categorical semantics, program equivalence
is clear -- equality of interpretations. Operationally, however, there is
nothing immediately corresponding to equality of behaviour at runtime.

To produce this, we consider how two programs may behave when executed from the
same start state. If they always produce the same result, we say that they are operationally equivalent. Formally, we define this as follows:

\begin{defi}
  Say that programs $p_1$ and $p_2$ are \emph{operationally equivalent}, 
  denoted $p_1 \approx p_2$, if for all states $\sigma$, 
  $\bigstepc{\sigma}{p_1}{\sigma'}$ if and only if 
  $\bigstepc{\sigma}{p_2}{\sigma'}$.
\end{defi}

A model is said to be \emph{equationally fully abstract} (see, \eg,
\cite{Stoughton1988}) if these two notions of program equivalence are in
agreement. In the present section, we will show a sufficient condition for
equational full abstraction of models of structured reversible flowchart
languages. This condition will be that the given model additionally has the
properties of being $I$-\emph{well pointed} and \emph{bijective on states}.

\begin{defi}
  Say that a model of a structured reversible flowchart
  language is $I$-\emph{well pointed} if, for all parallel morphisms $f, g : A 
  \to B$,  $f = g$ precisely when $f \comp p = g \comp p$ for all $p : I \to A$.
\end{defi}

\begin{defi}
  Say that a model \C{} of a structured reversible flowchart
  language $\mathcal{L}$ is \emph{bijective on states} if there is a 
  bijective correspondence between states of $\mathcal{L}$ and \emph{total} 
  morphisms $I \to \Sigma$ of \C.
\end{defi}

If a computationally sound and adequate model of a structured reversible
flowchart language is bijective on states, we can show a stronger version of
Lemma~\ref{lem:preservation_cmd}.

\begin{lem}\label{lem:adequacy_inv}
  If \C{} be a sound and adequate model of a structured reversible flowchart
  language which is bijective on states. Then $\bigstepc{\sigma}{c}{\sigma'}$ 
  iff $\oxf{c} \comp \oxf{\sigma} = \oxf{\sigma'}$ and this is total.
\end{lem}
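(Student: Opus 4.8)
The plan is to prove the two directions of the biconditional separately, relying almost entirely on results already established: computational soundness (Theorem~\ref{thm:soundness}), computational adequacy (Theorem~\ref{thm:adequacy}), the command preservation lemma (Lemma~\ref{lem:preservation_cmd}), and the new hypothesis that the model is \emph{bijective on states}. The overall shape is that soundness together with preservation handles one direction outright, while the other direction chains adequacy, preservation, and bijectivity to pin down the resulting state exactly.

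For the forward direction, I would suppose $\bigstepc{\sigma}{c}{\sigma'}$. Then Lemma~\ref{lem:preservation_cmd} immediately yields $\oxf{c} \comp \oxf{\sigma} = \oxf{\sigma'}$, and since such a $\sigma'$ exists, computational soundness (Theorem~\ref{thm:soundness}) gives that $\oxf{c} \comp \oxf{\sigma}$ is total. This settles the easy direction with no further work.

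For the backward direction, I would suppose $\oxf{c} \comp \oxf{\sigma} = \oxf{\sigma'}$ with this morphism total. Since $\oxf{c} \comp \oxf{\sigma}$ is total, computational adequacy (Theorem~\ref{thm:adequacy}) provides \emph{some} state $\sigma''$ with $\bigstepc{\sigma}{c}{\sigma''}$. Applying the forward direction just proved (or directly Lemma~\ref{lem:preservation_cmd}) to this derivation gives $\oxf{c} \comp \oxf{\sigma} = \oxf{\sigma''}$, whence $\oxf{\sigma'} = \oxf{\sigma''}$ by transitivity. Both $\oxf{\sigma'}$ and $\oxf{\sigma''}$ are total morphisms $I \to \Sigma$, so the bijective correspondence between states and total morphisms $I \to \Sigma$ forces $\sigma' = \sigma''$, and therefore $\bigstepc{\sigma}{c}{\sigma'}$ as required.

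The main obstacle is precisely this final inference of the backward direction: without the bijectivity hypothesis, the equality of interpretations $\oxf{\sigma'} = \oxf{\sigma''}$ cannot be upgraded to equality of the underlying states, so the state delivered by adequacy need not literally coincide with the target $\sigma'$. It is exactly the injectivity half of being \emph{bijective on states} that closes this gap; every other step is a routine assembly of the soundness and adequacy machinery developed in the preceding section.
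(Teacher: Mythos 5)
Your proposal is correct and follows essentially the same route as the paper's proof: the forward direction is dispatched by Lemma~\ref{lem:preservation_cmd} together with Theorem~\ref{thm:soundness}, and the backward direction chains Theorem~\ref{thm:adequacy}, Lemma~\ref{lem:preservation_cmd}, and bijectivity on states to identify the state produced by adequacy with the target state. The paper's only difference is presentational, noting upfront that soundness and preservation reduce the claim to the single implication you prove second.
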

\begin{proof}
  By Lemma \ref{lem:preservation_cmd} and Theorem \ref{thm:soundness}, we only 
  need to show that $\oxf{c} \comp \oxf{\sigma} = \oxf{\sigma'}$ implies 
  $\bigstepc{\sigma}{c}{\sigma'}$. Assume that $\oxf{c} \comp \oxf{\sigma} = 
  \oxf{\sigma'}$ and that this is total. By Theorem~\ref{thm:adequacy}, there 
  exists $\sigma''$ such that $\bigstepc{\sigma}{c}{\sigma''}$, so by 
  Lemma~\ref{lem:preservation_cmd}, $\oxf{c} \comp \oxf{\sigma} = 
  \oxf{\sigma''}$. Thus $\oxf{\sigma'} = \oxf{c} \comp \oxf{\sigma} = 
  \oxf{\sigma''}$, so $\sigma' = \sigma''$ by bijectivity on states.
\end{proof}

With this, we can show equational full abstraction.

\begin{thm}[Equational full abstraction]
  Let \C{} be a sound and adequate model of a structured reversible flowchart
  language that is furthermore $I$-well pointed and bijective on states. Then
  \C{} is equationally fully abstract, \ie, $p_1 \approx p_2$ if and only if 
  $\oxf{p_1} = \oxf{p_2}$.
\end{thm}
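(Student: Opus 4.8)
The statement is a biconditional, and I would prove the two implications separately, treating the easy direction first.

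For $\oxf{p_1} = \oxf{p_2} \Rightarrow p_1 \approx p_2$, I would invoke Lemma~\ref{lem:adequacy_inv} directly: for an arbitrary state $\sigma$ it gives $\bigstepc{\sigma}{p_1}{\sigma'}$ iff $\oxf{p_1} \comp \oxf{\sigma} = \oxf{\sigma'}$ with this map total. Substituting the hypothesis $\oxf{p_1} = \oxf{p_2}$, this is equivalent to $\oxf{p_2} \comp \oxf{\sigma} = \oxf{\sigma'}$ being total, \ie{} to $\bigstepc{\sigma}{p_2}{\sigma'}$, again by Lemma~\ref{lem:adequacy_inv}. As $\sigma$ and $\sigma'$ were arbitrary, $p_1 \approx p_2$ follows.

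For the converse $p_1 \approx p_2 \Rightarrow \oxf{p_1} = \oxf{p_2}$, the plan is to reduce the desired equality of the parallel maps $\oxf{p_1}, \oxf{p_2} : \Sigma \to \Sigma$, via $I$-well pointedness, to verifying $\oxf{p_1} \comp s = \oxf{p_2} \comp s$ for every $s : I \to \Sigma$. The preliminary observation that makes this work---and which I expect to be the crux---is that any such $s$ is \emph{either total or equal to $0_{I,\Sigma}$}: its restriction idempotent $\ridm{s} : I \to I$ is, by the defining properties of a model, either $\id_I$ or $0_{I,I}$, and in the latter case $s = s \comp \ridm{s} = 0_{I,\Sigma}$. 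This dichotomy is exactly what bridges the mismatch between operational equivalence, which only constrains behaviour on \emph{states} (the total points of $\Sigma$), and equality tested against \emph{all} points.

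With this observation I would split into cases on $s$. If $s = 0_{I,\Sigma}$, then both composites collapse to $0_{I,\Sigma}$ and agree trivially. If $s$ is total, bijectivity on states supplies a state $\sigma$ with $s = \oxf{\sigma}$, and I would split further on whether $\oxf{p_1} \comp \oxf{\sigma}$ is total. When it is, Theorem~\ref{thm:adequacy} yields a $\sigma'$ with $\bigstepc{\sigma}{p_1}{\sigma'}$; operational equivalence then gives $\bigstepc{\sigma}{p_2}{\sigma'}$, and two applications of Lemma~\ref{lem:preservation_cmd} give $\oxf{p_1} \comp \oxf{\sigma} = \oxf{\sigma'} = \oxf{p_2} \comp \oxf{\sigma}$. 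When $\oxf{p_1} \comp \oxf{\sigma}$ is not total, the same total-or-zero dichotomy (now applied to the composite $I \to \Sigma$) forces $\oxf{p_1} \comp \oxf{\sigma} = 0_{I,\Sigma}$; moreover the contrapositives of Theorem~\ref{thm:soundness} and Theorem~\ref{thm:adequacy}, together with $p_1 \approx p_2$, force $\oxf{p_2} \comp \oxf{\sigma}$ to be non-total as well, hence likewise $0_{I,\Sigma}$, so the two composites again coincide. Having checked $\oxf{p_1} \comp s = \oxf{p_2} \comp s$ for all $s$, $I$-well pointedness delivers $\oxf{p_1} = \oxf{p_2}$.
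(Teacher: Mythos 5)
Your proposal is correct and takes essentially the same approach as the paper's proof: the hard direction via $I$-well pointedness, the total-or-zero dichotomy for points $I \to \Sigma$, bijectivity on states to realize total points as state denotations, and a case split on totality of $\oxf{p_1} \comp \oxf{\sigma}$ handled by Theorem~\ref{thm:adequacy}, operational equivalence, Lemma~\ref{lem:preservation_cmd}, and the contrapositives of Theorems~\ref{thm:soundness} and~\ref{thm:adequacy}; and the easy direction via Lemma~\ref{lem:adequacy_inv}. The only cosmetic difference is that you apply the iff of Lemma~\ref{lem:adequacy_inv} symmetrically in both directions of the easy implication, where the paper unfolds one direction into Lemma~\ref{lem:preservation_cmd} plus Theorem~\ref{thm:soundness}.
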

\begin{proof}
  Suppose $p_1 \approx p_2$, \ie, for all $\sigma$,
  $\bigstepc{\sigma}{p_1}{\sigma'}$ if and only if
  $\bigstepc{\sigma}{p_2}{\sigma'}$. Let $s : I \to \Sigma$ be a morphism. Since
  \C{} is a model of a structured reversible flowchart language, it follows 
  that either $\ridm{s} = \id_I$ or $\ridm{s} = 0_{I,I}$ where $\id_I \neq 
  0_{I,I}$.
  
  If $\ridm{s} = 0_{I,I}$, we have $\oxf{p_1} \comp s = \oxf{p_2} \comp s = 
  0_{I,\Sigma}$ by unicity of the zero map. 
  
  On the other hand, if $\ridm{s} = \id_I$, by bijectivity on states there
  exists $\sigma_0$ such that $s = \oxf{\sigma_0}$. Consider now $\oxf{p_1} 
  \comp s = \oxf{p_1} \comp \oxf{\sigma_0}$. If this is total, by 
  Theorem~\ref{thm:adequacy} there exists $\sigma_0'$ such that 
  $\bigstepc{\sigma_0}{p_1}{\sigma_0'}$, and by $p_1 \approx p_2$,
  $\bigstepc{\sigma_0}{p_2}{\sigma_0'}$ as well. But then, applying 
  Lemma~\ref{lem:preservation_cmd} on both yields that 
  $$
  \oxf{p_1} \comp s = 
  \oxf{p_1} \comp \oxf{\sigma_0} = \oxf{\sigma_0'} = 
  \oxf{p_2} \comp \oxf{\sigma_0} =
  \oxf{p_2} \comp s \enspace.
  $$
  If, on the other hand, $\oxf{p_1} \comp s$ is not total, by the
  contrapositive to Theorem~\ref{thm:soundness}, there exists \emph{no}
  $\sigma_0'$ such that $\bigstepc{\sigma_0}{p_1}{\sigma_0'}$, so by $p_1
  \approx p_2$ there exists no $\sigma_0''$ such that
  $\bigstepc{\sigma_0}{p_1}{\sigma_0''}$. But then, by the contrapositive 
  of Theorem~\ref{thm:adequacy} and the fact that restriction idempotents on 
  $I$ are either $\id_I$ or $0_{I,I}$, it follows that 
  $$\oxf{p_1} \comp s = \oxf{p_1} \comp \oxf{\sigma_0} = 0_{I,\Sigma} = 
  \oxf{p_2} \comp \oxf{\sigma_0} = \oxf{p_2} \comp s.$$
  Since $s$ was chosen arbitrarily and $\oxf{p_1} \comp s = \oxf{p_2} \comp s$ 
  in all cases, it follows by $I$-well pointedness that $\oxf{p_1} = \oxf{p_2}$.
  
  In the other direction, suppose $\oxf{p_1} = \oxf{p_2}$, let $\sigma_0$ 
  be a state, and suppose that there exists $\sigma_0'$ such that 
  $\bigstepc{\sigma_0}{p_1}{\sigma_0'}$. By Lemma~\ref{lem:preservation_cmd}, 
  $\oxf{p_1} \comp \oxf{\sigma_0} = \oxf{\sigma_0'}$, and by 
  Theorem~\ref{thm:soundness} this is total. But then, by $\oxf{p_1} = 
  \oxf{p_2}$, $\oxf{p_2} \comp \oxf{\sigma_0} = \oxf{p_1} \comp \oxf{\sigma_0} 
  = \oxf{\sigma_0'}$, so by Lemma~\ref{lem:adequacy_inv}, 
  $\bigstepc{\sigma_0}{p_2}{\sigma_0'}$. The other direction follows similarly.
\end{proof}
% section full_abstraction (end)
%%% Local Variables:
%%% mode: latex
%%% TeX-master: "Kaarsgaard-Glueck-lmcs.tex"
%%% End:

% !TEX root = Kaarsgaard-Glueck-lmcs.tex

\section{Applications} % (fold)
\label{sec:applications}
In this section, we briefly cover some applications of the developed theory: We show how the usual program invertion rules can be verified using the semantics;
introduce a small reversible flowchart language, and use the results from the
previous sections to give it semantics; and discuss how decisions may be used
as a programming technique to naturally represent predicates in a reversible
functional language.

\subsection{Verifying program inversion} % (fold)
\label{sub:extracting_a_program_inverter}
A desirable syntactic property for reversible programming languages is to be
closed under program inversion, in the sense that for each program $p$, there
is another program $\I[p]$ such that $\oxf{\I[p]} = \oxf{p}^\dag$. Janus,
\texttt{R-WHILE}, and \texttt{R-CORE}~\cite{YoGl07a,GlYo16,GlYo17} are all
examples of reversible programming languages with this property. This is
typically witnessed by a \emph{program inverter}~\cite{AbrGlu00:URA,AbrGlu02:URAJ}, that is, a procedure
mapping the program text of a program to the program text of its inverse
program\footnote{While semantic inverses \emph{are} unique, their program texts
generally are not. As such, a programming language may have many different
sound and complete program inverters, though they will all be equivalent up to program semantics.}.

Program inversion of irreversible languages is inherently difficult because of their backwards nondeterminism and the elimination of nondeterminism requires advanced machinery (\eg, LR-based parsing methods~\cite{GlueckKawabe:05:LRJinv}). Constructing program inversion for reversible languages is typically straightforward because of their backward determinism, but exploting this object language property, \eg, for proving program inverters correct, is severly hindered by the fact that conventional metalanguages do not naturally capture these object-language properties. On the other hand, the categorical foundation considered here leads to a very concise theorem regarding program inversion for all reversible flowchart languages.

Suppose that we are given a language where elementary operations are closed
under program inversion (\ie, where each elementary operation $b$ has an
inverse $\I[b]$ such that $\oxf{\I[b]} = \oxf{b}^\dag$). We can use the
semantics to verify the usual program inversion rules for $\sskip$, sequencing,
reversible conditionals and loops as follows, by structural induction on $c$
with the hypothesis that $\oxf{\I[c]} = \oxf{c}^\dag$. For $\sskip$, we have
\begin{equation*}
  \oxf{\sskip}^\dag = \id_\Sigma^\dag = \id_\Sigma = \oxf{\sskip}
\end{equation*}
thus justifying the inversion rule $\I[\sskip] = \sskip$. Likewise for sequences,
\begin{equation*}
  \oxf{\sseq{c_1}{c_2}}^\dag = (\oxf{c_2} \comp \oxf{c_1})^\dag = 
  \oxf{c_1}^\dag \comp \oxf{c_2}^\dag = \oxf{\I[c_1]} \comp \oxf{\I[c_2]} =
  \oxf{\sseq{\I[c_2]}{\I[c_1]}}
\end{equation*}
which verifies the inversion rule $$\I[\sseq{c_1}{c_2}] = \sseq{\I[c_2]}{\I[c_1]}\enspace.$$

Our approach becomes more interesting when we come to conditionals. Given some
conditional statement $\sif{p}{c_1}{c_2}{q}$, we notice that
\begin{align*}
  \oxf{\sif{p}{c_1}{c_2}{q}}^\dag 
  & = (\oxf{q}^\dag \comp (\oxf{c_1} \oplus \oxf{c_2}) \comp \oxf{p})^\dag \\
  & = \oxf{p}^\dag \comp (\oxf{c_1}^\dag \oplus \oxf{c_2}^\dag) \comp
  \oxf{q}^{\dag\dag} \\
  & = \oxf{p}^\dag \comp (\oxf{c_1}^\dag \oplus \oxf{c_2}^\dag) \comp
  \oxf{q} \\
  & = \oxf{p}^\dag \comp (\oxf{\I[c_1]} \oplus \oxf{\I[c_2]}) \comp
  \oxf{q} \\
  & = \oxf{\sif{q}{\I[c_1]}{\I[c_2]}{p}}
\end{align*}
which verifies the correctness of usual the inversion rule (see, \eg, \cite{GlYo16,GlYo17})
\begin{equation*}
  \I[\sif{p}{c_1}{c_2}{q}] = \sif{q}{\I[c_1]}{\I[c_2]}{p}.
\end{equation*}
Finally, for reversible loops, we have
\begin{align*}
  \oxf{\sfrom{p}{c}{q}}^\dag 
  & = \Tr_{\Sigma,\Sigma}^\Sigma((\id_\Sigma \oplus \oxf{c}) \comp \oxf{q} \comp
  \oxf{p}^\dag)^\dag \\
  & = \Tr_{\Sigma,\Sigma}^\Sigma(((\id_\Sigma \oplus \oxf{c}) \comp \oxf{q} \comp
  \oxf{p}^\dag)^\dag) \\
  & = \Tr_{\Sigma,\Sigma}^\Sigma(\oxf{p} \comp \oxf{q}^\dag \comp (\id_\Sigma 
  \oplus \oxf{c}^\dag)) \\
  & = \Tr_{\Sigma,\Sigma}^\Sigma((\id_\Sigma \oplus \oxf{c}^\dag) \comp \oxf{p}
  \comp \oxf{q}^\dag) \\
  & = \Tr_{\Sigma,\Sigma}^\Sigma((\id_\Sigma \oplus \oxf{\I[c]}) \comp \oxf{p}
  \comp \oxf{q}^\dag) \\
  & = \oxf{\sfrom{q}{\I[c]}{p}}
\end{align*}
where the fact that it is a \dag-trace allows us to move the dagger inside the
trace, and dinaturality of the trace in the second component allows us to move
$\id_\Sigma \oplus \oxf{c}^\dag$ from the very right to the very left. This
brief argument verifies the correctness of the inversion rule (see 
\cite{GlYo17})
\begin{equation*}
  \I[\sfrom{p}{c}{q}] = \sfrom{q}{\I[c]}{p}
\end{equation*}
We summarize this in the following theorem:

\begin{thm}
If a reversible structured flowchart language is syntactically closed under 
inversion of elementary operations, it is also closed under inversion of 
reversible sequencing, conditionals, and loops.
\end{thm}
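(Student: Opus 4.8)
The plan is to argue by structural induction on the command $c$, maintaining the induction hypothesis that each inverted subprogram denotes the dagger of the original, i.e. $\oxf{\I[c]} = \oxf{c}^\dag$. The base cases are $\sskip$, for which $\oxf{\sskip}^\dag = \id_\Sigma^\dag = \id_\Sigma = \oxf{\sskip}$ justifies $\I[\sskip] = \sskip$, and the atomic operations, which are covered directly by the standing assumption that elementary operations are syntactically closed under inversion. For each of the three compositional constructs the strategy is uniform: compute the dagger of its denotation, distribute it through the compositional structure using functoriality of $(-)^\dag$, apply the induction hypothesis to the immediate subterms, and finally recognize the resulting morphism as the denotation of a syntactically inverted program, thereby verifying the corresponding inversion rule.

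For sequencing the only tool required is contravariance of the dagger, giving $(\oxf{c_2} \comp \oxf{c_1})^\dag = \oxf{c_1}^\dag \comp \oxf{c_2}^\dag$, which after the induction hypothesis is read off as $\oxf{\sseq{\I[c_2]}{\I[c_1]}}$ and hence validates $\I[\sseq{c_1}{c_2}] = \sseq{\I[c_2]}{\I[c_1]}$. For conditionals the computation additionally exploits that $-\oplus-$ is a dagger functor, so $(\oxf{c_1} \oplus \oxf{c_2})^\dag = \oxf{c_1}^\dag \oplus \oxf{c_2}^\dag$, together with involutivity $\oxf{q}^{\dag\dag} = \oxf{q}$. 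The point worth stressing is that the two predicate-decisions are not inverted but merely swap positions: the dagger carries $\oxf{q}^\dag \comp (\oxf{c_1} \oplus \oxf{c_2}) \comp \oxf{p}$ to $\oxf{p}^\dag \comp (\oxf{c_1}^\dag \oplus \oxf{c_2}^\dag) \comp \oxf{q}$, which is precisely $\oxf{\sif{q}{\I[c_1]}{\I[c_2]}{p}}$, establishing $\I[\sif{p}{c_1}{c_2}{q}] = \sif{q}{\I[c_1]}{\I[c_2]}{p}$.

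The loop case is where I expect the genuine difficulty to lie, as it leans on structural properties of the trace rather than on functoriality of the dagger alone. Two ingredients are needed. First, since the trace of Proposition~\ref{prop:trace} is a \dag-trace, the dagger may be pushed inside, $\Tr_{\Sigma,\Sigma}^\Sigma(\beta[p,c,q])^\dag = \Tr_{\Sigma,\Sigma}^\Sigma(\beta[p,c,q]^\dag)$, and the inner body expands, by the same contravariance, dagger-functoriality of $\oplus$, and involutivity used above, to $\beta[p,c,q]^\dag = \oxf{p} \comp \oxf{q}^\dag \comp (\id_\Sigma \oplus \oxf{c}^\dag)$. Second, and this is the delicate step, I would invoke dinaturality (sliding) of the trace in the traced-out component $U = \Sigma$ to move the factor $\id_\Sigma \oplus \oxf{c}^\dag$ from the input side of the body to its output side, turning the body into $(\id_\Sigma \oplus \oxf{c}^\dag) \comp \oxf{p} \comp \oxf{q}^\dag$, which under the induction hypothesis $\oxf{c}^\dag = \oxf{\I[c]}$ is exactly $\beta[q, \I[c], p]$; the trace thus becomes $\oxf{\sfrom{q}{\I[c]}{p}}$, verifying $\I[\sfrom{p}{c}{q}] = \sfrom{q}{\I[c]}{p}$. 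The care here is entirely in justifying the sliding step: I would appeal to the fact that the canonical trace of Proposition~\ref{prop:trace} satisfies the full traced-monoidal-category axioms, so that the obstacle reduces to correctly identifying dinaturality as the licensing axiom and checking that it acts on the $U$-component. Collecting the base cases together with the three verified inversion rules then completes the induction.
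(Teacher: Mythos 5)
Your proposal is correct and follows essentially the same route as the paper: structural induction with hypothesis $\oxf{\I[c]} = \oxf{c}^\dag$, using contravariance of the dagger for sequencing, dagger-functoriality of $-\oplus-$ and involutivity for conditionals (with the decisions swapping roles rather than being inverted), and for loops the $\dag$-trace property to push the dagger inside followed by dinaturality/sliding in the traced component to move $\id_\Sigma \oplus \oxf{c}^\dag$ across the body. The paper's own argument invokes exactly these two trace facts in the loop case, so your identification of sliding as the licensing axiom is precisely what is needed.
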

% subsection extracting_a_program_inverter (end)

\subsection{Example: A reversible flowchart language} % (fold)
\label{sub:an_example_reversible_flowchart_language}
Consider the following family of (neither particularly useful nor particularly
useless) reversible flowchart languages for reversible computing with integer
data, $\mathtt{RINT}_k$. $\mathtt{RINT}_k$ has precisely $k$ variables
available for storage, denoted $\mathtt{x_1}$ through $\mathtt{x_k}$ (of which
$\mathtt{x_1}$ is designated by convention as the input/output variable), and 
its only atomic operations are addition and subtraction of variables, as well 
as addition with a constant. Variables are used as elementary predicates,
with zero designating truth and non-zero values all designating falsehood.
For control structures we have reversible conditionals and loops,
and sequencing as usual. This gives the syntax:
\begin{align*}
  p & ::= \stt \mid \sff \mid \mathtt{x_i} \mid \sand{p}{p} \mid \snot{p} & \text{(Tests)} \\
  c & ::= \sseq{c}{c} \mid \mathtt{x_i} \mathrel{+\!\!=} \mathtt{x_j} \mid 
  \mathtt{x_i} \mathrel{-\!\!=} \mathtt{x_j} \mid 
  \mathtt{x_i} 
  \mathrel{+\!\!=} \overline{n} & \\ & \mid \sif{p}{c}{c}{p} \\
  & \mid \sfrom{p}{c}{p} & 
  \text{(Commands)}
\end{align*}
Here, $\overline{n}$ is the syntactic representation of an integer $n$. In the
cases for addition and subtraction, we impose the additional syntactic
constraints that $1 \le i \le k$, $1 \le j \le k$, and $i \neq j$, the latter
to guarantee reversibility. Subtraction by a constant is not included as it may
be derived straightforwardly from addition with a constant. A program in
$\mathtt{RINT}_k$ is then simply a command.

We may now give semantics to this language in our framework. For a concrete
model, we select the category \PInj{} of sets and partial injections, which is
a join inverse category with a join-preserving disjointness tensor (given on
objects by the disjoint union of sets), so it is extensive in the sense of
Definition~\ref{def:inv_ext} by Theorem~\ref{thm:join_inv_ext}. By our
developments previously in this section, to give a full semantics to
$\mathtt{RINT}_k$ in \PInj{}, it suffices to provide an object (\ie, a set) of
stores $\Sigma$, denotations of our three classes of elementary operations
(addition by a variable, addition by a constant, and subtraction by a variable)
as morphisms (\ie, partial injective functions) $\Sigma \to \Sigma$, and
denotations of our class of elementary predicates (here, testing whether a
variable is zero or not) as decisions $\Sigma \to \Sigma \oplus \Sigma$. These
are all shown in Figure~\ref{fig:semantics_rintk}.
\begin{figure}[t]
\centering
\small

\begin{align*}
\Sigma & = \mathbb{Z}^k \\
\oxf{\mathtt{x_i}}(a_1, \dots, a_k) & = \left\{ \begin{array}{ll}
\amalg_1(a_1, \dots, a_k) & \quad \text{if } a_i = 0 \\
\amalg_2(a_1, \dots, a_k) & \quad \text{otherwise}
\end{array} \right. \\
\oxf{\mathtt{x_i} \mathrel{+\!\!=} \mathtt{x_j}}(a_1, \dots, a_k) & =
(a_1, \dots, a_{i-1}, a_i + a_j, \dots, a_k) \\
\oxf{\mathtt{x_i} \mathrel{+\!\!=} \overline{n}}(a_1, \dots, a_k) & =
(a_1, \dots, a_{i-1}, a_i + n, \dots, a_k) \\
\oxf{\mathtt{x_i} \mathrel{-\!\!=} \mathtt{x_j}}(a_1, \dots, a_k) & =
(a_1, \dots, a_{i-1}, a_i - a_j, \dots, a_k)
\end{align*}
\vspace{-1em}
\caption{The object of stores and semantics of elementary operations and
predicates of $\mathtt{RINT}_k$ in \PInj.}
\label{fig:semantics_rintk}
\end{figure}
It is uncomplicated to show
that all of these are partial injective functions, and that the denotation of
each predicate $\oxf{\mathtt{x_i}}$ is a decision, so that this is, in fact, a
model of $\mathtt{RINT}_k$ in \PInj{}.

We can now reap the benefits in the form of a reversibility theorem for free.
Operationally, semantic reversibility of a program is taken to mean that it is
\emph{locally} forward and backward deterministic; roughly, that any execution
of any subpart of that program is performed in a way that is both forward and
backward deterministic (see, \eg, \cite{Bennett1973}). Denotationally (see \cite[Sec.~1.1]{Kaarsgaard2018}), the
semantic reversibility of a program becomes the property that all of its
meaningful subprograms (including itself) have denotations as partial
isomorphisms. Since every meaningful $\mathtt{RINT}_k$ program fragment takes its semantics in an inverse category, reversibility follows directly.
\begin{thm}[Reversibility]
  Every $\mathtt{RINT}_k$ program is semantically reversible.% in the sense 
  %that $\oxf{p}$ is a partial isomorphism.
\end{thm}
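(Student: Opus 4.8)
The plan is to proceed by a short structural induction, leaning on the fact that $\PInj$ is an inverse category so that the partial-isomorphism property comes for free. First I would recall that $\PInj \cong \Inv(\Pfn)$ is an inverse category, so that \emph{every} morphism of $\PInj$ has a unique partial inverse given by the canonical dagger $(-)^\dag$. Consequently the entire content of the theorem reduces to checking that the denotational interpretation is \emph{well-defined}, namely that every meaningful subprogram of a $\mathtt{RINT}_k$ program receives a denotation as an honest morphism $\Sigma \to \Sigma$ of $\PInj$ (respectively $\Sigma \to \Sigma \oplus \Sigma$ for tests). Once that is in place, reversibility is automatic.

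The bulk of the work is therefore this well-definedness, which I would establish by induction on the structure of commands. The base cases are $\sskip$ and the three elementary operations; here one invokes the (routine) verification already announced in the surrounding text that each of $\oxf{\mathtt{x_i} \mathrel{+\!\!=} \mathtt{x_j}}$, $\oxf{\mathtt{x_i} \mathrel{+\!\!=} \overline{n}}$, and $\oxf{\mathtt{x_i} \mathrel{-\!\!=} \mathtt{x_j}}$ is a partial injective function, hence a morphism of $\PInj$, and that each $\oxf{\mathtt{x_i}}$ is a decision. For the inductive step, each command former is interpreted using only operations that remain within $\PInj$: sequencing uses composition; the reversible conditional $\sif{p}{c_1}{c_2}{q}$ uses the disjointness tensor $\oplus$, the dagger $(-)^\dag$, and the decisions $\oxf{p}, \oxf{q}$; and the loop $\sfrom{p}{c}{q}$ uses the $\dag$-trace $\Tr_{\Sigma,\Sigma}^\Sigma$. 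That these operations are available and land back in $\PInj$ follows from the structural facts collected earlier: $\PInj$ carries a join-preserving disjointness tensor and countable compatible joins, so by Proposition~\ref{prop:trace} it has a $\dag$-trace, and by Theorem~\ref{thm:join_inv_ext} it is extensive as an inverse category, so the decisions $\oxf{p}$ (and, via Theorem~\ref{thm:dec_closed}, their Boolean combinations) exist as morphisms of $\PInj$.

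Having shown that each meaningful subprogram denotes a morphism of $\PInj$, I would close the argument by invoking once more that every morphism of an inverse category is a partial isomorphism. Thus each meaningful subprogram of any $\mathtt{RINT}_k$ program, including the program itself, is interpreted by a partial isomorphism, which is precisely the denotational formulation of semantic reversibility.

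The main obstacle is not the reversibility conclusion, which is immediate from working in an inverse category, but rather ensuring that the interpretation never leaves $\PInj$, and in particular that the loop semantics given by the trace is genuinely defined as a morphism of $\PInj$ rather than merely a partial function. This is exactly where the hypotheses on $\PInj$ (countable compatible joins together with a join-preserving disjointness tensor) do the real work, funnelled through Proposition~\ref{prop:trace} and Theorem~\ref{thm:join_inv_ext}; once these are in hand the theorem follows with no further calculation.
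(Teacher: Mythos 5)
Your proposal is correct and matches the paper's own argument: the paper likewise observes that since the compositional semantics of every meaningful $\mathtt{RINT}_k$ fragment lands in the inverse category $\PInj$ (with the elementary operations verified to be partial injections and the tests to be decisions), and every morphism of an inverse category is a partial isomorphism, reversibility follows directly. Your explicit structural induction merely spells out the well-definedness that the paper delegates to its general semantics from Section~\ref{sec:modelling_reversible_flowcharts}, so the two arguments are essentially identical.
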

Further, since we can straightforwardly show that $\oxf{\mathtt{x_i}
\mathrel{+\!\!=} \mathtt{x_j}}^\dag = \oxf{\mathtt{x_i} \mathrel{-\!\!=}
\mathtt{x_j}}$ and $\oxf{\mathtt{x_i} \mathrel{+\!\!=} \overline{n}}^\dag =
\oxf{\mathtt{x_i} \mathrel{+\!\!=} \overline{-n}~\!}$, we can use the technique
from Sec.~\ref{sub:extracting_a_program_inverter} to obtain a sound and
complete program inverter.
\begin{thm}[Program inversion]
  $\mathtt{RINT}_k$ has a (sound and complete) program inverter. In
  particular, for every $\mathtt{RINT}_k$ program $p$ there exists a program
  $\I[p]$ such that $\oxf{\I[p]} = \oxf{p}^\dag$.
\end{thm}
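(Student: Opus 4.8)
The plan is to invoke the general program-inversion theorem established in Section~\ref{sub:extracting_a_program_inverter}, which reduces the problem to checking that the elementary operations of $\mathtt{RINT}_k$ are syntactically closed under inversion. First I would define the program inverter $\I$ by structural recursion on commands, fixing it on the control-flow constructs exactly as dictated by the inversion rules already verified,
\begin{align*}
  \I[\sseq{c_1}{c_2}] &= \sseq{\I[c_2]}{\I[c_1]}, & \I[\sif{p}{c_1}{c_2}{q}] &= \sif{q}{\I[c_1]}{\I[c_2]}{p}, \\
  \I[\sfrom{p}{c}{q}] &= \sfrom{q}{\I[c]}{p},
\end{align*}
and on the atomic steps by
\begin{align*}
  \I[\mathtt{x_i} \mathrel{+\!\!=} \mathtt{x_j}] &= \mathtt{x_i} \mathrel{-\!\!=} \mathtt{x_j}, & \I[\mathtt{x_i} \mathrel{-\!\!=} \mathtt{x_j}] &= \mathtt{x_i} \mathrel{+\!\!=} \mathtt{x_j}, & \I[\mathtt{x_i} \mathrel{+\!\!=} \overline{n}] &= \mathtt{x_i} \mathrel{+\!\!=} \overline{-n}.
\end{align*}
Since these clauses cover every production of the command grammar and recurse only on strict subterms, $\I$ is a total function on program texts, which gives completeness of the inverter.

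For soundness I would proceed by structural induction on $p$ with the hypothesis $\oxf{\I[p]} = \oxf{p}^\dag$. In the base cases, the atomic operations denote total bijections on $\Sigma = \mathbb{Z}^k$, so their partial inverse in \PInj{} coincides with the ordinary set-theoretic inverse; a direct computation then yields $\oxf{\mathtt{x_i} \mathrel{+\!\!=} \mathtt{x_j}}^\dag = \oxf{\mathtt{x_i} \mathrel{-\!\!=} \mathtt{x_j}}$ and $\oxf{\mathtt{x_i} \mathrel{+\!\!=} \overline{n}}^\dag = \oxf{\mathtt{x_i} \mathrel{+\!\!=} \overline{-n}}$, matching the defining clauses of $\I$ on atomic steps. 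For the inductive cases, the identities $\oxf{\sseq{\I[c_2]}{\I[c_1]}} = \oxf{\sseq{c_1}{c_2}}^\dag$, $\oxf{\sif{q}{\I[c_1]}{\I[c_2]}{p}} = \oxf{\sif{p}{c_1}{c_2}{q}}^\dag$, and $\oxf{\sfrom{q}{\I[c]}{p}} = \oxf{\sfrom{p}{c}{q}}^\dag$ are precisely the calculations carried out in Section~\ref{sub:extracting_a_program_inverter}, with the induction hypothesis supplying $\oxf{\I[c_i]} = \oxf{c_i}^\dag$ at the recursive positions.

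The theorem then follows: $\I$ is total (completeness) and produces semantic inverses (soundness), and uniqueness of partial inverses in the inverse category \PInj{} guarantees that $\I[p]$ is the essentially unique inverse program up to semantic equality. The only point requiring genuine care is the base case, and there the subtlety is not the dagger computation itself but the appeal to the syntactic side condition $i \neq j$: this constraint is exactly what ensures that $\oxf{\mathtt{x_i} \mathrel{+\!\!=} \mathtt{x_j}}$ leaves $a_j$ untouched and is therefore a bijection rather than a merely partial update, so that its partial inverse is again a $\mathtt{RINT}_k$ atomic operation. Everything else reduces to bookkeeping over the control-flow inversion rules already established.
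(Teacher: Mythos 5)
Your proposal is correct and follows essentially the same route as the paper: verify that the atomic operations of $\mathtt{RINT}_k$ are closed under inversion (via $\oxf{\mathtt{x_i} \mathrel{+\!\!=} \mathtt{x_j}}^\dag = \oxf{\mathtt{x_i} \mathrel{-\!\!=} \mathtt{x_j}}$ and $\oxf{\mathtt{x_i} \mathrel{+\!\!=} \overline{n}}^\dag = \oxf{\mathtt{x_i} \mathrel{+\!\!=} \overline{-n}}$), then invoke the structural-induction argument of Section~\ref{sub:extracting_a_program_inverter} for sequencing, conditionals, and loops. The paper leaves the details you spell out (totality of $\I$, the role of the side condition $i \neq j$, uniqueness of partial inverses) implicit, but the underlying argument is the same.
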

% subsection an_example_reversible_flowchart_language (end)

\subsection{Decisions as a programming technique} % (fold)
\label{sub:decisions_as_a_programming_technique}
Decisions offer a solution to the awkwardness in representing predicates
reversibly, as the usual representations of predicates on $X$ as maps \<X \to
Bool\> fail to be reversible in all but the most trivial cases. On the
programming side, the reversible \emph{duplication/equality
operator} $\lfloor \cdot \rfloor$ (see \cite{GlKa03, Yokoyama2012}), defined on lists as 
\begin{align*}
  \lfloor \langle x \rangle \rfloor & = \langle x, x \rangle \\
  \lfloor \langle x, y \rangle \rfloor & = \left\{ \begin{array}{ll}
    \langle x \rangle & \text{if $x=y$} \\
    \langle x, y \rangle & \text{if $x \neq y$}
  \end{array} \right.\enspace,
\end{align*}
can be seen as a distant ancestor to this idea of predicates as decisions, in
that it provides an ad-hoc solution to the problem of checking whether two
values are equal in a reversible manner.

Decisions offer a more systematic approach: They suggest that one ought to
define Boolean values in reversible functional programming not in the usual
way, but rather by means of the polymorphic datatype
\begin{haskell*}
\hskwd{data} PBool \alpha =  True \alpha \mid  False \alpha
\end{haskell*}
storing not only the \emph{result}, but also \emph{what} was tested to begin
with. With this definition, negation on these polymorphic Booleans (\<pnot\>)
may be defined straightforwardly as shown in Figure~\ref{fig:even}. In turn,
this allows for more complex predicates to be expressed in a largely familiar
way. For example, the decision for testing whether a natural number is even
(\<peven\>) is also shown in Figure~\ref{fig:even}, with \<fmap\> given in the
straightforward way on polymorphic Booleans, \ie
\begin{haskell*}
  fmap &::& (\alpha \leftrightarrow \beta) \to (PBool \alpha \leftrightarrow 
  PBool \beta) \\
  fmap f (True x) &=& True (f x) \\
  fmap f (False x) &=& False (f x) \enspace.
\end{haskell*}
For comparison with the \<peven\> function shown in Figure~\ref{fig:even}, the
corresponding irreversible predicate is typically defined as follows, with
\<not\> the usual negation of Booleans
\begin{haskell*}
  even &::& Nat \to Bool \\
  even 0 &=& True \\
  even (n+1) &=& not (even n) \enspace.
\end{haskell*}
As such, the reversible implementation as a decision is nearly identical, the
only difference being the use of \<fmap\> in the definition of
\<peven\> to recover the input value once the branch has been decided.
\begin{figure}
%\footnotesize
\setlength{\hsmargin}{0pt}
\setlength{\columnsep}{0ex}
\begin{multicols}{2}
\begin{haskell*}
pnot &::& PBool \alpha \leftrightarrow PBool \alpha \\
pnot (True x) &=& False x \\
pnot (False x) &=& True x
\end{haskell*}
\begin{haskell*}
peven &::& Nat \leftrightarrow PBool Nat \\
peven 0 &=& True 0 \\
peven (n+1) &=& fmap (+1) (pnot (peven n))
\end{haskell*}
\end{multicols}
\vspace{-2.5em}
\caption{The definition of the $\mathit{even}$-predicate as a decision on
natural numbers.}
\label{fig:even}
\end{figure}
% subsection decisions_as_a_programming_technique (end)

% section applications (end)
%%% Local Variables:
%%% mode: latex
%%% TeX-master: "Kaarsgaard-Glueck-lmcs.tex"
%%% End:

% !TEX root = Kaarsgaard-Glueck-lmcs.tex

\section{Concluding remarks} % (fold)
\label{sec:conclusion}
In the present paper, we have built on the work on extensive restriction
categories~\cite{Cockett2002,Cockett2003,Cockett2007} to derive a related concept of extensivity for inverse categories.
We have used this concept to give a novel reversible representation of
predicates and their corresponding assertions in (specifically extensive) join
inverse categories with a disjointness tensor, and in turn used these to model
the fundamental control structures of reversible loops and conditionals in
structured reversible flowchart languages. We have shown that these categorical
semantics are computationally sound and adequate with respect to the
operational semantics, and given a sufficient condition for equational full
abstraction.

Further, this approach also allowed us to derive a program inversion theorem for
structured reversible flowchart languages, and we illustrated our approach by
developing a family of structured reversible flowchart languages and using our
framework to give it denotational semantics, with theorems regarding
reversibility and program inversion for free.

The idea to represent predicates by decisions was inspired by the
\emph{instruments} associated with predicates in Effectus
theory~\cite{Jacobs2015}. Given that \emph{side
effect free} instruments $\iota$ in Effectus theory satisfy a similar rule as decisions in extensive restriction categories, namely $\codiag \comp \iota =
\id$,
%(\ie, the first decision axiom as
%instruments are total), 
and that Boolean effecti are extensive, it
could be interesting to explore the connections between extensive restriction
categories and Boolean effecti, especially as regards their internal logic.

Finally, on the programming language side, it could be interesting to further %also be interesting to
%consider if (and if so, how) decisions could be used as a programming
%construct, or
explore how decisions can be used %as a programming technique
in reversible %functional 
programming, \eg, to do the heavy lifting involved in pattern matching and branch joining.
As our focus has been
on the representation of predicates, % and assertions
our approach may be easily adapted to other reversible flowchart structures,
\eg, Janus-style loops~\cite{YoGl07a}.
% section concluding_remarks (end)
%%% Local Variables:
%%% mode: latex
%%% TeX-master: "Kaarsgaard-Glueck-lmcs.tex"
%%% End:

\bibliographystyle{entcs}
\bibliography{mfps,ref-used}

\end{document}